\tikzstyle{mystate}=[state,inner sep=3pt,minimum size=20pt,line width=0.2mm]
\tikzstyle{fstate}=[state,accepting,inner sep=2pt,minimum size=3pt]
\tikzstyle{istate}=[state,initial,inner sep=2pt,minimum size=3pt]
\tikzstyle{mysquare}=[inner sep=3pt,minimum size=15pt,line width=0.2mm]
\tikzstyle{fmysquare}=[inner sep=3pt,minimum size=15pt,line width=0.5mm,accepting]
\newclass{\logspace}{logspace}
\newcommand{\N}{\mathbb{N}}
\newcommand{\Z}{\mathbb{Z}}
\newcommand{\ISinside}[1]{\todo[linecolor=green,backgroundcolor=green!35,bordercolor=green]{\textbf{IS:} #1}}
\newcommand{\at}{\alpha}
\newcommand{\atb}{\beta}
\newcommand{\NFA}{\mathcal{A}}
\newcommand{\NFAscc}{\mathcal{R}}
\newcommand{\NFAidl}{\mathcal{R}^{\idl}}
\newcommand{\atoms}[1]{\mathtt{atoms(}#1\mathtt{)}}
\newcommand{\deltaidl}{\delta^\idl}
\newcommand{\Idl}{\mathtt{Idl}}
\newcommand{\idl}{\mathtt{idl}}
\newcommand{\red}{\mathtt{red}}
\newcommand{\NFAred}{\mathcal{R}^{\red}}
\newcommand{\Qred}{Q^\red}
\newcommand{\deltared}{\delta^\red}
\newcommand{\Fred}{F^\red}
\newcommand{\qinitred}{{q_0}^\red}
\newcommand{\qfinalred}{{q_f}^\red}
\newcommand{\maxlength}{m}
\newcommand{\mymatrix}{\mathcal{M}}
\newcommand{\emp}{\varepsilon}
\newcommand*{\encircled}[1]{\relax\ifmmode\mathpalette\@encircled@math{#1}\else\@encircled{#1}\fi}
\newcommand*{\@encircled@math}[2]{\@encircled{$\m@th#1#2$}}
\newcommand*{\@encircled}[1]{%
  \tikz[baseline,anchor=base]{\node[draw,circle,outer sep=0pt,inner sep=.2ex] {#1};}}
\newcommand{\circqm}{\encircled{?}}
\newcommand{\circstr}{\encircled{*}}
\newcommand{\atomi}{a^{\circqm}}
\newcommand{\atomt}{\Delta^{\circstr}}
\newcommand{\Atomi}[1]{{#1}^{\circqm}}
\newcommand{\Atomt}[1]{{#1}^{\circstr}}
\newcommand{\langatomi}{\{a, \varepsilon\}}
\newcommand{\langatomt}{\Delta^*}
\newcommand{\atom}{\alpha}
\newcommand{\wordatom}[1]{w_{#1}}
\newcommand{\worddelta}[1]{\mathfrak{w}_{#1}}
\newcommand{\pro}{\ensuremath{P}\xspace}
\newcommand{\alp}{\ensuremath{\texttt{alph}}\xspace}
\newcommand{\lA}{\cdot A}
\newcommand{\rA}{ A \cdot}
\newcommand{\rightset}{\texttt{r}}
\newcommand{\leftset}{\texttt{l}}
\newcommand{\Qidl}{Q^\idl}
\newcommand{\qinitidl}{{q_0}^\idl}
\newcommand{\Fidl}{F^\idl}
\newcommand{\cpy}{\mathtt{c}}
\newcommand{\trQ}{Q}
\newcommand{\trAlph}{\ensuremath{\Gamma}\xspace}
\newcommand{\trinit}{\ensuremath{t^0}\xspace}
\newcommand{\trF}{\ensuremath{F}\xspace}
\newcommand{\trE}{E}
\newcommand{\tr}{\mathcal{T}}
\DeclareMathOperator{\val}{val}
\DeclareMathOperator{\inp}{inp}
\DeclareMathOperator{\out}{out}
\newcommand{\trl}{\mathcal{T}_{\mathcal{L}}}
\newcommand{\trr}{\mathcal{T}_{\mathcal{R}}}
\newcommand{\LL}{\mathcal{L}}
\newcommand{\SSI}{\mathbb{S}}
\newcommand{\ltrQ}{\trQ_\LL}
\newcommand{\ltrAlph}{\ensuremath{\Gamma_\LL}\xspace}
\newcommand{\ltrinit}{\ensuremath{\trinit_\LL}\xspace}
\newcommand{\ltrF}{\ensuremath{\trF_\LL}\xspace}
\newcommand{\ltrE}{E_\LL}
\newcommand{\ltr}{\mathcal{T}_\LL}
\newcommand{\twobar}[1]{\bar{\bar{#1}}}
\newcommand{\Gidl}{\ensuremath{G^\idl}\xspace}
\newcommand{\Gred}{\ensuremath{G^{\red}}\xspace}
\newcommand{\Nred}{\ensuremath{N^{\red}}\xspace}
\newcommand{\Sred}{\ensuremath{S^{\red}}\xspace}
\newcommand{\prored}{\ensuremath{\pro^{\red}}\xspace}
\newcommand{\sw}{\preccurlyeq}
\newcommand{\dcl}[1]{#1\mathord{\downarrow}}
\newcommand{\sharpP}{\ComplexityFont{\#P}}
\newcommand{\spanL}{\ComplexityFont{spanL}}
\newcommand{\compP}{\ComplexityFont{P}}
\title{Directed Regular and Context-Free Languages}%TODO Please add
\authorrunning{M. Ganardi, I. Sa\u{g}lam, G. Zetzsche} %TODO mandatory. First: Use abbreviated first/middle names. Second (only in severe cases): Use first author plus 'et al.'
\author{Moses Ganardi}{Max Planck Institute for Software Systems (MPI-SWS), Germany}{ganardi@mpi-sws.org}{https://orcid.org/0000-0002-0775-7781}{}
\author{Irmak Sa\u{g}lam}{Max Planck Institute for Software Systems (MPI-SWS), Germany}{isaglam@mpi-sws.org}{https://orcid.org/0000-0002-4757-1631}{}
\author{Georg Zetzsche}{Max Planck Institute for Software Systems (MPI-SWS), Germany}{georg@mpi-sws.org}{https://orcid.org/0000-0002-6421-4388}{}
\keywords{Subword, ideal, language, regular, context-free, equivalence, downward closure, compression} %TODO mandatory; please add comma-separated list of keywords
\begin{document}
\thispagestyle{empty} % Remove page number from title page

\maketitle

\begin{abstract}
We study the problem of deciding whether a given language is directed.  A
language $L$ is \emph{directed} if every pair of words in $L$ have a common (scattered)
superword in $L$.  Deciding directedness is a fundamental problem in connection
with ideal decompositions of downward closed sets.
Another motivation is that deciding whether two \emph{directed} context-free
languages have the same downward closures can be decided in polynomial time,
whereas for general context-free languages, this problem is known to be $\coNEXP$-complete.

We show that the directedness problem for regular languages, given as NFAs, belongs to
$\AC^1$, and thus polynomial time.  Moreover, it is $\NL$-complete for fixed
alphabet sizes. Furthermore, we show that for context-free languages,
the directedness problem is $\PSPACE$-complete.
\end{abstract}

\newpage
% start counting page numbers starting from this page
\clearpage 
\pagenumbering{arabic} 
%---------- Main matter ----------
\section{Introduction}
We study the problem of deciding whether a given language is directed.
A language $L$ is called \emph{(upward) directed} if for every $u,v\in L$, there exists a $w\in L$ with $u\sw w$ and $v\sw w$. Here, $\sw$ denotes the (non-contiguous) \emph{subword} relation: We have $u\sw v$ if there are decompositions $u=u_1\cdots u_n$ and $v=v_0u_1v_1\cdots u_nv_n$ for some $u_1,\ldots,u_n\in\Sigma^*$ and $v_0,v_1,\ldots,v_n\in\Sigma^*$.
\vspace{-3mm}
\subparagraph{Downward closures and ideals}
The \emph{downward closure} of a language $L\subseteq\Sigma^*$ is the set
$\dcl{L}=\{u\in\Sigma^* \mid \exists v\in L\colon u\sw v\}$. Over the last ca.\ 15 years, downward closures have been used in several approaches to verifying concurrent systems. This has two reasons: First, $\dcl{L}$ is a regular language for every set $L\subseteq\Sigma^*$~\cite{Haines69} and an NFA can often be computed effectively~\cite{DBLP:conf/icalp/Zetzsche15,icalpZetzsche16,DBLP:conf/lics/AtigCHKSZ16,DBLP:conf/popl/HagueKO16,DBLP:conf/lics/ClementePSW16,DBLP:conf/icalp/HabermehlMW10,Courcelle91,vanLeeuwen1978}. Second, many verification tasks are \emph{downward closure invariant} w.r.t.\ subsystems: This means, a (potentially infinite-state) subsystem (e.g.\ a recursive program represented by a context-free language) can be replaced with another with the same downward closure, without affecting the verified property. This has been applied to parameterized systems with non-atomic reads and writes~\cite{DBLP:conf/concur/TorreMW15}, concurrent programs with dynamic thread creation~\cite{DBLP:journals/corr/abs-1111-1011,DBLP:conf/icalp/BaumannMTZ20,DBLP:conf/icalp/0001GMTZ23}, asynchronous programs~\cite{DBLP:journals/lmcs/MajumdarTZ22,DBLP:conf/icalp/0001GMTZ23a}, and thread pools~\cite{DBLP:journals/pacmpl/BaumannMTZ22}.

In addition to finite automata, there is a second representation of downward
closed languages: Every non-empty downward closed set can be written as a
finite union of ideals. An \emph{ideal} (in the terminology of well-quasi
orderings) is a non-empty downward closed set that is directed. Moreover,
ideals have a simple representation themselves: They are precisely the products
of languages of the forms $\{a,\varepsilon\}$ and $\Delta^*$, where $a$ is a
letter and $\Delta$ is an alphabet. 
%The decomposition of a downward closed set
%into ideals is unique if we require the ideals to be incomparable. 
Clearly, a
language $L$ is directed if and only if $\dcl{L}$ is itself an ideal.

Ideal decompositions of downward closed sets have recently been the center of
significant attention: They have been instrumental in computing downward
closures~\cite{DBLP:conf/icalp/Zetzsche15,DBLP:journals/corr/abs-1904-10703,DBLP:conf/icalp/BarozziniCCP20}
and deciding separability by piecewise testable
languages~\cite{DBLP:conf/icalp/Goubault-Larrecq16,DBLP:conf/lics/Zetzsche18}.
Over other orderings, ideals play a crucial role in forward analysis of
well-structured transition systems
(WSTS)~\cite{DBLP:conf/stacs/FinkelG09,DBLP:conf/icalp/FinkelG09,DBLP:conf/fsttcs/BlondinFG17},
infinitely branching WSTS~\cite{DBLP:journals/iandc/BlondinFM18}, well-behaved
transition systems~\cite{DBLP:journals/lmcs/BlondinFM17} for clarifying
reachability problems in vector addition
systems~\cite{DBLP:journals/iandc/LazicS21,DBLP:conf/lics/LerouxS15,DBLP:conf/stacs/LerouxS16},
and for deciding regular
separability~\cite{DBLP:conf/concur/CzerwinskiLMMKS18}.

Given the importance of ideals, it is a fundamental problem to decide whether a
given language is directed, in other words, whether the ideal decomposition of
its downward closure consists of a single ideal. It is a basic task for
computing ideal decompositions, but also an algorithmic lens on the structure
of ideals.

\vspace{-3mm}
\subparagraph{Efficient comparison} Aside from being a fundamental property,
checking directedness is also useful for deciding equivalence.  It is
well-known that equivalence is $\PSPACE$-complete for NFAs and undecidable
for context-free languages.  However, in some situations, it suffices to decide
\emph{downward closure equivalence}: Due to the aforementioned downward closure
invariance in concurrent programs, if $L_1,L_2\subseteq\Sigma^*$ describe the
behaviors of sequential programs inside of a concurrent program, and we have
$\dcl{L_1}=\dcl{L_2}$, then $L_1$ can be replaced with $L_2$ without affecting
safety, boundedness, and termination properties in
concurrent~\cite{DBLP:conf/icalp/0001GMTZ23} and asynchronous
programs~\cite{DBLP:journals/lmcs/MajumdarTZ22}. Downward closure equivalence
is known to be $\coNP$-complete for NFAs~\cite[Thm.~12\&13]{DBLP:conf/lata/BachmeierLS15} and $\coNEXP$-complete for
context-free languages~\cite{icalpZetzsche16}.  This is better than $\PSPACE$
and undecidable, but our results imply that if $L_1$ and $L_2$ are directed, then deciding
$\dcl{L_1}=\dcl{L_2}$ is \emph{in polynomial time}, both for NFAs and
for context-free languages!  
%In particular, for directed input languages, deciding $\dcl{L_1}=\dcl{L_2}$ is in polynomial time ($\AC^1)$ for NFAs and $\PSPACE$-complete for context-free languages.
Thus, directedness drastically reduces the complexity of downward closure equivalence.

\vspace{-3mm}
\subparagraph{Constraint satisfaction problems}
Directedness has recently also been studied in the context of constraint satisfaction problems (CSPs) for infinite structures. 
%Another motivation for studying directedness comes from the area of constraint satisfaction.
If we view finite words in the usual way as finite relational structures (as in
first-order logic), then a set of words is directed if and only if it has the
\emph{joint embedding property (JEP)}. More generally, a class $\mathcal{C}$ of 
finite structures has the JEP if for any two structures in $\mathcal{C}$, there
is a third in which they both embed.  The JEP is important for CSPs because if
$\mathcal{C}$ is definable by a universal first-order formula and has the JEP,
then it is the \emph{age} of some (potentially infinite) structure, which then
has a constraint satisfaction problem in
$\NP$~\cite[p.~1]{DBLP:journals/dmtcs/BodirskyRS21}.

Motivated by this, it was recently shown that the JEP is undecidable for
universal formulas by Braunfeld~\cite{DBLP:journals/dmtcs/Braunfeld19} (and
even for universal Horn formulas by Bodirsky, Rydval, and
Schrottenloher~\cite{DBLP:journals/dmtcs/BodirskyRS21}). In the special case of
finite words, the JEP (and thus directedness) was shown to be decidable in
polynomial time by Atminas and Lozin~\cite{DBLP:conf/dlt/AtminasL22} for
regular languages of the form $\{w\in\Sigma^* \mid w_1,\ldots,w_n\not\sw w\}$
for given $w_1,\ldots,w_n\in\Sigma^*$. However, to our knowledge, for general
regular languages (or even context-free languages), the complexity is not known.

\vspace{-3mm}
\subparagraph{Contribution}
Our first main result is that for NFAs, directedness is decidable in $\AC^1$, a circuit complexity class within polynomial time, defined by Boolean circuits of polynomial size, logarithmic depth, and unbounded fan-in. If we fix the alphabet size, directedness becomes $\NL$-complete. Our second main result is that for context-free languages, directedness is $\PSPACE$-complete, and hardness already holds for input alphabets of size two.

The proof techniques for the main results also yield algorithms for downward
closure equivalence.  Given $L_1$ and $L_2$, we show that deciding
$\dcl{L_1}=\dcl{L_2}$ is in $\AC^1$ if $L_1$ and $L_2$ are directed and given
by NFAs.  As above, we obtain $\NL$-completeness for fixed alphabets.
If $L_1$ and $L_2$ are context-free languages, then deciding
$\dcl{L_1}=\dcl{L_2}$ becomes $\compP$-complete.  
%In particular, this places the complexity of deciding
%$\dcl{L_1}=\dcl{L_2}$ for directed regular languages $L_1$ and $L_2$ in $\NL$
%and $\AC^1$, respectively.  Furthermore, we show that for directed context-free
%languages, downward closure equivalence is in polynomial time.

Finally, we mention that counting the number of ideals in the ideal
decomposition of $\dcl{L}$ is $\sharpP$-complete if $L$ is given as an NFA.
Here, hardness follows from $\sharpP$-hardness of counting words in NFAs of a
given length, and $\sharpP$-membership is a consequence of our methods.

\vspace{-3mm}
\subparagraph{Key ingredients}
The upper bounds as well as the lower bounds in our results rely on new techniques.
With only slight extensions of existing techniques, one would obtain an $\NP$
upper bound for regular languages and an $\NEXP$ upper bound for context-free
languages. This is because given a regular or context-free language, one can
construct an acyclic graph where every path corresponds to an ideal of its
downward closure. If the input is an NFA, this graph is polynomial-sized, and
for CFGs, it is exponential-sized.  One could then guess a path and verify that
the entire language is included in this candidate ideal. 

To obtain our upper bounds, we introduce a \emph{weighting technique}, where
each ideal is assigned a weight in the natural numbers. The weighting function
has the property that if there is an ideal that contains the entire language,
it must be one with maximal weight. Using either (i)~matrix powering  over the
semiring $(\N\cup\{-\infty\},\max,+,-\infty,0)$ for NFAs or (ii)~dynamic
programming for context-free grammars, this allows us to compute in
$\NL$/$\AC^1$/polynomial time a unique candidate ideal (which is compressed in
the context-free case), which is then verified in $\NL$ resp. in $\PSPACE$.

For the $\PSPACE$ lower bound for context-free languages, we first observe that
directedness is equivalent to deciding whether $L\subseteq I$, where $L$ is a
context-free language, and $I$ is an SLP-compressed ideal.  The problem is thus
a slight generalization of the \emph{compressed subword problem}, where we are
given two SLP-compressed words $u$ and $v$ and are asked whether $u\sw v$. This
problem is known to be in $\PSPACE$ and
$\PP$-hard~\cite[Theorem~13]{DBLP:journals/iandc/Lohrey11}, but its exact
complexity is a long-standing open problem~\cite{Lohrey12}. 

To exploit the increased generality of our problem, we proceed as follows.  Our
key insight is that for a given SLP-compressed word $w\in\Sigma^*$, one can
construct an SLP-compressed infinite \emph{complement ideal} $I_w$, meaning
that $I_w\cap\Sigma^{|w|}=\Sigma^{|w|}\setminus\{w\}$.  To this end, we apply a
construction from definability of languages in the subword
ordering~\cite[Lemma~3.1]{DBLP:conf/stacs/BaumannGTZ22}. We use the complement
ideal for $\PSPACE$-hardness follows.  We reduce from the $\PSPACE$-complete
problem of deciding, given two equal-length SLP-compressed words
$v,w\in\{a,b\}^*$, whether their convolution $v\otimes w$ belongs to a fixed
regular language~\cite[p.~269]{Lohrey12}.  We reduce this to the problem of
deciding $w\in L$, where $w$ is SLP-compressed and $L$ is a context-free of
words of length $|w|$.  Then $L\subseteq I_w$ if and only if $w\notin L$, hence
$L\cup I_w$ is directed if and only if $w\notin L$.

%   \subparagraph{Related work}
%   The subword ordering is currently a subject of intensive study.
%   In addition to the work mentioned above, 
%   % logical theories
%   they were studied regarding logical theories~\cite{DBLP:conf/fsttcs/KarandikarS15,DBLP:conf/lics/HalfonSZ17,DBLP:conf/stacs/BaumannGTZ22,DBLP:conf/fossacs/KuskeZ19,DBLP:conf/lics/AiswaryaMS22,DBLP:journals/lmcs/KarandikarS19},
%   separability problems~\cite{DBLP:journals/dmtcs/CzerwinskiMRZZ17,DBLP:conf/lics/Zetzsche18,DBLP:conf/icalp/Goubault-Larrecq16},
%   combinatorial properties~\cite{DBLP:conf/cwords/AdamsonKKMS23,DBLP:conf/dcfs/KarandikarS14,DBLP:conf/cwords/SchnoebelenV23}, and
%   matching algorithms~\cite{DBLP:journals/corr/abs-2308-08374,DBLP:journals/corr/abs-2208-14722,DBLP:conf/isaac/DayKMS22}, and they 
%   appear prominently in recent fine-grained complexity lower bounds~\cite{BringmannK15,BringmannK18}.
%   % Simon's congruence
%   Subwords are also the basis of Simon's congruence~\cite{SakarovitchSimon1997}, for which 
%   algorithmic~\cite{FleischerK18,GawrychowskiKKM21,DBLP:journals/corr/abs-2308-08374} and combinatorial~\cite{BarkerFHMN20,DayFKKMS21,KarandikarKS15,KarandikarS19} aspects have been investigated.

\vspace{-2mm} % this is the vpace that looks the most forced
\section{Main results}
% !TEX root = ../main.tex
\vspace{-2mm}
Our first main result is that for a given NFA, one can decide directedness of
its language in polynomial time, and even in $\AC^1 \subseteq \NC$.
\begin{theorem}\label{thm:main-regularAC1}
Given an NFA, one can decide in $\AC^1$ whether its language is directed.
\end{theorem}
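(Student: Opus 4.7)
The plan is to reduce directedness to (i) isolating a single candidate ideal and (ii) verifying that it contains $L(\NFA)$. The key observation is that $L(\NFA)$ is directed if and only if $\dcl{L(\NFA)}$ is itself an ideal; hence any workable candidate must coincide with the (unique in this case) maximal ideal of $\dcl{L(\NFA)}$.

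First, I would construct from $\NFA$ a polynomial-size directed acyclic graph $\Gidl$, extending folklore constructions of the ideal decomposition of the downward closure of an NFA. Each edge of $\Gidl$ is labeled by an atom $\atomi$ or $\atomt$: the latter arise from SCCs whose internal cycles use exactly the letters of $\Delta$, the former from single inter-SCC transitions. Source-to-sink paths in $\Gidl$ then enumerate ideals contained in $\dcl{L(\NFA)}$, and every maximal ideal in the decomposition is witnessed by at least one such path.

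Next, I assign integer weights to atoms so that strict containment of ideals is strictly reflected in path weight: whenever two paths correspond to ideals $I_1 \subsetneq I_2$, the first path has strictly smaller total weight. This is enforced by making the weight of $\atomt$ grow with $|\Delta|$ and dominate the combined weight of any sequence of $\atomi$-atoms that could collapse into it. Under this monotonicity, any maximum-weight path yields a maximal ideal among the candidates; in particular, when $L(\NFA)$ is directed, the sole maximal ideal $\dcl{L(\NFA)}$ is the unique max-weight candidate. A max-weight source-to-sink path is then extracted by iterated matrix multiplication over the tropical semiring $(\N \cup \{-\infty\}, \max, +, -\infty, 0)$, which lies in $\AC^1$, and a fixed tie-breaking rule ensures that a unique path (hence a unique ideal $I^*$) is reconstructed within the same class.

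Finally, I would verify $L(\NFA) \subseteq I^*$ in $\NL \subseteq \AC^1$. Since $I^*$ is a polynomial-length product of atoms, a polynomial-size deterministic automaton recognizes it via a greedy left-to-right pointer over its atoms; its complement has the same size, and emptiness of its product with $\NFA$ decides containment. For correctness: if $L(\NFA)$ is directed, then the chain $L(\NFA) \subseteq \dcl{L(\NFA)} = I^*$ holds by max-weight uniqueness; conversely, $L(\NFA) \subseteq I^* \subseteq \dcl{L(\NFA)}$ forces $\dcl{L(\NFA)} = I^*$, which is an ideal. I expect the main obstacle to be the design of the weight function: it must strictly reflect ideal containment despite the combinatorial variety of ways in which a short ideal with large $\atomt$-atoms can properly contain a much longer ideal with smaller atoms, and a concrete max-weight path (not merely its weight) must be reconstructible uniformly within $\AC^1$.
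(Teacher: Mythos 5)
Your high-level plan matches the paper's: build an acyclic atom-labeled graph encoding the ideals of $\dcl{L(\NFA)}$, pick the max-weight path via tropical matrix powering, and verify containment in $\NL$. But there is a genuine gap precisely at the point you flag as "the main obstacle": no additive weight function can be strictly (or even weakly) monotone on \emph{arbitrary} ideal representations, and your $\Gidl$ does not produce reduced ones.

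Concretely, since the weight must be additive along paths, any atom receives a fixed positive weight, so a representation with redundant atoms is strictly heavier than an equivalent shorter one while denoting the \emph{same or smaller} ideal. This is realizable in $\Gidl$: take an NFA with two $\varepsilon$-linked chains of SCCs, one of length $k$ all with self-loop alphabet $\{a\}$, and a second of length $2$ with alphabets $\{a\}$ and $\{b\}$. Then $\dcl{L(\NFA)}=\{a\}^*\{b\}^*$ is directed, yet $\Gidl$ contains a path $\Atomt{\{a\}}\cdots\Atomt{\{a\}}$ ($k$ copies, ideal $\{a\}^*$) whose weight $k\cdot(m+1)$ exceeds the weight $2(m+1)$ of the path $\Atomt{\{a\}}\cdot\Atomt{\{b\}}$ for $k>2$. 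The algorithm would then select $I^*=\{a\}^*$, the check $\dcl{L(\NFA)}\subseteq I^*$ would fail, and it would incorrectly declare the language non-directed. The "dominate collapsible $\atomi$-sequences" heuristic does not help: the offending path consists of alphabet atoms, not single atoms.

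What is missing is the paper's intermediate step of transforming $\NFAidl$ into an automaton $\NFAred$ that accepts only \emph{reduced} ideal representations (no neighboring absorptive atom pairs), constructed by composing two transducers $\trl\circ\trr$ that eliminate left- and right-absorptions while preserving the set of ideals. Only on reduced representations of bounded length is the weight $\mu_k(\atomi)=1$, $\mu_k(\atomt)=(k+1)^{|\Delta|}$ strictly monotone (\cref{prop:weightfunc}), and the proof of that proposition explicitly invokes reducedness to bound $\sum_{i\in f^{-1}(j)}\mu_k(\at_i)$ by $\mu_k(\atb_j)$. Without this normalization the max-weight selection is unsound, so your argument as stated does not yield a correct algorithm.
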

Recall that $\AC^1$ is the class of all languages that are accepted by
a family of unbounded fan-in Boolean circuits of polynomial size and logarithmic depth,
see~\cite{DBLP:books/daglib/0097931} for more details.
In particular, the directedness problem for regular languages can be efficiently parallelized.
%For this paper we do not require any background of circuit complexity.
%The only fact that we use is that the maximum weight path problem in directed weighted graphs
%can be solved in $\AC^1$, see~\cite[Section~4]{DBLP:journals/iandc/Cook85} and \cite[p.~161]{DBLP:books/daglib/0097931}.

The same techniques show that fixing the input alphabet leads to $\NL$-completeness:
\begin{theorem}\label{thm:main-regularNL-complete}
For every fixed $k$, given an NFA over $k$ letters, it is $\NL$-complete to
decide whether its language is directed.
\end{theorem}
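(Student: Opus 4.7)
I would reuse the weighting technique behind the proof of \Cref{thm:main-regularAC1}, exploiting the fixed alphabet size. When $|\Sigma|=k$ is constant, the set of all possible atoms --- i.e.\ the languages $\{a,\varepsilon\}$ and $\Delta^*$ with $a\in\Sigma$ and $\Delta\subseteq\Sigma$ --- has constant size $k+2^k$, so every ideal is a word over a fixed-size atom alphabet. I would assign each atom a constant weight so that (i) every candidate covering ideal has weight polynomially bounded in $|\mathcal{A}|$, and (ii) among all ideals covering $L(\mathcal{A})$ there is a unique one of maximum weight. With polynomial weights in unary, finding the maximum weight and a canonical maximum-weight ideal $I$ reduces to a longest-path computation in a polynomially weighted DAG derived from $\mathcal{A}$; this is in $\NL$. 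In the same logspace pass I would verify $L(\mathcal{A})\subseteq I$ by a reachability check in the synchronous product of $\mathcal{A}$ with the (polynomial-size) DFA for $I$. The constancy of the atom alphabet is precisely what replaces the binary-weight matrix powering used for the $\AC^1$ bound.

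\textbf{Plan for $\NL$-hardness (for $k\ge 2$).} I would reduce in log-space from directed $s$-$t$-reachability. Given a digraph $G=(V,E)$ with distinguished vertices $s,t$, construct an NFA $\mathcal{A}$ over $\{a,b\}$ with state set $V\cup\{q_0,q_a,q_b,q_{ab}\}$, initial state $q_0$, final states $\{q_a,q_b,q_{ab}\}$, and transitions
\[
q_0\xrightarrow{a} q_a,\quad q_0\xrightarrow{b} q_b,\quad q_0\xrightarrow{a} s,\quad t\xrightarrow{b} q_{ab},\quad v\xrightarrow{a} v' \text{ for each } (v,v')\in E.
\]
If $s$ does not reach $t$ in $G$, then $L(\mathcal{A})=\{a,b\}$, which is not directed since $a\not\sw b$ and $b\not\sw a$. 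If $s$ reaches $t$ via a walk of some length $\ell\ge 0$, then $a^{\ell+1}b\in L(\mathcal{A})$; in general $L(\mathcal{A})=\{a,b\}\cup\{a^{\ell+1}b : G\text{ has an } s\text{-to-}t\text{ walk of length }\ell\}$ is directed, because for any two elements $a^ib,a^jb\in L(\mathcal{A})$ the longer one $a^{\max(i,j)}b$ is still in $L(\mathcal{A})$ (it coincides with one of them) and dominates both as well as $a$ and $b$. Thus $L(\mathcal{A})$ is directed if and only if $s$ reaches $t$ in $G$, yielding $\NL$-hardness for $k=2$ and, by leaving extra letters unused, for every fixed $k\ge 2$.

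\textbf{Main obstacle.} The hardness reduction is elementary. The delicate point is the upper bound: one must tailor the weight function so that the relevant ideal weights remain polynomially bounded when $k$ is fixed, thereby downgrading the max-plus matrix powering of \Cref{thm:main-regularAC1} to a unary longest-path computation that fits inside $\NL$ instead of $\AC^1$.
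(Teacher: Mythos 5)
Your proposal follows the paper's strategy closely for both directions, and your hardness reduction is a valid alternative to the paper's: the paper reduces from NFA non-emptiness (assume all edges read $a$, add $\{b\}$, so $L\ne\emptyset$ iff the union is not directed), while you reduce from $s$-$t$-reachability with essentially the same gadget. Both work, and the choice between them is a matter of taste. You should note, though, that for $k\le 1$ every regular language is directed (over a unary alphabet $\dcl{L}$ is always an ideal), so hardness really only kicks in at $k\ge 2$; the theorem is tacitly about that range.

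For the upper bound, however, your plan contains a genuine error: you propose to ``assign each atom a \emph{constant} weight.'' That cannot work. The weight function must be strictly monotone on reduced ideal representations of length up to $m$ (the number of states of the reduced ideal automaton), and constant atom weights violate this. Concretely, the reduced representation $\Atomi{a}\Atomi{b}\Atomi{a}\Atomi{b}\cdots$ of length $m$ is a strict subset of the ideal $\Idl(\Atomt{\{a,b\}})$, yet with constant weights its total weight grows linearly in $m$ while $\Atomt{\{a,b\}}$ keeps a fixed weight; for large $m$ the small ideal out-weighs its strict superset, so the algorithm would select the wrong candidate. The paper's weight $\mu_m(\Atomt{\Delta})=(m+1)^{|\Delta|}$ is designed precisely to dominate the sum of up to $m$ atoms of strictly smaller alphabet size, and this dependence on $m$ is unavoidable (the paper even proves exponential weights are needed when $|\Sigma|$ is unbounded). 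The correct observation for the fixed-alphabet case is simply that the \emph{same} function $\mu_m$ already has polynomially bounded values when $|\Sigma|=k$ is constant, since $(m+1)^{|\Delta|}\le(m+1)^k$. That makes all path weights polynomial, hence representable in unary / logspace, and the maximum can be found by iterating $\ell$ downward from $m\cdot(m+1)^k$ and testing existence of a weight-$\ell$ path in $\NL$ (using $\NL=\coNL$ for the negative answers), followed by the $\NL$ inclusion check against the resulting candidate ideal. With that correction your outline coincides with the paper's argument.
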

As mentioned before, slight extensions of known techniques would yield an $\NP$
upper bound for directedness of regular languages: Given an NFA $\NFA$, it is
not difficult construct an acyclic graph whose paths correspond to ideals
for which $L(\NFA)=I_1\cup\cdots\cup I_n$. One can then guess
such a path with ideal $I_i$ and verify $L(\NFA)\subseteq I_i$ in $\NL$.
Clearly, $L(\NFA)$ is directed iff such an $I_i$ exists. The key
challenge is to compute in $\AC^1$ a single ideal $I_i$ for which we
check $L(\NFA)\subseteq I_i$.

Note that \cref{thm:main-regularAC1,thm:main-regularNL-complete}
also apply to one-counter languages. Given a one-counter language $L$,
one can compute in $\logspace$ an NFA $\NFA$ with $L(\NFA)=\dcl{L}$~\cite[Theorem
7]{DBLP:conf/lics/AtigCHKSZ16}\footnote{Theorem 7 in
	\cite{DBLP:conf/lics/AtigCHKSZ16} only states a polynomial time
computation, but it is clear that it can be performed in (deterministic)
logspace.}. Then $L(\NFA)$ is directed iff $L$ is directed, and we can just
decide directedness for $\NFA$.

Our second main result is that for context-free languages (given, e.g.\ by a
grammar), directedness is $\PSPACE$-complete:
\begin{theorem}\label{thm:main-contextfree}
Given a context-free grammar, it is $\PSPACE$-complete to decide whether its
language is directed. Moreover, $\PSPACE$-hardness holds already for binary
input alphabets.
\end{theorem}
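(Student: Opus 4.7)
The plan is to prove the upper and lower bounds separately, reusing the SLP-based machinery sketched for the regular case.

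For the $\PSPACE$ upper bound, the goal is to reduce directedness of $L(G)$ to an inclusion $L(G)\subseteq I$ where $I$ is an SLP-compressed ideal, and then show this test is in $\PSPACE$. Every ideal over $\Sigma$ is canonically a product of atoms of the form $\{a,\varepsilon\}$ or $\Delta^*$. Following the weighting strategy announced in the introduction, I would equip ideals with an integer weight so that, whenever $\dcl{L(G)}$ is directed, it is the unique maximum-weight ideal among those that cover $L(G)$. A bottom-up dynamic program on the non-terminals of $G$, running in the tropical semiring $(\N\cup\{-\infty\},\max,+)$, produces this candidate $I$ in SLP-compressed form; $I$ may be exponentially long (e.g.\ when $L(G)$ is a singleton of exponential length), but its atom sequence is generated by a grammar synthesized from $G$. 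The residual inclusion test $L(G)\subseteq I$ is a mild generalization of the compressed subword problem and lies in $\PSPACE$ by an alternating simulation that traverses a derivation of $G$ while maintaining a pointer into the SLP-decompression of $I$ and performing a compressed-subword check on each terminal emitted.

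For $\PSPACE$-hardness over a binary alphabet, I would reduce from the $\PSPACE$-complete problem of deciding, given two equal-length SLP-compressed words $v,w\in\{a,b\}^*$, whether their convolution $v\otimes w$ belongs to a fixed regular language $R$~\cite[p.~269]{Lohrey12}. The reduction has two gadgets. From the SLP of $w$ I construct an SLP-compressed infinite \emph{complement ideal} $I_w$ satisfying $I_w\cap\Sigma^{|w|}=\Sigma^{|w|}\setminus\{w\}$, by adapting~\cite[Lemma~3.1]{DBLP:conf/stacs/BaumannGTZ22}. Independently, from the SLP of $v$ and the DFA for $R$, I build in polynomial time a context-free grammar for a language $L\subseteq\Sigma^{|w|}$ with $w\in L$ iff $v\otimes w\in R$; the grammar guesses a candidate word of length $|w|$, threads the SLP of $v$ through its derivations, and simulates the DFA for $R$ on the convolution. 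Combining the two, $L\cup I_w$ is directed iff $L\subseteq I_w$ iff $w\notin L$: if $w\in L$, then for any $x\in I_w$ with $|x|>|w|$ a common superword of $w$ and $x$ has length $>|w|$ and therefore must lie in $I_w$, forcing $w\in I_w$ by downward closure, a contradiction.

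The main obstacle lies on the upper-bound side. The candidate ideal is of exponential length, so the weighting function has to be chosen so that (a) its maximizer is unique whenever directedness holds, and (b) the maximizer admits an SLP description assembled directly from the productions of $G$ by the tropical-semiring DP; otherwise one cannot even write $I$ down in polynomial space. The subsequent compressed-inclusion test must also cope with the universal quantifier ``for all $w\in L(G)$'', which is why a polynomial-space alternating simulation is required rather than a direct Plandowski-style compressed-equality check. On the hardness side, the delicate step is the length-preserving encoding of $v\otimes w\in R$ into a CFL while $v$ is accessible only through its SLP; once that is arranged, the complement-ideal gadget packages the entire instance into a single directedness query.
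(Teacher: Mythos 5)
Your lower-bound reduction matches the paper's essentially exactly: reduce from Lohrey's compressed convolution-membership problem to compressed membership $\val(\mathbb{B})\in L(G)$ for a length-preserving CFG, then use the complement-ideal construction (adapted from~\cite{DBLP:conf/stacs/BaumannGTZ22}) and the fact that a finite $L\subseteq\Sigma^{|w|}$ together with the infinite ideal $I_w$ is directed iff $L\subseteq I_w$. Your direct argument for that last equivalence (a common superword of $w$ and any longer word of $I_w$ has length $>|w|$, hence lies in $I_w$, hence $w\in I_w$, contradiction) is correct and slightly more elementary than the paper's, which invokes the uniqueness of ideal decompositions.

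The upper bound is where there are real gaps. You describe a tropical-semiring DP ``on the non-terminals of $G$,'' but $G$ generates strings, not ideal representations, and the weighting function is defined on sequences of atoms. The paper first transforms $G$ into an acyclic grammar $\Gidl$ over the atom alphabet whose language is an ideal decomposition of $\dcl{L(G)}$ (a Courcelle-style construction), and then composes with the two reduction transducers to obtain $\Gred$, whose generated ideal representations are all \emph{reduced}. This last step is not optional: the weight function $\mu_k$ is strictly monotone only on reduced representations (e.g.\ $\Atomt{\{a\}}\Atomt{\{a\}}$ has twice the weight of the equivalent $\Atomt{\{a\}}$), so running the DP without normalizing would pick wrong candidates. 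Your proposal acknowledges that (a)~the maximizer must be unique and (b)~it must admit an SLP built from the DP, but these hold precisely because one has passed to reduced representations first, and you never explain how to arrange that.

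The second gap is in the inclusion test. You propose an alternating simulation that ``traverses a derivation of $G$,'' but derivations of $G$ (and the words they produce) can be unboundedly long, so this traversal is not obviously a polynomial-space computation, and it is unclear how alternation alone bounds it. The paper avoids this by checking $\dcl{L(G)}\subseteq I$ instead: it nondeterministically guesses an ideal representation generated by the \emph{acyclic} grammar $\Gred$ atom-by-atom via its leftmost derivation. Acyclicity guarantees derivation trees of polynomial depth and words of at most exponential length, so one can store a single root-to-leaf path plus a binary pointer into the SLP $\SSI$, all in polynomial space. Without the pass through $\Gidl$/$\Gred$, you have no finiteness to lean on, and the ``for all $w\in L(G)$'' quantifier that you flag as problematic remains unresolved.
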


Our methods also provide more efficient algorithms for downward closure
equivalence in the case of directed input languages. 
%The \emph{downward closure inclusion (DCI)} problem is the following:
%\begin{description}
%	\item[Given] Languages $L_1$ and $L_2$.
%	\item[Question] Does $\dcl{L_1}\subseteq\dcl{L_2}$?
%\end{description}
The \emph{downward closure equivalence (DCE)} problem is to decide, for given languages $L_1$ and $L_2$, whether $\dcl{L_1}=\dcl{L_2}$.
%\begin{description}
%	\item[Given] Languages $L_1$ and $L_2$.
%	\item[Question] Does $\dcl{L_1}=\dcl{L_2}$?
%\end{description}
While DCE is $\coNP$-complete for general regular languages given as
NFAs~\cite[Thm.~12\&13]{DBLP:conf/lata/BachmeierLS15}, we show that for
directed input languages, the complexity drops to $\AC^1$, and $\NL$ for fixed
alphabets.
\begin{theorem}\label{directed-nfa-dce}
	For directed languages given as NFAs, DCE belongs to $\AC^1$, for fixed alphabets even to $\NL$.
\end{theorem}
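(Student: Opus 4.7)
The plan is to reduce the problem to equality of two canonical ideal representations, each extracted from its NFA by the weighting technique underlying Theorem~\ref{thm:main-regularAC1}. A language $L$ is directed iff $\dcl{L}$ is itself an ideal, and ideals admit canonical representations as products of atoms of the forms $\{a,\varepsilon\}$ and $\Delta^*$. Hence, once a unique ideal representative has been computed for each side, DCE collapses to a syntactic comparison of these representations.

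Concretely, given NFAs $\NFA_1,\NFA_2$ with directed languages $L_1,L_2$, I would first apply the weighting machinery from the proof of Theorem~\ref{thm:main-regularAC1} to each $\NFA_i$. That machinery, via matrix powering over the tropical semiring $(\N\cup\{-\infty\},\max,+,-\infty,0)$, produces a unique maximum-weight candidate ideal $I_i$ with the property that $L(\NFA_i)\subseteq I_i$ holds iff $L(\NFA_i)$ is directed, in which case $I_i = \dcl{L(\NFA_i)}$. Matrix powering places this step in $\AC^1$; for fixed alphabets the width of any candidate ideal is bounded by a polynomial with only finitely many atom types per position, so the ideal can instead be reconstructed position-by-position by $\NL$ guessing and verification. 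Since $L_1,L_2$ are promised directed, $I_1$ and $I_2$ are exactly $\dcl{L_1}$ and $\dcl{L_2}$; if one wishes, the promise can be double-checked by verifying $L(\NFA_i)\subseteq I_i$, a reachability-style containment test in $\NL$.

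With $I_1$ and $I_2$ in explicit form, I would normalize both (merging consecutive $\Delta^*$-atoms, absorbing $\{a,\varepsilon\}$-atoms into an adjacent $\Delta^*$ when $a\in\Delta$, and dropping atoms equal to $\{\varepsilon\}$) and compare the two atom sequences letter by letter. Normalization and comparison are in $\logspace$, and so do not exceed the $\AC^1$ / $\NL$ budget provided by the previous step; the total complexity is dominated by the ideal-extraction phase and matches the two bounds claimed in Theorem~\ref{directed-nfa-dce}.

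The main obstacle is ensuring that the weighting construction of Theorem~\ref{thm:main-regularAC1} can actually be read off as an explicit atom sequence in $\AC^1$ (respectively $\NL$), rather than merely as a maximum weight value: one needs uniqueness of the maximum-weight ideal for directed inputs and a way to recover each atom independently, e.g.\ by running one matrix-power evaluation per position/atom choice in parallel. Granted that recovery mechanism from the proof of Theorem~\ref{thm:main-regularAC1}, everything else — directedness-preserving normalization of the two ideal candidates and their pointwise comparison — is routine.
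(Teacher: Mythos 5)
Your proposal is correct and follows essentially the same route as the paper: apply the weighting/matrix-powering machinery (packaged in the paper as \cref{nfa-ideal-candidate}) to extract a candidate ideal $I_i$ for each $\NFA_i$, observe that directedness forces $I_i=\dcl{L_i}$, and then reduce DCE to a syntactic comparison of the two ideal representations, which the paper handles by citing an $\logspace$ equality test on reduced representations. The "obstacle" you flag at the end — recovering an explicit atom sequence rather than just a weight, deterministically — is exactly what the paper's proof of \cref{nfa-ideal-candidate} resolves (greedy tie-breaking by smallest state index, so every accepting $\NL$ run emits the same sequence), and the representation it emits is already reduced (coming from $\NFAred$), so your separate normalization pass is subsumed by the construction.
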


For context-free languages, DCE is known to be
$\coNEXP$-complete~\cite{icalpZetzsche16}. For directed input languages, our
methods yield a drastic drop in complexity down to polynomial time: 
\begin{theorem}\label{directed-cfl-dce}
	For directed context-free languages, DCE is $\compP$-complete.
\end{theorem}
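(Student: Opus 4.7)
The plan is to show both a polynomial-time upper bound and $\compP$-hardness, mirroring the two-phase structure used in Theorem~\ref{thm:main-contextfree}. For the upper bound, I would reuse the dynamic programming over the grammar that lies at the heart of the context-free case of the main directedness result: given a CFG $G_i$ for a directed language $L_i$, the weighting technique produces in polynomial time an SLP-compressed sequence of atoms $\alpha_1\cdots\alpha_n$, each $\alpha_j$ of the form $\Atomi{a}$ or $\Atomt{\Delta}$, whose atom product is the unique ideal $\dcl{L_i}$. Deciding $\dcl{L_1}=\dcl{L_2}$ then reduces to comparing these two SLP-compressed atom sequences, provided both have been brought into a canonical form by merging adjacent $\Delta^*$-atoms and absorbing $\Atomi{a}$ into an adjacent $\Atomt{\Delta}$ whenever $a\in\Delta$. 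Under such a canonical form, equality of ideals coincides with equality of the underlying words over the atom alphabet, and this is the standard SLP word equality problem, which is solvable in polynomial time by Plandowski's theorem.

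For $\compP$-hardness I would reduce from the $\compP$-complete CFL membership problem: given a CFG $G$ over $\Sigma$ and a word $w\in\Sigma^*$, decide whether $w\in L(G)$. Fix a fresh symbol $c\notin\Sigma$ and set $L_1=\{wc\}$ and $L_2=(L(G)\cap\{w\})\cdot\{c\}$. Both languages are directed: $L_1$ is a singleton, and $L_2$ equals $\{wc\}$ when $w\in L(G)$ and $\emptyset$ otherwise. A CFG for $L_2$ of polynomial size can be produced in logspace by intersecting $G$ with a DFA for $\{w\}$ via the standard triple-product construction and concatenating the symbol $c$. By construction, $\dcl{L_1}=\dcl{L_2}$ holds iff $L_2\neq\emptyset$ iff $w\in L(G)$, which yields the desired reduction.

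The main obstacle in this plan is the canonicalization of the ideal inside the DP in the upper bound. A naive DP would splice atom sequences together at every production, potentially producing non-canonical seams (e.g.\ two adjacent $\Delta^*$-blocks, or $\Atomi{a}\Atomt{\Delta}$ with $a\in\Delta$) that then have to be collapsed inside an already-compressed representation, which is itself non-trivial. I expect the clean way to deal with this is to equip the DP with a local normalization rule that fires at every concatenation boundary, and to verify by induction on the SLP depth that this keeps the representation polynomial in size. Once this canonicity is in place, the reduction to Plandowski's equality test and the correctness of the $\compP$-hardness reduction are routine.
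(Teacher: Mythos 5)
Your proposal is correct and structurally mirrors the paper's proof. The obstacle you flag in the upper bound, however, is already resolved by the machinery you propose to reuse: \cref{cfl-ideal-candidate} extracts its max-weight SLP $\SSI$ from the grammar $\Gred$ of \cref{lemma:reducedidealG}, which is built (via the transducers $\trl$ and $\trr$) so that \emph{every} word it generates is a reduced ideal representation, and hence $\val(\SSI)$ comes out reduced automatically. No post-hoc canonicalization is needed; one then uses the fact that two reduced ideal representations denote the same ideal iff they are identical as words over $\atoms{\Sigma}$ (cited from Halfon's thesis) and applies Plandowski. The in-DP normalization you sketch as a fallback would moreover not be purely local --- at a concatenation seam a single alphabet atom can absorb an arbitrarily long run of atoms from the other side --- and that cascade is precisely what the $\Gred$ transducers are designed to prevent. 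For $\compP$-hardness, your reduction from CFG membership via a Bar-Hillel product is correct but heavier than the paper's: they reduce directly from CFG emptiness by erasing every terminal (so $L(G)\in\{\emptyset,\{\varepsilon\}\}$, both directed) and comparing against a fixed grammar for $\{\varepsilon\}$; your fresh end-marker $c$ is also superfluous, since $L_1=\{w\}$ and $L_2=L(G)\cap\{w\}$ already work.
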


Since our directedness algorithms decide whether the unique decomposition of
$\dcl{L}$ into maximal ideals consists of a single ideal, it is natural to ask
about the complexity of counting all ideals of this decomposition. It follows
easily using methods developed here that this problem is in $\sharpP$.
Moreover, the well-known $\sharpP$-hardness of \#NFA (i.e.~counting words of a given length
in an NFA)~\cite{DBLP:journals/tcs/AlvarezJ93} provides a $\sharpP$ lower bound.
\begin{restatable}{theorem}{restatablecountingideals}\label{thm:counting-ideals}
	Given an NFA $\NFA$, it is $\sharpP$-complete to count the number of
	ideals in the decomposition of $\dcl{L(\NFA)}$ into maximal ideals.
\end{restatable}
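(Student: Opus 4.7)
Plan: We address both directions of the $\sharpP$-completeness claim. For the $\sharpP$ lower bound, we reduce from \#NFA, the problem of counting distinct length-$n$ words accepted by an NFA, which is $\sharpP$-complete~\cite{DBLP:journals/tcs/AlvarezJ93}. Given an NFA $\NFA'$ over $\Sigma$ and $n$ in unary, the plan is to build in logspace an NFA $\NFA$ over the padded alphabet $\Sigma \times \{1,\ldots,n\}$ that accepts $(a_1,1)(a_2,2)\cdots(a_n,n)$ iff $\NFA'$ accepts $a_1\cdots a_n$; this is a routine layered product construction. Every word in $L(\NFA)$ then has length exactly $n$ with its $i$-th letter in $\Sigma\times\{i\}$, so distinct accepted words are pairwise incomparable under $\sw$. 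Since $\dcl{L(\NFA)}$ is finite, every ideal contained in it is of the form $\dcl{\{u\}}$ for a unique maximum element $u$, and such a principal ideal is maximal in the decomposition exactly when $u$ is $\sw$-maximal in $L(\NFA)$. By incomparability, every word in $L(\NFA)$ is maximal, so the number of maximal ideals in the decomposition equals the number of distinct length-$n$ words accepted by $\NFA'$.

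For the $\sharpP$ upper bound, the plan is to design an NPTM that has exactly one accepting branch per maximal ideal. As the introduction already observes, from $\NFA$ one can build in polynomial time an acyclic candidate graph whose paths enumerate ideals covering $\dcl{L(\NFA)}$, and each path yields a polynomial-size canonical description of an ideal. The machine nondeterministically guesses a path, extracts the corresponding ideal $I$, and then verifies (i) that $I \subseteq \dcl{L(\NFA)}$ and (ii) that no strict extension of $I$ (as an ideal) is contained in $\dcl{L(\NFA)}$. Containment of an ideal in $\dcl{L(\NFA)}$ is a standard NFA pattern-matching check in polynomial time. For step (ii), it suffices to enumerate the polynomially many \emph{immediate} extensions of $I$, obtained by a single local change to its product form (e.g.\ replacing a factor $\{a,\varepsilon\}$ by $\Delta^*$, adding one letter to the alphabet of an existing $\Delta^*$ factor, or inserting one extra atomic factor), and test each for containment, since any longer enlargement chain passes through some immediate extension. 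To avoid double counting we restrict acceptance to the lexicographically smallest path of the candidate graph that yields the ideal, making the correspondence between accepting branches and maximal ideals bijective.

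The main obstacle is pinning down a normal form for ideal product representations together with a notion of ``immediate extension'' that provably (a) has polynomial cardinality, (b) dominates every strict enlargement chain, and (c) admits a clean canonical representative per ideal class so that different accepting branches correspond to different ideals. These details are mild refinements of the constructions already developed in the paper for the directedness algorithm, so once the normal form is fixed, both the candidate-graph construction and the containment subroutines can be lifted essentially verbatim.
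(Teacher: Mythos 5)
Your lower bound is correct and follows the same reduction from \#NFA that the paper uses, though the padding by $\Sigma\times\{1,\ldots,n\}$ is superfluous: once you intersect with $\Sigma^n$ (which the paper does directly), any two distinct accepted words already have the same length and are therefore automatically $\sw$-incomparable, so the maximal ideals of $\dcl{L(\NFA')}$ are exactly the principal ideals $\dcl{\{u\}}$, $u\in L(\NFA')$.

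The upper bound is where the gap lies, and it is not a ``mild refinement.'' You propose to certify maximality of a guessed ideal $I$ by enumerating a polynomial set of ``immediate extensions'' and claiming that any strict enlargement chain must pass through one of them. That covering property, for a well-chosen set of local moves on reduced representations, is precisely the nontrivial technical content that needs proof; you flag it as the main obstacle and then assume it away. The paper avoids this entirely: it works with the NFA $\NFAred$ of \emph{reduced} ideal representations and builds a polynomial-size finite-state transducer $\tr_\Gamma$ accepting pairs $(u,v)$ of reduced representations with $\Idl(v)\subsetneq\Idl(u)$ (this falls out of the embedding characterization in \cref{lemma:embedding} that was already developed). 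Maximality of $\Idl(w)$ then reduces to the deterministic polynomial-time check $w\notin\tr_\Gamma(L(\NFAred))$. Your lex-smallest-path canonicalization has the same flavor of being underspecified: deciding whether a given path of the (unreduced) candidate graph is the lexicographically least one producing a given ideal is itself not obviously in $\compP$, whereas reduced representations are syntactically unique per ideal, so guessing the \emph{word} (not the path) in $L(\NFAred)$ already gives a bijection between accepting branches and ideals, with no canonicalization step needed. In short, the skeleton of your upper bound matches the paper's, but both of the steps you leave open --- the maximality test and the deduplication --- are exactly the places where the paper's reduced-representation and transducer machinery is doing real work.
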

The proof is given in~\cref{sec:appendix-counting-ideals}.
 $\sharpP$
is the class of functions $f$ computable by some
non-deterministic polynomial-time Turing machine (TM), in the sense that for a given
input $x$, the computed value $f(x)$ is the number of accepting runs.
$\sharpP$-complete problems are very hard, as evidenced by Toda's well-known
result that $\compP^{\sharpP}$ (i.e.~polynomial-time algorithms with access to
$\sharpP$ oracles) includes the entire polynomial time hierarchy~\cite{TodaPP}.
This represents an interesting contrast between the complexity of directedness
and counting all ideals.

\vspace{-2mm}

% !TEX root = ../main.tex
\section{Preliminaries}\label{sec:preliminaries}

\vspace{-3mm}
\subparagraph{Ideals}
We will use the notation $[m_1, m_2] := \{i \in \Z \mid m_1 \leq i \leq m_2 \}$.
Consider the context-free languages $K_1=\{wcw \mid w\in\{ab\}^*\}$ and $K_2=\{wcw \mid w\in\{a\}^*\cup\{b\}^*\}$. Note that $K_1$ is directed, whereas $K_2$ is not: The words $aca$ and $bcb$ in $K_2$ have no common superword in $K_2$. However, $K_1\cup K_2$ is directed.
Let $\Sigma$ be a finite alphabet. 
A set $D\subseteq\Sigma^*$ is \emph{downward closed} if $\dcl{D}=D$. 
A subset $I\subseteq\Sigma^*$ is an \emph{ideal} if it is non-empty, downward closed, and (upward) directed. Thus clearly, a non-empty $L\subseteq\Sigma^*$ is directed if and only if $\dcl{L}$ is an ideal (note that taking the downward closure does not affect directedness).
It is known that every ideal can be written as products of so called \emph{atoms}:
We identify two types of atoms over $\Sigma$:
\begin{description}
\item[Single atoms:] $\atomi$ where $ a\in \Sigma$,
\item[Alphabet atoms:] $\atomt$ where $\emptyset \neq \Delta \subseteq \Sigma$.
\end{description}
Formally, each atom $\at$ is a formal symbol that describes an ideal $\Idl(\at)$. For a single atom $\atomi$, we define it as $\Idl(\atomi) = \langatomi$, whereas for an alphabet atom $\atomt$, $\Idl(\atomt) = \langatomt$. By $\atoms{\Sigma}$, we denote the set of atoms over $\Sigma$. Note that $|\atoms{\Sigma}|=2^{|\Sigma|}-1+|\Sigma|$. 

An \emph{ideal representation} is a finite (possibly empty) sequence $r = \at_1 \cdots \at_n$ of atoms $\at_i$.
Its language is the concatenation $\Idl(\at_1 \cdots \at_n) = \Idl(\at_1) \cdots \Idl(\at_n)$
where the empty concatenation is interpreted as $\{\varepsilon\}$.
It is a classical fact that every downward closed set can be decomposed into a finite union of ideals. 
For example, observe that $\dcl{K_1}=\dcl{(K_1\cup K_2)}=\{a,b\}^*\{c,\varepsilon\}\{a,b\}^*=\Idl(\Atomt{\{a,b\}}\Atomi{c}\Atomt{\{a,b\}})$ and $\dcl{K_2}=\{a\}^*\{c,\varepsilon\}\{a\}^*\cup \{b\}^*\{c,\varepsilon\}\{b\}^*=\Idl(\Atomt{\{a\}}\Atomi{c}\Atomt{\{a\}})\cup\Idl(\Atomt{\{b\}}\Atomi{c}\Atomt{\{b\}})$.
This decomposition result was first shown by Jullien~\cite{Jullien1968} and the equivalent fact that every downward closed set can be expressed as a \emph{simple regular expression} was shown independently by Abdulla, Bouajjani, and Jonsson~\cite{DBLP:conf/cav/AbdullaBJ98} (see~\cite{thesisHalfon} for a general treatment).
If $R$ is a set of ideal representations, we set $\Idl(R) := \bigcup_{r \in R} \Idl(r)$.

\vspace{-2mm}
\subparagraph{Reduced ideal representations}
Note that one ideal can have multiple different representations.
For instance, the representations $\Atomt{a}\cdot\Atomi{a}\cdot\Atomi{b}\cdot\Atomt{b}\cdot\Atomi{a}$ and $\Atomt{a}\cdot\Atomt{b}\cdot\Atomi{a}$ \emph{represent} the same ideal,
namely all words that start with a (possibly empty) sequence of $a$'s, followed by a (possibly empty) sequence of $b$'s,
and possibly end with an $a$.
This is because in the first representation, all the words $\Atomi{a}$ and $\Atomi{b}$ generate are produced by their neighboring alphabet atoms.
Two representations are called \emph{equivalent} if they represent the same ideal. 

To achieve unique ideal representations, one can use \emph{reduced
representations}, which we define next.  Two atoms $\at$ and $\atb$ are
\emph{absorptive} if $\Idl(\at \atb) = \Idl(\at)$ or $\Idl(\at \atb) = \Idl(\atb)$. In the first
case we say \emph{$\at$ absorbs $\atb$} and in the second case, \emph{$\atb$ absorbs
$\at$}.  Note that two single atoms are always non-absorptive since
$\Idl(\Atomi{a}\cdot\Atomi{b}) \supsetneq \Idl(\Atomi{a})$. An atom $\at$ is
said to \emph{contain} an atom $\atb$ if $\Idl(\at) \supseteq \Idl(\atb)$. $\alpha$ is said
to \emph{strictly} contain $\atb$ if also $\Idl(\at) \neq \Idl(\atb)$.  An ideal
representation $\at_1\cdots \at_n$ is said to be \emph{reduced} if for all $i \in
[1, n-1]$, $\at_i$ and $\at_{i+1}$ are non-absorptive. The following is obvious (and well-known~\cite[Lemma~5.4]{DBLP:journals/fmsd/AbdullaCBJ04}),
because we can just repeatedly merge neighboring absorptive atom pairs:
\begin{restatable}{lemma}{restatableexistsnormalform}\label{thm:existsnormalform}
    For every ideal representation $\at_1 \cdots \at_n$, there exists a reduced ideal representation $\atb_1 \cdots \atb_m$ such that $\Idl(\at_1 \cdots \at_n) = \Idl(\atb_1 \cdots \atb_m)$ and $m \leq n$. 
\end{restatable}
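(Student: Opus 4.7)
The plan is to proceed by induction on $n$, the length of the ideal representation. For the base case $n \leq 1$, any ideal representation of length at most one is vacuously reduced (there is no pair of consecutive atoms to inspect), so we can take $\atb_1 \cdots \atb_m := \at_1 \cdots \at_n$. For the inductive step, assume $n \geq 2$ and that the claim holds for all shorter representations. If $\at_1 \cdots \at_n$ is already reduced, we are done. Otherwise, there exists some index $i \in [1, n-1]$ such that $\at_i$ and $\at_{i+1}$ are absorptive, i.e., either $\Idl(\at_i \at_{i+1}) = \Idl(\at_i)$ or $\Idl(\at_i \at_{i+1}) = \Idl(\at_{i+1})$.

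In the first case, I form the representation $\at_1 \cdots \at_{i-1} \at_i \at_{i+2} \cdots \at_n$ of length $n-1$; in the second, $\at_1 \cdots \at_{i-1} \at_{i+1} \at_{i+2} \cdots \at_n$. Using associativity of language concatenation,
\[
\Idl(\at_1 \cdots \at_n) = \Idl(\at_1) \cdots \Idl(\at_{i-1}) \cdot \Idl(\at_i \at_{i+1}) \cdot \Idl(\at_{i+2}) \cdots \Idl(\at_n),
\]
so substituting $\Idl(\at_i \at_{i+1})$ with whichever of $\Idl(\at_i), \Idl(\at_{i+1})$ equals it yields a representation with the same ideal language and length $n - 1$. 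Applying the induction hypothesis to this shorter representation gives a reduced representation $\atb_1 \cdots \atb_m$ with the same language and $m \leq n - 1 \leq n$.

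There is no genuine obstacle in this argument: the only thing to verify is that the merge step preserves the represented ideal, which is immediate from the definition of absorption. Equivalently, one can phrase the proof as a termination argument for the rewriting system that merges an arbitrary absorptive neighbor pair at each step; since each application strictly decreases the length and the length is bounded below by $0$, the process terminates, and at termination no two neighboring atoms are absorptive, hence the result is reduced.
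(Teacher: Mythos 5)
Your proof is correct and takes essentially the same approach as the paper's: induct on the length of the representation and, when a neighboring absorptive pair exists, merge it into the absorbing atom, shortening the representation while preserving the ideal. The paper's proof is just more terse, leaving the base case and the associativity observation implicit.
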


%%%%%%%%%%%%%%%%%%%%%%%%%%%%%%%%%%%%%%%%%%%%%%%%%%%%%%%%%%%%%%%%%%%%%%%%%%%%%%%%%%
% Uniqueness of reduced ideal representations: Now cited briefly where we need it.
%%%%%%%%%%%%%%%%%%%%%%%%%%%%%%%%%%%%%%%%%%%%%%%%%%%%%%%%%%%%%%%%%%%%%%%%%%%%%%%%%%
%  Now indeed, with the assumption of being reduced, ideal representations are unique. This means, if $A_1\cdots A_n$ and $B_1\cdots B_m$ are reduced ideal representations with $\Idl(A_1\cdots A_n)=\Idl(B_1\cdots B_m)$, then $m=n$ and $A_i=B_i$ for each $i$. In short:
%  \begin{lemma}[Theorem 6.1.12 in \cite{thesisHalfon}]
%   Each ideal has a unique reduced representation.
%  \end{lemma} 
%%%%%%%%%%%%%%%%%%%%%%%%%%%%%%%%%%%%%%%%%%%%%%%%%%%%%%%%%%%%%%%%%%%%%%%%%%%%%%%%%%
%%%%%%%%%%%%%%%%%%%%%%%%%%%%%%%%%%%%%%%%%%%%%%%%%%%%%%%%%%%%%%%%%%%%%%%%%%%%%%%%%%

 \vspace{-5mm}
 
\subparagraph{Representing downward closed sets}
We will use two classical facts about ideals. First, every downward closed set $D \subseteq \Sigma^*$ can be written as a finite union of ideals. Moreover, ideals are ``prime'' in the sense that if an ideal is included in a union $D_1\cup D_2$ of downward closed sets, it is already included in one of them:
\begin{lemma}[\cite{DBLP:conf/stacs/FinkelG09,DBLP:journals/ita/KabilP92,DBLP:journals/corr/abs-1904-10703}]\label{ideals-basic-facts}
For every downward closed set $D\subseteq\Sigma^*$, there exist $n\in\N$ and ideals $I_1,\ldots,I_n\subseteq\Sigma^*$ with $D=I_1\cup\cdots\cup I_n$. Moreover, if $I$ is an ideal with $I\subseteq D_1\cup D_2$ for downward closed $D_1,D_2\subseteq\Sigma^*$, then $I\subseteq D_1$ or $I\subseteq D_2$.
\end{lemma}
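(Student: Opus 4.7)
The statement has two parts. The primality clause (the ``moreover'' part) admits a short direct argument from directedness, while the decomposition part is the classical Jullien-style result and will rely on Higman's lemma.

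For the primality clause, I would argue by contradiction. Suppose $I \subseteq D_1 \cup D_2$ with $I$ an ideal and $D_1, D_2$ downward closed, and assume neither $I \subseteq D_1$ nor $I \subseteq D_2$ holds. Pick $u \in I \setminus D_1$ and $v \in I \setminus D_2$. By directedness of $I$ there exists $w \in I$ with $u \sw w$ and $v \sw w$, and since $w \in I \subseteq D_1 \cup D_2$ we have $w \in D_1$ or $w \in D_2$. In the first case, downward closure of $D_1$ together with $u \sw w$ forces $u \in D_1$, contradicting $u \in I \setminus D_1$, and the second case symmetrically contradicts $v \notin D_2$.

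For the decomposition part, I would first establish that the poset of downward closed subsets of $\Sigma^*$, ordered by inclusion, satisfies the descending chain condition. This follows from Higman's lemma (i.e.\ $(\Sigma^*, \sw)$ is a well-quasi-order): given a strictly descending chain $D_1 \supsetneq D_2 \supsetneq \cdots$, choose $x_i \in D_i \setminus D_{i+1}$; Higman's lemma yields indices $i < j$ with $x_i \sw x_j$, but $x_j \in D_j \subseteq D_{i+1}$ and $D_{i+1}$ is downward closed, so $x_i$ would lie in $D_{i+1}$, contradicting its choice. I would then derive the decomposition from this by a minimal-counterexample argument. Let $\mathcal{B}$ denote the collection of downward closed sets that are \emph{not} a finite union of ideals; if $\mathcal{B}$ were non-empty, pick a $\subseteq$-minimal $D \in \mathcal{B}$. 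Now $D$ is non-empty (the empty set is the empty union, taking $n = 0$) and is not itself an ideal (else $D$ is a one-element union), so $D$ fails to be directed. Fix $u, v \in D$ having no common superword in $D$, and set
\[
D_u \;:=\; \{w \in D : u \not\sw w\}, \qquad D_v \;:=\; \{w \in D : v \not\sw w\}.
\]
Both are downward closed, and the no-common-superword assumption gives $D = D_u \cup D_v$. Moreover $u \in D \setminus D_u$ and $v \in D \setminus D_v$, hence $D_u, D_v \subsetneq D$ strictly. By minimality of $D$ inside $\mathcal{B}$, both $D_u$ and $D_v$ are finite unions of ideals, and therefore so is $D$, contradicting $D \in \mathcal{B}$.

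The only nontrivial step in this plan is the DCC on downward closed subsets of $\Sigma^*$; this is precisely where Higman's lemma is used, and nothing weaker would suffice. Every remaining step is a short, direct manipulation of the definitions of ``downward closed'' and ``directed''.
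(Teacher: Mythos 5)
Your proof is correct. Note, though, that the paper itself does not prove this lemma at all — it is stated with citations to the literature (Finkel--Goubault-Larrecq, Kabil--Pouzet, and the Goubault-Larrecq--Halfon--Karandikar--Narayan Kumar--Schnoebelen survey) and treated as a known classical fact. So there is no in-paper proof to compare against; what you have reconstructed is essentially the canonical argument from that literature.

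Both halves of your argument are sound. The primality part is the short directedness argument and is exactly what one finds in the references. For the decomposition part, you correctly identify that the crux is the descending chain condition on downward closed subsets of $\Sigma^*$, and your derivation of DCC from Higman's lemma (picking $x_i \in D_i \setminus D_{i+1}$ and obtaining a contradiction from an increasing pair) is right. The minimal-counterexample step is also handled carefully: you verify that $D_u$ and $D_v$ are downward closed, that $D = D_u \cup D_v$ from the failure of directedness, and that each is strictly smaller than $D$ (since $u \notin D_u$ and $v \notin D_v$). One small remark on the claim that \emph{nothing weaker than Higman would suffice}: this is indeed tight, since DCC on downward closed sets is in fact equivalent to well-quasi-orderedness of $(\Sigma^*, \sw)$, so you are using exactly the right tool. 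An alternative, more constructive route that also appears in the cited literature is to exhibit the ideals explicitly from the structure theorem for downward closed languages (simple regular expressions / products of atoms as in \cref{sec:preliminaries}), which the paper in fact sketches via Jullien and Abdulla--Bouajjani--Jonsson; your WQO-theoretic proof is shorter and more abstract, at the cost of being non-constructive, which is irrelevant here since the lemma is only used as an existence statement.
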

The representation $D=I_1\cup\cdots\cup I_n$ is also called an \emph{ideal decomposition} of $D$. Observe that the second statement implies that this decomposition is unique (up to the order of ideals) if we require the ideals $I_1,\ldots,I_n$ to be pairwise incomparable. This is sometimes called the unique \emph{decomposition into maximal ideals}.

\vspace{-2mm}

\subparagraph{Non-deterministic finite automata} 
We start by formally introducing NFAs.
A \emph{non-deterministic finite automaton (NFA)} is a tuple 
$\NFA=(Q, \Sigma, \delta, q_0, F)$ where $Q$ is a finite set of \emph{states}, $q_0 \in Q$ is the unique \emph{initial state}, $F \subseteq Q$ is the set of \emph{final states}, $\Sigma$ is a \emph{finite alphabet} and $\delta \subseteq Q \times (\Sigma \cup \{\emp\}) \times Q$ is the set of \emph{transitions}. A transition $(p,a,q) \in \delta$ is usually displayed as $p \xrightarrow{a} q$,
and we write $p_0 \xrightarrow{w} p_n$ if there exists a sequence of transitions $p_0 \xrightarrow{a_1} p_1 \xrightarrow{a_2} \dots \xrightarrow{a_n} p_n$ such that $w = a_1 \dots a_n$.
The language accepted by an NFA $\NFA$ is the set of all words $w \in \Sigma^*$ such that $q_0 \xrightarrow{w} q$ for some $q \in F$, and is denoted by $L(\NFA)$.

%\input{sections/main-technique-regular}

% !TEX root = ../main.tex

\section{Solution on Regular Languages}\label{sec:solutionregular}
In this section, we prove~\cref{thm:main-regularAC1} and~\cref{thm:main-regularNL-complete}.
%In the proofs we will use the fact that all downward closed languages $L$ can be written as a finite union of their ideals~\eqref{eq:idealdecomposition} and from~\cref{def:reducedideal} we know that these ideals have reduced representations. 
%Furthermore, if $L$ is upward directed, it is an ideal itself. We will show that if $L$ is directed, it should be equal to one of its ideals with reduced representation that has the maximum weight. \ISinside{repetition?}
Let us quickly observe the $\NL$ lower bound of the directedness problem: We reduce from the emptiness problem for NFAs. Given an NFA $\NFA$, we may assume that there is exactly one final state that is different from the initial state, and that all edges are labeled with $a$. We construct an NFA $\NFA'$ with $L(\NFA')=L(\NFA)\cup\{b\}$. Then clearly, $L(\NFA)\ne\emptyset$ if and only if $L(\NFA)$ contains some word in $\{a\}^+$. The latter is true if and only if $L(\NFA')$ is not directed.

Thus, the interesting part of \cref{thm:main-regularAC1,thm:main-regularNL-complete} are the upper bounds.  To explain the main steps, we need some terminology.
    We say that a function $f \colon \{0,1\}^* \to \{0,1\}^*$ is \emph{computable in $\NL$}
    if there exists a non-deterministic $\logspace$ TM with a write-only output tape
    such that (i)~on every input word $x \in \{0,1\}^*$ there exists an accepting computation,
    and (ii)~every accepting computation on $x$ produces the same output $f(x)$ on the output tape.
    We say that $f$ is \emph{computable in $\AC^1$} if the language $\{ x 0 1^i \mid \text{the $i$-th bit of $f(x)$ is $1$}\}$
    belongs to $\AC^1$.
The main difficulty in proving \cref{thm:main-regularAC1} is the following:
\begin{lemma}\label{nfa-ideal-candidate}
Given a non-empty NFA $\NFA$, one can compute in $\AC^1$ an ideal $I$ such that
%\begin{enumerate}
%\item 
(i)~$I\subseteq \dcl{L(\NFA)}$ and
%\item 
(ii)~$L(\NFA)$ is directed if and only if $\dcl{L(\NFA)}\subseteq I$.
%\end{enumerate}
Moreover, for every fixed alphabet size, this computation can be carried out in $\NL$.
\end{lemma}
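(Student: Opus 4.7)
The plan is to use the ``weighting technique'' sketched in the introduction: build a polynomial-size acyclic graph whose paths enumerate an ideal decomposition of $\dcl{L(\NFA)}$, equip atoms with numerical weights so that a maximum-weight path singles out an inclusion-maximal ideal of the decomposition, and extract such a path by iterated matrix multiplication in the tropical semiring.

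First I would construct in $\logspace$ from $\NFA$ an acyclic ``ideal automaton'' $\NFAidl$. The standard construction contracts each strongly connected component (SCC) $S$ of $\NFA$ into a transition carrying the alphabet atom $\Atomt{\Delta_S}$, where $\Delta_S$ is the set of letters on transitions inside $S$, while keeping every inter-SCC edge of $\NFA$, labelled by the corresponding single atom $\Atomi{a}$. Every accepting ``path'' of $\NFAidl$ spells out a sequence of atoms, i.e.\ an ideal representation $r$, and it is a classical consequence of \cref{ideals-basic-facts} together with SCC analysis that the union of the path ideals equals $\dcl{L(\NFA)}$. In particular, any path ideal is contained in $\dcl{L(\NFA)}$, so condition (i) is automatic for whichever path we ultimately pick.

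Next I would equip the atoms with non-negative integer weights $\mathfrak{w}$, extended to representations by summation, designed to enjoy the following \emph{monotonicity property}: if $r$ and $s$ are reduced representations arising as paths of $\NFAidl$ and $\Idl(r) \subsetneq \Idl(s)$, then $\mathfrak{w}(r) < \mathfrak{w}(s)$. I expect this to be the main technical obstacle: it must be shown that any local move turning a reduced representation into one with a strictly larger ideal (absorbing a block of atoms into a neighbouring alphabet atom, or enlarging an alphabet atom's alphabet) strictly increases $\mathfrak{w}$. A candidate choice is $\mathfrak{w}(\Atomi{a}) = 1$ together with $\mathfrak{w}(\Atomt{\Delta}) = (|\Sigma|+1)^{|\Delta|}$, which is superadditive enough to dominate any sum of smaller alphabet atoms while still having polynomial bit length in $|\Sigma|$; monotonicity is then proved by a case analysis based on \cref{thm:existsnormalform} and the absorption rules.

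Given the weighted $\NFAidl$, I would compute the maximum-weight initial-to-final path by iterated matrix multiplication in the tropical semiring $(\N \cup \{-\infty\}, \max, +, -\infty, 0)$. Since the matrix entries have polynomial bit length, $\max$ and $+$ are $\AC^0$-computable, and balanced divide-and-conquer multiplication of the $n$ copies of the transition weight matrix uses $O(\log n)$ levels of $\AC^0$ arithmetic, placing the computation in $\AC^1$; a standard back-tracing step, also in $\AC^1$, recovers an actual optimal path and thus a representation of the candidate ideal $I$. For fixed alphabet size, the atom set is constant, weights are bounded by a constant, and the longest-path problem on the DAG $\NFAidl$ becomes a textbook $\NL$ computation (nondeterministically guess a path together with a cumulative weight counter of polynomial value). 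Correctness of (ii) then follows from monotonicity: any maximum-weight path ideal is inclusion-maximal among the path ideals, and when $L(\NFA)$ is directed, the primality clause of \cref{ideals-basic-facts} forces the ideal $\dcl{L(\NFA)}$ itself to coincide with one of the path ideals, which is then necessarily the inclusion-maximal and hence unique maximum-weight candidate, giving $I = \dcl{L(\NFA)}$ and thus $\dcl{L(\NFA)} \subseteq I$; conversely $\dcl{L(\NFA)} \subseteq I$ combined with (i) forces $\dcl{L(\NFA)} = I$, which is an ideal and hence directed.
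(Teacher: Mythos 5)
Your overall architecture matches the paper — build an acyclic atom automaton $\NFAidl$, equip atoms with weights, find a max-weight path by iterated matrix multiplication over the tropical semiring — but there are two genuine gaps that would make the proof fail.

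First, the paths of $\NFAidl$ are in general \emph{not} reduced, and your monotonicity claim is only stated for reduced representations, so the inference ``maximum-weight path $\Rightarrow$ inclusion-maximal ideal'' does not go through. For example, if one accepting path reads $\Atomt{\{a,b\}}\Atomi{a}\Atomt{\{a,b\}}$ and another reads $\Atomt{\{a,b\}}\Atomt{\{b,c\}}$, the first is non-reduced and can have weight comparable to or larger than the second even though its ideal is strictly smaller. The paper closes this gap by adding an explicit intermediate step (its \cref{lemma:reducedidealNFA}): it post-processes $\NFAidl$ with a left-reducing transducer $\trl$ composed with a right-reducing transducer $\trr$ to obtain an automaton $\NFAred$ all of whose accepted paths spell \emph{reduced} ideal representations; only then is the weighting/monotonicity machinery applied. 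Your plan is missing this step entirely, and it is essential.

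Second, your proposed weight function $\mathfrak{w}(\Atomt{\Delta}) = (|\Sigma|+1)^{|\Delta|}$ has the wrong base. The base must dominate the maximum possible number of atoms in a representation (i.e.\ a bound on path length), not the alphabet size: the key inequality one needs is that a sum of at most $k$ weights of atoms with alphabet size $<|\Delta|$ stays strictly below the weight of a single $\Atomt{\Delta}$, which requires base at least $k+1$. With base $|\Sigma|+1$ this fails as soon as the path length exceeds $|\Sigma|$; e.g.\ the reduced representation $\Atomi{a}\cdots\Atomi{a}$ with $|\Sigma|+1$ single atoms has the same weight as $\Atomt{\{a\}}$ even though its ideal is strictly smaller. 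The paper uses $\mu_k(\Atomt{\Delta}) = (k+1)^{|\Delta|}$ with $k$ the number of states of $\NFAred$; for fixed alphabet this gives weights polynomial (not constant) in $k$, which also changes the details of your $\NL$ argument (the paper iterates over candidate weight values using $\NL=\coNL$, and additionally takes care that the nondeterministic computation produces a canonical output path).
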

Since the inclusion $\dcl{L(\NFA)}\subseteq I$ for a given NFA $\NFA$ and ideal $I$ can be decided in $\NL$~\cite{icalpZetzsche16}, \cref{nfa-ideal-candidate} immediately implies the upper bounds: Just compute $I$ and check $\dcl{L(\NFA)}\subseteq I$.

Let us briefly outline the proof of \cref{nfa-ideal-candidate}. 
Given an NFA $\NFA$ for a regular language $L$, we first construct an NFA $\NFAidl$ accepting representations ideals in an ideal decomposition of $\dcl{L}$. This is then transformed into an NFA $\NFAred$ that accepts \emph{reduced} ideal representations. 
Reducedness of the ideal representations enables us to efficiently compute a maximal ideal in $L(\NFAred)$, by solving a maximum weight path problem.
This step can be carried out in $\AC^1$ for arbitrary alphabets and in $\NL$ for fixed alphabets.
\cref{fig:running-ex} depicts the mentioned transitionary automata and serves as a running example throughout the section.

\vspace{-2mm}

\subsection{Computing the ideal automaton $\NFAidl$}
Our first step towards \cref{nfa-ideal-candidate} is to transform the input NFA into one that is partially ordered. Here, an NFA is  \emph{partially ordered} if the state set $Q$ is equipped with a partial order $(Q,\le)$ such that for every transition from $p$ to $q$, we have $p\le q$.
In particular, the automaton does not contain any cycles except for self-loops. The following is a standard fact:
\begin{restatable}{lemma}{restatabledownwardclosureNFA}\label{lemma:downwardclosureNFA}
	Given any NFA $\NFA$, one can compute in $\NL$ a partially ordered NFA $\NFAscc$ such that $L(\NFAscc)=\dcl{L(\NFA)}$.
\end{restatable}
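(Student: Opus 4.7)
The plan is to collapse each strongly connected component (SCC) of $\NFA$ to a single state, attach to that state a self-loop for every letter occurring inside its SCC, and replace each transition between distinct SCCs by its original labelled copy together with a parallel $\varepsilon$-transition. Formally, let $\sim$ be the SCC equivalence on $Q$ and write $[p]$ for the SCC of $p$. I would define $\NFAscc$ with state set $Q/{\sim}$, initial state $[q_0]$, and final states $\{[q_f] : q_f \in F\}$. For each SCC $C$ and each letter $a$ labelling some transition inside $C$, add a loop $C \xrightarrow{a} C$. For each original transition $p \xrightarrow{a} q$ with $[p] \neq [q]$, add both $[p] \xrightarrow{a} [q]$ and $[p] \xrightarrow{\varepsilon} [q]$. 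Equipping $Q/{\sim}$ with the reachability order of the SCC DAG makes $\NFAscc$ partially ordered.

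Next I would verify $L(\NFAscc) = \dcl{L(\NFA)}$ by showing both inclusions. For $\dcl{L(\NFA)} \subseteq L(\NFAscc)$: any accepting run of $\NFA$ on a word $v$ visits a sequence $C_0, \dots, C_k$ of SCCs; the factor of $v$ produced inside each $C_i$ uses only letters of $\Sigma_{C_i}$ and is therefore replayable via the self-loops at $[C_i]$, while transitions between consecutive SCCs correspond to the labelled inter-SCC edges of $\NFAscc$. Any subword $w \sw v$ is realised by traversing fewer self-loops and replacing some labelled inter-SCC edges by their $\varepsilon$-copies. Conversely, for $L(\NFAscc) \subseteq \dcl{L(\NFA)}$, an accepting $\NFAscc$-run on $w$ lifts back to $\NFA$ by replacing each self-loop $C \xrightarrow{a} C$ with a walk in $C$ that traverses some $a$-labelled transition (which exists by construction of the loop), each $\varepsilon$-edge $[p] \xrightarrow{\varepsilon} [q]$ with an underlying transition $p' \xrightarrow{a'} q'$ satisfying $[p']=[p]$, $[q']=[q]$ (contributing an extra letter $a'$), and each labelled inter-SCC edge directly. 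The resulting $\NFA$-run accepts a word $v$ with $w \sw v$.

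For the $\NL$ computation: SCC membership reduces to reachability in both directions, the canonical $\NL$ problem, so I can designate each SCC by its minimum-index state in $\NL$. The output machine deterministically enumerates every candidate state name, self-loop label, and inter-SCC edge; for each item it runs an $\NL$ subroutine that either existentially certifies the item's presence in $\NFAscc$ or uses Immerman--Szelepcs\'enyi to certify its absence. All accepting computations thereby produce the same output. The only subtlety worth mentioning is guaranteeing output uniqueness under nondeterminism, which is handled entirely by $\NL = \coNL$; the construction itself is otherwise routine bookkeeping.
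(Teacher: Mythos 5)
Your proof follows the paper's SCC-collapsing strategy, but adds one component that the paper's formal appendix construction actually omits: a parallel $\varepsilon$-edge $[p]\xrightarrow{\varepsilon}[q]$ next to each labelled inter-SCC edge. Without these escapes, the collapsed automaton's language is not in general downward closed — for instance, if $\NFA$ has states $p,q$, a single transition $p\xrightarrow{a}q$, $p$ initial, $q$ final, then $L(\NFA)=\{a\}$ and $\dcl{L(\NFA)}=\{\varepsilon,a\}$, yet the paper's $\NFAscc$ accepts only $\{a\}$. The appendix proof as written thus only establishes $L(\NFA)\subseteq L(\NFAscc)\subseteq\dcl{L(\NFA)}$, and your $\varepsilon$-edges are a genuine repair of the stated equality. (The paper quietly absorbs this slack in the next step, where $\NFAidl$ wraps each inter-SCC letter as $\{\emp,a\}$, so the overall development is unharmed; but the lemma as stated needs your edges or an equivalent device.) On the complexity side your reasoning matches the paper's — non-determinism is needed for SCC-membership and letter-occurrence — while you are more explicit about the output-uniqueness requirement in the paper's definition of an $\NL$-computable function, handling it via canonical SCC representatives and $\NL=\coNL$, which the paper only gestures at. Overall correct, and a bit more careful than the paper's own writeup.
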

Essentially, one collapses each strongly connected component (SCC) $C$ of $\NFA$ into
a new state $q_C$ of $\NFAscc$ and adds a self-loop to $q_C$ for each letter
that appears in $C$. Here, we require non-determinism in our $\logspace$
computation, because we need to determine whether a given letter appears in a
strongly connected component. See~\cref{app:sec:proofsofsecsolutionregular} for details.

Next, we want to construct an NFA $\NFAidl$ over a finite alphabet of \emph{atoms of $\Sigma$}, that will accept (as its \emph{words}) the ideal representations given by the accepted paths of $\NFAscc$. 
\begin{restatable}{lemma}{restatableidealNFA}\label{lemma:idealNFA}
	Given any partially ordered NFA $\NFAscc$ on the finite alphabet $\Sigma$,
	one can compute in $\NL$ an acyclic NFA $\NFAidl$ over some polynomial-sized alphabet $\Gamma\subseteq\atoms{\Sigma}$ such that
	$\Idl(L(\NFAidl)) =  \dcl{L(\NFAscc)}$.
\end{restatable}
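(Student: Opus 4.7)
The plan is to convert the partially ordered NFA $\NFAscc = (Q, \Sigma, \delta, q_0, F)$ directly into an NFA over atoms, by separating at each state the contribution of its self-loops from that of the outgoing non-self-loop transitions. Since $\NFAscc$ is partially ordered, every accepting run decomposes uniquely into self-loop blocks at the successively visited states, alternating with non-self-loop transitions; a block at a state $q$ produces a word in $\Sigma_q^*$, where $\Sigma_q := \{a \in \Sigma : (q,a,q) \in \delta\}$. Under downward closure such a block is captured exactly by $\Idl(\Atomt{\Sigma_q}) = \Sigma_q^*$, and a non-self-loop letter $a$ by $\Idl(\Atomi{a}) = \{a, \varepsilon\}$.

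Concretely, I would take as alphabet $\Gamma := \{\Atomi{a} : a \in \Sigma\} \cup \{\Atomt{\Sigma_q} : q \in Q,\ \Sigma_q \neq \emptyset\} \subseteq \atoms{\Sigma}$, which has size at most $|\Sigma| + |Q|$. I would split each state $q$ with $\Sigma_q \neq \emptyset$ into $q$ and a fresh copy $q^\#$, add the edge $q \xrightarrow{\Atomt{\Sigma_q}} q^\#$, delete all self-loops, and reroute every outgoing non-self-loop transition of such a $q$ to start from $q^\#$, relabeling each letter $a$ to the atom $\Atomi{a}$ and preserving $\varepsilon$-transitions. States with $\Sigma_q = \emptyset$ remain unsplit, with their outgoing letter transitions relabeled in the same way. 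The initial state is $q_0$, a state $q$ is declared accepting iff $q \in F$, and additionally $q^\#$ is accepting whenever $q \in F$ and $\Sigma_q \neq \emptyset$. Acyclicity of $\NFAidl$ is immediate: removing self-loops yields the DAG underlying $\NFAscc$, and every added edge points from $q$ to its fresh copy $q^\#$.

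For correctness I would argue both inclusions from this construction. For $\dcl{L(\NFAscc)} \subseteq \Idl(L(\NFAidl))$, any accepting run of $\NFAscc$ on a word $v = v_0 a_1 v_1 \cdots a_n v_n$, with $v_i \in \Sigma_{q_i}^*$ for the successively visited states $q_i$, translates into an accepting run of $\NFAidl$ on the atom sequence $\Atomt{\Sigma_{q_0}} \Atomi{a_1} \Atomt{\Sigma_{q_1}} \cdots \Atomi{a_n} \Atomt{\Sigma_{q_n}}$ (omitting $\Atomt{\Sigma_{q_i}}$ where $\Sigma_{q_i} = \emptyset$, and replacing $\Atomi{a_i}$ by $\varepsilon$ where $a_i = \varepsilon$), whose ideal contains $v$ and hence every subword of $v$. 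Conversely, every accepting atom sequence of $\NFAidl$ has precisely this form, and any word in its ideal is, factor by factor, a subword of a word produced by the matching original run in $\NFAscc$, hence lies in $\dcl{L(\NFAscc)}$.

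Finally, the construction is carried out in $\logspace$ (hence in $\NL$): it suffices to enumerate the states and transitions of $\NFAscc$, locally identify $\Sigma_q$ at each state, and output the relabeled, split transitions on the fly. I do not anticipate a real obstacle; the only technical care required is to set the acceptance condition so that runs not actually using the self-loops at a final state still succeed (handled by making both $q$ and $q^\#$ accepting when $q \in F$ and $\Sigma_q \neq \emptyset$), and to note that $\varepsilon \in \Idl(\Atomt{\Sigma_q})$ makes the always-through-$q^\#$ convention harmless.
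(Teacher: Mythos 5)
Your construction is correct and matches the paper's proof of this lemma essentially exactly: both split each state into an ``in'' copy and an ``out'' copy connected by the alphabet atom collecting the self-loop letters, and relabel each non-self-loop letter transition with the corresponding single atom. The only cosmetic differences are that the paper splits \emph{every} state (using an $\varepsilon$-transition $q \to q^\cpy$ when there is no self-loop) and takes $\Fidl = \hat{F}^\cpy$, whereas you split only self-looped states and mark both $q$ and $q^\#$ accepting; both yield the same set of ideals.
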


%, as depicted in equation~\ref{eq:acc-paths-ideals}. 
\noindent Since the only cycles $\NFAscc$ contains are self loops, we can write $L(\NFAscc)$ as the finite union, % of all its accepted paths as follows,
\begin{equation*}
    L(\NFAscc) = \bigcup_{i \in [1,r]} a_{1,i} \Delta_{1,i}^* a_{2,i} \Delta_{2,i}^* \cdots a_{k_i,i} \Delta_{k_i,i}^* \, \text{ with } a_{n,i} \in \Sigma \cup \{\emp\}\, \text{ and }\,\Delta_{n,i}\subseteq \Sigma \text{ for } n \in [1, k_i]
\end{equation*}

\noindent Since $L(\NFAscc)$ is downward closed, it is equivalent to the following ideal decomposition of $\dcl{L(\NFA)}$:

\vspace{-3mm}
\begin{equation}\label{eq:acc-paths-ideals}
    L(\NFAscc) = \bigcup_{i \in [1, r]} \{\emp, a_{1,i}\} \Delta_{1,i}^* \{\emp, a_{2,i}\} \Delta_{2,i}^* \cdots \{\emp,a_{k_i,i}\} \Delta_{k_i,i}^*
\end{equation}
%which exactly is an ideal decomposition of $\dcl{L(\NFA)}$. 

\begin{proof}[Proof sketch]
To construct $\NFAidl$ from $\NFAscc$, for each state $q$ in $\NFAscc$, we add two copies $q$ and $q'$ to $\NFAidl$. We keep the initial state the same, and make the final states of $\NFAidl$ the copies of final states of $\NFAscc$.
Each state $q$ with self-loops is turned into a transition reading an alphabet atom $q \xrightarrow{\Atomt{\Delta}} q'$
where $\Delta$ contains all letters read on self-loops on $q$.
Furthermore, each transition $p \xrightarrow{a} q$ in $\NFAscc$ where $p \neq q$ is turned into a transition reading a single atom $p' \xrightarrow{\Atomi{a}} q$.
%We let the incoming transitions of each state $q$ to be received by $q$ in $\NFAidl$ and direct outgoing transitions of it to its copy. We turn all non self-loop transitions to single atom transitions and
%for all self loops over $q$, we add a transition $q$ go to $q$'s copy with the alphabet atom that contains all letters on the self loop.

It is easily verified $L(\NFAidl)$ is the ideal decomposition of $L(\NFAscc)$ given in equation~\eqref{eq:acc-paths-ideals}. %Note that, the language of $\NFAidl$ also accepts some subwords of the ideals listed in equation~\ref{eq:acc-paths-ideals}, but this is not a problem since the subwords of ideals are also ideals of $L$ themselves.
\end{proof}

\subsection{Weighting functions for ideals}
\Cref{lemma:idealNFA} tells us that for our given NFA $\NFA$, we can construct an NFA $\NFAidl$ over $\atoms{\Sigma}$
that is acyclic and whose paths correspond to the ideals of $\dcl{L(\NFA)}$. Observe that $L(\NFA)$ is directed iff there exists a path $\pi$ in $\NFAidl$ such that $L(\NFA)$ is included in the ideal of $\pi$:
Clearly, if there is such a path, then $\dcl{L(\NFA)}$ must equal this ideal and is thus directed.
 Conversely, if $L(\NFA)$ is directed and $\dcl{L(\NFA)}=I_1\cup\cdots\cup I_n$, where $I_1,\ldots,I_n$ are the ideals of the paths in $\NFAidl$, then directedness implies that $\dcl{L(\NFA)}$ is an ideal. By~\cref{ideals-basic-facts}, we must have that $\dcl{L(\NFA)}$ coincides with some $I_i$ for $i \in [1, n]$, which lies on some path $\pi$.
Therefore, to show \cref{nfa-ideal-candidate}, it remains to pick a path $\pi$ such that if there is a greatest ideal among the paths in $\NFAidl$, then it must lie on $\pi$. This is the main challenge in our decision procedures.

Our key insight is that this can be accomplished by a \emph{weighting
function}. Roughly speaking, we construct a function $\mu$ from the set of
ideal representations to $\N$ such that $\mu$ is \emph{strictly monotone},
meaning (i)~if $\Idl(\at_1\cdots \at_n)\subseteq \Idl(\atb_1\cdots \atb_m)$, then $\mu(\at_1\cdots
\at_n)\le \mu(\atb_1\cdots \atb_m)$ and (ii)~if in addition $\Idl(\at_1\cdots \at_n)\ne \Idl(\atb_1\cdots
\atb_m)$, then $\mu(\at_1\cdots \at_n)<\mu(\atb_1\cdots \atb_m)$. Moreover, the function will
be \emph{additive}, meaning $\mu(\at_1\cdots \at_n)=\mu(\at_1)+\cdots+\mu(\at_n)$.
Given such a function, we can find the aforementioned path $\pi$
by picking one with maximal weight.

\vspace{-2mm}
\subparagraph{The weight function}
Strictly speaking, an additive strictly monotone function as above is impossible: It would imply $n\le\mu(\Atomi{a}\cdots \Atomi{a})<\mu(\Atomt{\{a\}})$ for every $n$ (here, the product $\Atomi{a}\cdots\Atomi{a}$ has $n$ factors). Therefore, we will only satisfy strict monotonicity on ideal representations of some maximal length $k$. 
Given $k\in\N$, the \emph{($k$-)weight} of an
ideal representation $\at_1 \cdots \at_n$ is defined as
 $\mu_k(\at_1 \cdots \at_n) = \sum^{n}_{i=1} \mu_k(\at_i)$ where for each atom $\at$:
\begin{align}\label{eq:idealweight}
     \mu_k(\alpha) = \begin{cases}
        1, &\text{ if } \,\, \at \text{ is a single atom }\\
        (k+1)^{|\Delta|}, &\text{ if } \,\, \at = \atomt \text{ for some }\Delta \subseteq \Sigma \text{ where } \Delta \neq \emptyset.     \end{cases}
\end{align}
The function $\mu_k$ is clearly additive.  
However, it is not strictly monotone: For instance, the products $\Atomt{a}\cdot\Atomt{a}$ and $\Atomt{a}\cdot\Atomt{b}$ receive the same weight, but the former represents a strict subset of the latter.
In the remainder of this subsection, we will show that $\mu_k$ is strictly monotone on reduced ideal representations. 

\vspace{-2mm}
\subparagraph{Exponential weights are needed} Before we continue with our algorithm for directedness, let us quickly remark that $\mu_k$ cannot be chosen much smaller. If the alphabet $\Sigma$ is not fixed, then $\mu_k$ can have exponential values, because $|\Delta|$ appears in the exponent. In fact, dealing with exponentially large numbers is the reason our upper bound in the general NFA case is $\AC^1$ rather than $\NL$. This raises the question of whether there is a weighting function that is strictly monotone on reduced ideal representations of length $k$ with polynomial values. This is not the case.
To see this, consider the exponential-length chain $C_\ell$ of ideals over the alphabet $\Sigma = \{a_0, a_1, \ldots, a_\ell\}$ constructed as follows.
For $i \in [1, \ell]$, let $\Sigma_i=\{a_0, \ldots, a_i\}$. Set $C_0 := (\Atomi{a_0})$.
 For each $i \in [1, \ell]$, assuming $C_{i-1} = (I_1, \ldots, I_t)$, define $C_i$ as
 \[ C_{i} := (I_1, \ldots, I_t, \Atomt{(\Sigma_{i-1})}\cdot\Atomi{a_i}\cdot I_1,  \ldots,  \Atomt{(\Sigma_{i-1})}\cdot \Atomi{a_i} \cdot I_t). \]
Clearly, the number of ideals in each chain $C_\ell$ is exponential in $\ell$. Moreover, the maximal length $k$ of ideals in $C_\ell$ is polynomial in $\ell$. Since for each $i$, $a_i$ does not appear in $C_{i-1}$, we know that (a)~the ideal representations in $C_\ell$ are reduced and (b)~the chain $C_\ell$ is strict.
Therefore, any weight function that is strictly monotone on ideal representations of length $k$ maps each ideal in $C_\ell$ to a distinct value, requiring exponentially high values.

\vspace{-2mm}
\subparagraph{Strict monotonicity}
We now prove our strict monotonicity property for $\mu_k$:
\begin{proposition}\label{prop:weightfunc}
Let $\at_1\cdots \at_n$ and $\atb_1 \cdots \atb_m$ be reduced ideal representations of ideals $I$ and $J$, respectively, with $m,n\le k$. If $I \subseteq J$ then $\mu_k(\at_1 \cdots \at_n) \leq \mu_k(\atb_1 \cdots \atb_m)$. 
Moreover, if $I \subsetneq J$, then $\mu_k(\at_1 \cdots \at_n) < \mu_k(\atb_1 \cdots \atb_m)$. 
\end{proposition}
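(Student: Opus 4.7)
The plan is to prove Proposition~\ref{prop:weightfunc} by reducing the global inclusion $I \subseteq J$ to a sequence of atom-level comparisons. First I will establish a splitting lemma: whenever $\Idl(r_1) \subseteq \Idl(r_2)$ with $r_1 = \at_1 \cdots \at_n$ and $r_2 = \atb_1 \cdots \atb_m$ both reduced, there is a factorization $r_1 = s_1 s_2 \cdots s_m$ into (possibly empty) contiguous blocks such that $\Idl(s_j) \subseteq \Idl(\atb_j)$ for each $j$. Intuitively one takes $s_1$ to be the longest prefix of $r_1$ whose ideal still lies inside $\Idl(\atb_1)$ and, exploiting reducedness of $r_2$, argues that the remaining suffix's ideal is contained in $\Idl(\atb_2 \cdots \atb_m)$, which allows an induction on $m$.

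Given the splitting, each block $s_j$ is itself reduced (as a contiguous factor of a reduced representation) and has length at most $n \le k$. The weight comparison then becomes a local check, case-splitting on the type of $\atb_j$. If $\atb_j = \Atomi{a}$ is a single atom, then $\Idl(s_j) \subseteq \{a, \varepsilon\}$ forces $s_j$ to be either empty or equal to $\Atomi{a}$ (any longer reduced sequence whose atoms stay within $\{a, \varepsilon\}$ would already produce $aa$), so $\mu_k(s_j) \le 1 = \mu_k(\atb_j)$, with equality iff $\Idl(s_j) = \Idl(\atb_j)$. If $\atb_j = \Atomt{\Delta}$, then every atom of $s_j$ has its ideal contained in $\Delta^*$ (each factor contributes its own ideal into the product by choosing $\varepsilon$ in the other factors). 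A short combinatorial argument, testing membership on words of the form $(d_1 d_2 \cdots d_l)^N$ for large $N$ where $\{d_1, \ldots, d_l\} = \Delta$, shows that if $\Idl(s_j) = \Delta^*$ then some atom of $s_j$ must be $\Atomt{\Delta}$, which by reducedness forces $s_j = \Atomt{\Delta}$; in that case $\mu_k(s_j) = (k+1)^{|\Delta|} = \mu_k(\atb_j)$. Otherwise $\Idl(s_j) \subsetneq \Delta^*$, and then every alphabet atom of $s_j$ has alphabet of size at most $|\Delta|-1$, yielding the key bound
\[ \mu_k(s_j) \;\le\; k \cdot (k+1)^{|\Delta|-1} \;<\; (k+1)^{|\Delta|} \;=\; \mu_k(\atb_j). \]

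Summing the atom-wise weak inequalities over $j$ yields $\mu_k(r_1) \le \mu_k(r_2)$, giving the weak monotonicity claim. For the strict version, $I \subsetneq J$ implies that at least one block satisfies $\Idl(s_j) \subsetneq \Idl(\atb_j)$ (otherwise all atoms would match with ideal equality, giving $I = J$), and the case analysis above then supplies a strict inequality $\mu_k(s_j) < \mu_k(\atb_j)$ at that position, which propagates to the total sum.

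The main obstacle I expect is the splitting lemma itself: verifying that the ``longest-prefix'' choice used in the induction really does leave the remaining suffix of $r_1$ with ideal inside $\Idl(\atb_2 \cdots \atb_m)$. Here reducedness of $r_2$ appears essential, since if $\atb_1$ absorbed $\atb_2$ the greedy prefix could overreach across the intended boundary. A secondary but important point is the alphabet-atom case of the weight comparison, where the bound $k \cdot (k+1)^{|\Delta|-1} < (k+1)^{|\Delta|}$ is tight enough to exploit but relies crucially on the length restriction $n \le k$ together with the combinatorial fact that any reduced representation of $\Delta^*$ whose atoms all lie inside $\Delta^*$ collapses to the single atom $\Atomt{\Delta}$.
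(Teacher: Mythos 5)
Your proposal is correct and takes essentially the same approach as the paper: your block factorization $r_1 = s_1\cdots s_m$ with $\Idl(s_j)\subseteq\Idl(\atb_j)$ is equivalent to the paper's Lemma~\ref{lemma:embedding} (the monotone embedding $f\colon[1,n]\to[1,m]$, with $s_j$ being the segment $f^{-1}(j)$), and your atom-wise case analysis — at most one atom fits inside a single atom, and the bound $k\cdot(k+1)^{|\Delta|-1}<(k+1)^{|\Delta|}$ for an alphabet atom — is exactly the paper's Claim and inequality~\eqref{eq:weightsupset}. The only variation is that you would establish the splitting lemma by a greedy longest-prefix induction rather than the paper's witness-word argument (which also resolves your alphabet-atom test words $(d_1\cdots d_l)^N$ once and for all); this is a valid alternative proof of the same lemma, and your worry that reducedness of $r_2$ is needed to prevent greedy overreach is actually unfounded — the greedy cut works without it.
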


To prove \cref{prop:weightfunc}, we use \cref{lemma:embedding},
which roughly states that inclusion of ideals behaves similarly to the
subword ordering: Inclusion is witnessed by some embedding map.
\begin{restatable}{lemma}{restatableembedding}\label{lemma:embedding}
    Let $\at_1\cdots \at_n$ and $\atb_1 \cdots \atb_m$ be representations of ideals $I$ and $J$ on $\Sigma$, respectively. If $I \subseteq J$, then there exists a function $f \colon [1, n] \to [1, m]$ such that 
    \begin{enumerate}
        \item $f(i)\leq f(i+1)$ for all $i\in [1, n-1]$,  \label{line:one}
        \item $\at_i$ is contained in $\atb_{f(i)}$ for all $i \in [1,n]$, \label{line:two}
        \item if $\atb_j$ is a single atom, then $|f^{-1}(j)|\leq 1$ \label{line:three}
      %  \item if $f(i) = j$ and $B_j$ is a single atom, then $A_i$ is also a single atom.\ISinside{is this necessary?}
    \end{enumerate}
\end{restatable}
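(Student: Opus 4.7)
The plan is to produce a sufficiently long witness word $u \in I$, push it through the inclusion $I \subseteq J$ to obtain a factorisation $u = v_1 \cdots v_m$ with $v_j \in \Idl(\atb_j)$, and read off $f$ from how the positions of $u$ distribute among the $v_j$'s. For each $i \in [1,n]$ I would choose a characteristic word $u_i \in \Idl(\at_i)$: set $u_i := a$ if $\at_i = \Atomi{a}$ is a single atom, and $u_i := w_i^N$ if $\at_i = \Atomt{\Delta_i}$ is an alphabet atom, where $w_i$ enumerates $\Delta_i$ and $N := m+1$. Then $u := u_1 \cdots u_n$ lies in $I \subseteq J$, so it has a factorisation $u = v_1 \cdots v_m$ with $v_j \in \Idl(\atb_j)$. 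For each position $k$ of $u$, let $q(k) \in [1,n]$ and $p(k) \in [1,m]$ be the indices with $k$ inside $u_{q(k)}$ and $v_{p(k)}$ respectively; both are non-decreasing.

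I would then define $f(i)$ as $p(k)$ for the unique position $k$ of $u_i$ when $\at_i$ is a single atom, and as the smallest index $j$ for which $v_j$ contains at least $|\Delta_i|$ positions of $u_i$ when $\at_i = \Atomt{\Delta_i}$ is an alphabet atom. Existence of such a $j$ is the main combinatorial step: among the at most $m$ indices $j$ with $v_j$ intersecting $u_i$, single-atom $\atb_j$'s absorb at most one letter of $u_i$ each, so because $|u_i| \ge N > m$ at least one alphabet-atom $\atb_j$ intersects $u_i$; a careful pigeonhole, tracking that single-atom $\atb_j$'s together hold at most $m$ letters of $u_i$, then forces some alphabet-atom $v_j$ to contain at least $|\Delta_i|$ \emph{consecutive} letters of $u_i = w_i^N$. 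Any window of length $|\Delta_i|$ in $w_i^N$ contains every letter of $\Delta_i$, so $v_j \in \Idl(\atb_j)$ forces $\atb_j = \Atomt{\Delta'}$ with $\Delta_i \subseteq \Delta'$, i.e., $\atb_j$ contains $\at_i$.

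The three properties then follow directly. Monotonicity is immediate since $f(i) \in [p(k_i^-), p(k_i^+)]$ with $k_i^\pm$ the first and last positions of $u_i$, and these intervals are ordered along $u$. Containment $\at_i \subseteq \atb_{f(i)}$ holds for single atoms because $v_{f(i)}$ has to permit the letter $a$, and for alphabet atoms by the construction of $f$. For the uniqueness of preimage, a single atom $\atb_j = \Atomi{b}$ gives $|v_j| \le 1$, so no alphabet atom $\at_i$ can satisfy $f(i) = j$ (as $f$ routes alphabet atoms only to alphabet-atom $\atb_j$'s), and at most one single atom $\at_i$ can, since the sole letter of $v_j$ belongs to a unique $u_i$.

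The main obstacle is a subtle edge case that necessitates this care in defining $f$: a naive definition would allow an alphabet atom $\at_i = \Atomt{\{b\}}$ to be mapped to a matching single atom $\atb_j = \Atomi{b}$, which does \emph{not} contain it because $\{b\}^* \not\subseteq \{b,\varepsilon\}$. Forcing $f$ to target an alphabet-atom $\atb_j$ for every alphabet atom $\at_i$, and choosing $N = m+1$ so that at least one alphabet-atom $\atb_j$ always lies in the range of indices touched by $u_i$, resolves the issue even in the degenerate case $|\Delta_i| = 1$.
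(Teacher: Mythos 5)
Your proof is correct and follows essentially the same strategy as the paper: both build the same witness word from $m+1$ copies of an enumeration of each alphabet atom, push it through $I \subseteq J$, and extract $f$ via a pigeonhole. The paper's version defines $f$ greedily via shortest prefixes (rather than fixing a factorisation $u = v_1 \cdots v_m$) and hunts for a window of length $|\Delta|+1$ rather than $|\Delta|$ — a window of length at least $2$ automatically forces $\atb_j$ to be an alphabet atom, neatly sidestepping the $|\Delta_i|=1$ edge case you had to patch by restricting $f$ to alphabet-atom targets; these are cosmetic differences.
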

%\vspace{-0.5cm}
\begin{proof}[Proof sketch] For each atom $\at$, we generate a unique word $\wordatom{\at}$. If $\at =\atomi$, then $\wordatom{\at} = a$. 
If $\at = \atomt$, then we fix an order on $\Sigma$ and for each $\Delta \subseteq \Sigma$ let $\worddelta{\Delta}$ be a word that contains each letter in $\Delta$ once, in the increasing order and set $\wordatom{\at} = \worddelta{\Delta}^{m+1}$. 
We define $f$ so that it sends each $i$ to the $j$ for which $\atb_1 \cdots \atb_j$ is the shortest prefix for which $\wordatom{\at_1} \cdots \wordatom{\at_i} \in \Idl(\atb_1 \cdots \atb_j)$. Then $f$ satisfies the premises of~\cref{line:one}-\ref{line:three}. Details can be found in~\cref{app:sec:proofsofsecsolutionregular}.
\end{proof}

\begin{proof}[Proof of \cref{prop:weightfunc}]
 For the given ideal representations $\at_1 \cdots \at_n$ and $\atb_1 \cdots \atb_m$, let $f\colon [1,n]\to [1,m]$ be the embedding function introduced in~\cref{lemma:embedding}.
    \begin{claim*}
    For all $j \in [1,m]$, $\quad\mu_k(\atb_j) \geq \sum_{i\in f^{-1}(j)} \mu_k(\at_i)$
    \end{claim*}
    \begin{claimproof}[Proof of claim]
        If $\atb_j$ is a single atom, then by~\cref{line:three}, there exists at most one $\at_i$ embedded in $\atb_j$ and by~\cref{line:two}, $\atb_j$ contains $\at_i$; thus $\at_i = \atb_j$ is a single atom. %(and in fact, $\at_i= \atb_j$). 
        In this case, $\mu_k(\atb_j) = \mu_k(\at_i) = 1$.
        Otherwise, $\atb_j = \atomt$. By~\cref{line:one}, the elements of $f^{-1}(j)$ have to be consecutive numbers, i.e. $f^{-1}(j) = [i_1, i_2]$. Then $f$ embeds $\at_{i_1}, \ldots, \at_{i_2}$ in $\atb_j$. Since $\at_1\cdots \at_n$ is reduced, each pair $\at_i$, $\at_{i+1}$ is non-absorptive. Since they are all contained in $\atb_j$, either $|f^{-1}(j)|=1$, or for all $i \in [i_1, i_2]$, $|\at_i| < |\atb_j|$ where $|\at_i| = 0$ if $\at_i$ is a single atom; otherwise it is the size of the alphabet of $\at_i$. %IRMAK: uncmment if we have space :This is because otherwise one $\at_i$ would have to be equal to $\atb_j$, and thus would contain all $\at_{i_1}\cdots \at_{i_2}$ and $f^{-1}(j)$ would have size one. 
        In the case $|f^{-1}(j)| = 1$, since the only atom in $f^{-1}(j)$ is contained in $\atb_j$, the claim trivially holds. In the latter case, the inequality~\eqref{eq:weightsupset} 
        \begin{equation}\label{eq:weightsupset}
            \sum_{i\in f^{-1}(j)} \mu_k(\at_i) \leq n\cdot {(k+1)}^{|\Delta|-1} < (k+1)^{|\Delta|} = \mu_k(\atb_j)
        \end{equation} 
        follows from the fact that $f$ can embed at most $n$ many atoms into $\atb_j$ and that $n \leq k$.   
    \end{claimproof}

    $\mu_k(\atb_1\cdots \atb_m) \geq \mu_k(\at_1\cdots \at_n)$ follows from the claim due to the weight of an ideal representation being defined additively. %and $f$ embeds each atom of $\at_1\cdots \at_n$ into an atom of $\atb_1\cdots \atb_m$.
     This concludes the first part of the proof.

    Assume $I \subsetneq J$.  Then there exists an $\at_i$ strictly contained in $\atb_{f(i)}$,
    or $f$ is not surjective. In the latter case, equation~\eqref{eq:strictineq} follows from our previous argument.  
    \begin{equation}\label{eq:strictineq}
    \mu_k(\at_1\cdots \at_n) < \mu_k(\atb_1 \cdots \atb_m)
    \end{equation}
    In the former case, $\atb_{f(i)}$ is an alphabet atom, otherwise it cannot strictly contain $\at_i$. 
    If $\at_i$ is a single atom, \eqref{eq:strictineq} follows from $\mu_k(\at_i)=1$. 
    If it is an alphabet atom, \eqref{eq:strictineq} follows from~\eqref{eq:weightsupset}.
\end{proof}

%\MG{It would be good if you could avoid ``by construction'' arguments so that the reader can basically
%forget the proof of the previous theorem and just work with its statement.}\IS{Done for the proof above.}
\subsection{Reducing ideals}\label{subsec:reducingideals}
We will apply the weighting function with $k$ being an upper bound on the path length in $\NFAidl$ (e.g.\ the number of states). We have seen that the weighting function is strictly monotone on \emph{reduced} ideal representations. 
Therefore, if all paths in $\NFAidl$ had reduced ideals, we could prove \cref{nfa-ideal-candidate} by picking the path with the largest weight. This is because, if $I_1,\ldots,I_n$ are the ideals on paths of $\NFAidl$ and $I_m$ has maximal $\mu_k$ among them, then \cref{prop:weightfunc} implies that $L(\NFA)$ is directed if and only if $L(\NFA)\subseteq I_m$: Here, the ``if'' is trivial. Conversely, if $L(\NFA)$ is directed, then $\dcl{L(\NFA)}=I_1\cup\cdots\cup I_n$ is an ideal and hence $I_1\cup\cdots\cup I_n=I_i$ for some $i$ by \cref{ideals-basic-facts}. But then we must have $I_i=I_m$, because $I_m\subseteq I_i$ by the choice of $I_i$, and if $I_i$ were a strict superset of $I_m$, $\mu_k(I_m)$ would not be maximal. Thus, $L(\NFA)\subseteq I_m$.

Thus, our next task is to transform $\NFAidl$ so as to make all ideal representations reduced:
\begin{lemma}\label{lemma:reducedidealNFA}
	Given any partially ordered NFA $\NFAscc$ on the finite alphabet $\Sigma$,
	one can compute in $\NL$ an acyclic NFA $\NFAred$ over some polynomial-sized alphabet $\Gamma \subseteq \atoms{\Sigma}$ such that 
	%\begin{equation}\label{eq:Lidealdecomp}
    $\Idl(L(\NFAred)) = \dcl{L(\NFAscc)}$
%\end{equation}
%Moreover, $\NFAred$ is acyclic except for the self loop $\qfinalred\xrightarrow{\emp} \qfinalred$ on its unique final state $\qfinalred$ 
and the ideal representations $\NFAred$ accepts are reduced.
\end{lemma}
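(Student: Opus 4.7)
The plan is to construct $\NFAred$ as a product of $\NFAidl$ with a constant-width memory that records two things: (i)~the \emph{last emitted atom}, i.e., the last atom written to the output word of $\NFAred$, and (ii)~a \emph{pending absorber guess}, i.e., an atom that the computation has committed to emit next in order to absorb previously skipped atoms. Concretely, I would take $\Qred = \Qidl \times (\Gamma \cup \{\bot\}) \times (\Gamma \cup \{\star\})$ with initial state $(\qinitidl, \bot, \star)$ and accepting states $\{(q, \at, \star) : q \in \Fidl,\, \at \in \Gamma \cup \{\bot\}\}$; the requirement $\gamma = \star$ at acceptance forces every pending absorber promise to have been fulfilled. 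Since $|\Gamma|$ is polynomial by \cref{lemma:idealNFA}, the state space is polynomial, and the transitions below are local, so $\NFAred$ is computable in $\NL$.

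For each transition $p \xrightarrow{\atb} q$ of $\NFAidl$ I would add three kinds of derived transitions in $\NFAred$. An \emph{emission} $(p, \at, \gamma) \xrightarrow{\atb} (q, \atb, \star)$ is allowed when $\at = \bot$ or $\at$ and $\atb$ are non-absorptive, and either $\gamma = \star$ or $\gamma = \atb$; this is the only way to produce output, the non-absorption check enforces reducedness, and the constraint on $\gamma$ discharges any pending guess. A \emph{left-skip} $(p, \at, \gamma) \xrightarrow{\varepsilon} (q, \at, \gamma)$ is allowed when $\at$ absorbs $\atb$, modelling that $\atb$ is absorbed by the most recently emitted atom. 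A \emph{right-skip} $(p, \at, \gamma') \xrightarrow{\varepsilon} (q, \at, \gamma)$ is allowed when $\gamma \in \Gamma$, $\gamma$ absorbs $\atb$, and $\gamma' \in \{\star, \gamma\}$; this non-deterministically installs or preserves a future absorber $\gamma$ for $\atb$.

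For correctness, every accepted output is reduced by construction, since each emission checks non-absorption with the previous one; moreover every skipped atom is absorbed either by a preceding emission (left-skip) or by a subsequent one (right-skip, whose promise must be fulfilled because of the $\gamma = \star$ acceptance condition), so the ideal of the output equals the ideal of the underlying $\NFAidl$-path. Conversely, for every path in $\NFAidl$ with ideal representation $\at_1 \cdots \at_n$ I would pick a reduced form $\gamma_1 \cdots \gamma_k$ via \cref{thm:existsnormalform} and simulate it on $\NFAred$: each $\gamma_j$ is produced by an emission, and each $\at_i$ removed by the reduction is handled by a left-skip when absorbed by the most recent $\gamma$ and by a right-skip with guess equal to the next surviving $\gamma$ otherwise. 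This yields an accepting path whose output is $\gamma_1 \cdots \gamma_k$, establishing $\Idl(L(\NFAred)) = \dcl{L(\NFAscc)}$.

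Acyclicity of $\NFAred$ is inherited from $\NFAidl$, since every transition of $\NFAred$ projects to a forward transition of $\NFAidl$. The main obstacle, and the reason for the extra guess component of the memory, is right-absorption: merely tracking the last emitted atom is not enough, because an atom may need to be deferred since a later, not-yet-seen atom will absorb it. Committing to that future absorber at the moment of skipping, and forcing it to be emitted eventually, keeps the construction sound while avoiding an exponential blow-up of the state space, and all the conditions checked reduce to local absorption tests on pairs of atoms from the polynomially-sized alphabet $\Gamma$, which can be carried out in logarithmic space.
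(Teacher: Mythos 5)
Your construction is a genuinely different route from the paper's: instead of composing a left-reducing transducer with a right-reducing one (and proving that the second application preserves what the first achieved, as the paper does in \cref{lem:right-normality}), you build a single product automaton $\Qidl \times (\Gamma\cup\{\bot\}) \times (\Gamma\cup\{\star\})$ where the extra components record the last emitted atom and a promised future absorber. The construction itself is correct and the size/complexity claims are fine, but there are two places where the argument as written does not go through.

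First, the soundness claim ``so the ideal of the output equals the ideal of the underlying $\NFAidl$-path'' is false. Take the already-reduced input $\Atomt{\{a\}}\cdot\Atomi{b}\cdot\Atomi{a}\cdot\Atomt{\{b\}}$. Your automaton admits the run: emit $\Atomt{\{a\}}$, right-skip $\Atomi{b}$ with guess $\gamma=\Atomt{\{b\}}$ (which absorbs $\Atomi{b}$), left-skip $\Atomi{a}$ (absorbed by $\Atomt{\{a\}}$), emit $\Atomt{\{b\}}$ discharging the guess. The output $\Atomt{\{a\}}\cdot\Atomt{\{b\}}$ is reduced and accepted, but $\Idl(\Atomt{\{a\}}\Atomt{\{b\}}) = a^*b^*$ is a strict subset of $\Idl(\Atomt{\{a\}}\Atomi{b}\Atomi{a}\Atomt{\{b\}})$, which contains $ba$. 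The problem is that a skipped atom is ``absorbed by'' its (possibly distant) absorber while non-absorbed atoms sit between them, so the equality-preserving merge argument does not apply. Fortunately this only means your automaton accepts some extra, strictly smaller reduced ideals; the inclusion $\Idl(L(\NFAred))\subseteq\dcl{L(\NFAscc)}$ that you actually need is immediate because every output is a subsequence of the input path's ideal representation. You should replace the overclaim by this weaker observation.

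Second, the completeness direction silently relies on the claim that, between two consecutive survivors $\gamma_j$ and $\gamma_{j+1}$ of a reduced form, every removed atom is absorbed by $\gamma_j$ or by $\gamma_{j+1}$ — so that a prefix can be handled by left-skips and a suffix by right-skips with guess exactly $\gamma_{j+1}$. This is true but needs an argument: it follows by applying \cref{lemma:embedding} to the equality $\Idl(\gamma_j\cdots\gamma_{j+1}) = \Idl(\gamma_j\,\gamma_{j+1})$, using monotonicity of $f$ and the facts that $f$ must fix the endpoints (else $\gamma_j,\gamma_{j+1}$ would be absorptive, contradicting reducedness), that containment in an alphabet atom implies absorption, and that single-atom targets have at most one preimage. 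Without that step the simulation you describe is not justified. With both fixes the lemma does follow, and compared to the paper your single-pass automaton avoids the two-transducer composition and the auxiliary right-reducedness preservation lemma, at the price of accepting some non-maximal reduced representations — which is harmless for every downstream use since the weight function picks the maximum anyway.
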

%: $L(\NFA)$ is directed with $\dcl{L(\NFA)}$ being an ideal $I$, then the ideals $I_1,\ldots,I_n$ on the paths of $\NFAidl$ satisfy $I_1\cup\cdots\cup I_n=I$ and thus $I=I_i$ for some $i$. This implies that $I_i$ must have maximal weight among all ideal representations for $I_1,\ldots,I_n$.
%
%This means we cannot use it on paths in $\NFAidl$--their ideal representations are not reduced. 
%
%
%Therefore, 
%Imagine we had an NFA $\NFAred$ that accepts reduced representations of an ideal decomposition of $\dcl{L(\NFA)}$. Also assume $\dcl{L(\NFA)}$ is directed and $\Lbf = \Idl(A_1 \ldots A_n)$ is the ideal in its decomposition that is equivalent to it. Then naturally $\Lbf$ contains all the other ideals in the decomposition.
%By~\cref{prop:weightfunc}, $A_1 \ldots A_n$ has the maximum weight amongst the representations in $L(\NFAred)$. 
%Thus we can calculate the ideal representation with the maximum weight in $L(\NFAred)$ and check whether it contains $\dcl{L(\NFA)}$. If it does, we conclude that $\dcl{L(\NFA)}$ is directed and otherwise, that it is not. 
%
%That's why we will construct such an $\NFAred$ by reducing the ideal representations in $L(\NFAidl)$.
%This is indeed the road we will take however, $\NFAidl$ accepts arbitrary representations. For this reason, we first obtain an NFA that accepts reduced versions of ideal representations in $L(\NFAidl)$.
Reducing an individual ideal representation is easy: repeatedly merge consecutive atoms, as briefly sketched in~\cref{thm:existsnormalform}. 
Reducing \emph{all} ideal representations accepted by an NFA at the same time is not obvious: For example, we cannot just merge two transitions $p\xrightarrow{\Atomt{\{a\}}}q\xrightarrow{\Atomi{a}}r$, since each of them might be needed for other paths. We achieve this using transducers. 

\vspace{-2mm}
\subparagraph{Transducers} A transducer is a tuple $\tr = \langle \trQ, \trAlph^i, \trAlph^o, \trinit, \trF, \trE \rangle $ where $\trQ$ is a finite set of states, 
$\trAlph^i$ and $\trAlph^o$ are finite (input and output) alphabets, $\trinit \in \trQ$ is the initial state, $\trF \subseteq \trQ$ is the set of final states and $E \subseteq \trQ \times (\trAlph^i \cup \{\emp\})\times (\trAlph^o \cup \{\emp\}) \times \trQ$ is the transition relation.
Each transition, reads a letter (or $\emp$) from the input alphabet, writes a letter (or $\emp$) from the output alphabet and moves to a new state.
A sequence $r = (q_1, a_1, b_1, q_2) (q_2, a_2, b_2, q_3) \cdots (q_{m}, a_{m}, b_m, q_{m+1} )$ is called a run of $\tr$ if each $(q_i, a_i, b_i, q_{i+1})$ is in $\trE$ for $i \in [1, m]$ with $\trinit = q_1$ and $q_{m+1} \in \trF$.
For such a run, let the projection of the transitions to the input (similarly, output) alphabet be denoted by $\inp(r)$ (similarly, $\out(r)$). That is, for the $r$ defined above 
$\inp(r)$ is the subword of $a_1 a_2 \ldots a_m$ and $\out(r)$ is the subword of $b_1 b_2 \ldots b_m$.
Then, the set of $(\inp(r), \out(r))$ over runs $r$ of $\tr$ is called the language of $\tr$, is denoted by $L(\tr)$, and defines a rational relation on $\trAlph^i \times \trAlph^o$. For a set $Y \subseteq (\trAlph^i)^*$, $\tr(Y)$ denotes the set of words $\tr$ outputs upon a run from $Y$, i.e. $\tr(Y)= \{b \mid a \in Y \text{ and } (a, b) \in L(\tr)\}$.

Composition of two transducers is again a transducer~\cite{Shallitbook2008,Berstel1979}. %[Theorem 3.5.6].%\cite{ElgotMezei65}. 
%and that if $X \subseteq (\trAlph^i)^*$ is context-free, then $\tr(X)$ is again context-free~\cite{Berstel1979}.

\vspace{-2mm}
\subparagraph{Left- and right-reduced representations}
We will later define two transducers $\trl$ and $\trr$ the composition of which will take an ideal representation $\at_1 \cdots \at_n$ produced by $\NFAidl$ and return an equivalent reduced ideal representation $\atb_1 \cdots \atb_m$ with $m \leq n$.
In particular, $\trl$ will turn $\at_1 \cdots \at_n$ into an equivalent ideal representation that is \emph{left-reduced}, and $\trr$ will turn it into an equivalent ideal representation that is \emph{right-reduced}. \emph{Left-} and \emph{right-reducedness} are defined as follows.
An ideal representation $\at_1\cdots \at_n$ is called \emph{left-reduced} if for all $i \in [1, n-1]$, $\at_i$ does not absorb $\at_{i+1}$. 
Similarly, it is called \emph{right-reduced} if $\at_{i+1}$ does not absorb $\at_{i}$. 
%$A_{i}$ does not contain $A_{i+1}$. Similarly, it is called \emph{right-reduced} if $A_{i+1}$ does not contain $A_i$.
%\MG{According to \cref{defn:atom-containment}, each atom contains itself, so $\{a,\varepsilon\}\{a,\varepsilon\}$ would not be reduced.}\IS{I changed the definition accordingly but it's become a bit ugly}
Clearly, if a representation is both left- and right-reduced, then it is reduced.% This is what we achieve by combining $\trl$ and $\trr$.

\vspace{-2mm}
\subparagraph{Building the transducers}
Intuitively, the transducer $\trl$ scans an ideal representation and after reading its first alphabet atom $\Atomt{\Delta}$, it outputs $\Atomt{\Delta}$, but then skips (i.e.\ reads without producing output) all atoms that are absorbed by $\Atomt{\Delta}$.
Formally, we have $\trl = \langle \ltrQ, \ltrAlph^i, \ltrAlph^o, \ltrinit, \ltrF, \ltrE \rangle$. $\ltrQ$ contains a state corresponding to each alphabet atom in $\Gamma\subseteq\atoms{\Sigma}$ (as given in~\cref{lemma:idealNFA}), with a new initial state $\ltrinit$ and a state $t_1$ for all of the single atoms.
We set $\ltrAlph^i =  \ltrAlph^o = \Gamma$. Let $\Phi$ be a mapping from $\Gamma$ to $\ltrQ$, which sends each alphabet atom to its corresponding state, and each single atom to $t_1$.
$\ltrF = \ltrQ \setminus \{\ltrinit\}$ and $\ltrE$ is defined as follows;
\begin{itemize}
    \item For all $\atom \in \Gamma$, $(\ltrinit, \atom, \atom, \Phi(\atom))$ and $(t_1, \atom, \atom, \Phi(\atom))$ are in $\ltrE$,
    \item For each $t \in \ltrQ \setminus \{\ltrinit, t_1\}$ and each atom $\atom \in \Gamma$, if $t$ (as an alphabet atom) does not absorb $\alpha$ (as an atom),
         then $(t, \atom, \atom, \Phi(\atom))$ is in $\ltrE$. 
    \item For each $t \in \ltrQ \setminus \{\ltrinit, t_1\}$ and each atom $\atom \in \Gamma$,  if $t$ absorbs $\alpha$,
        then $(t, \atom, \emp, t)$ is in $\ltrE$. 
\end{itemize}

It is easy to see that for an ideal representation $\at_1 \cdots \at_n$, $\trl(\at_1 \cdots \at_n)$ is left-reduced and represents the same ideal.
% Note that we do not need separate states for single atoms, since single atoms do not absorb any other atom.
Clearly the size of $\ltrQ$, as well as the sizes of the alphabets is polynomial. 
Furthermore, for a word $w$ and all $(\inp(w), \out(w))$, $|\out(w)| \leq |\inp(w)|$.

Reversing the edges and flipping the initial and final states of $\trl$ we obtain the reverse transducer $\trr$.
It can be inductively shown that for any ideal representation $\at_1 \cdots \at_n$, $\trr(\at_1 \cdots \at_n)$ is right-reduced. To show \cref{lemma:reducedidealNFA}, we will apply the composition $\trl\circ\trr$ to $L(\NFAidl)$. Here, we need to show that applying $\trl$ after $\trr$ does not spoil right-reducedness:

\begin{lemma}\label{lem:right-normality}
If $\at_1 \cdots \at_n$ is right-reduced, then $\trl(\at_1 \cdots \at_n)$ is also right-reduced.
\end{lemma}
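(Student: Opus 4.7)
The plan is to analyze the structure of the output $\atb_1 \cdots \atb_m := \trl(\at_1 \cdots \at_n)$ explicitly and then directly verify the right-reducedness condition pairwise. By inspecting the transducer, there exist indices $1 = i_1 < i_2 < \cdots < i_m \le n$ with $\atb_j = \at_{i_j}$, governed by the following rule: if $\at_{i_j}$ is a single atom, then $i_{j+1} = i_j + 1$ (since $\trl$ cannot loop on state $t_1$); if $\at_{i_j} = \Atomt{\Delta}$ is an alphabet atom, then every atom $\at_k$ with $i_j < k < i_{j+1}$ is absorbed by $\Atomt{\Delta}$, while $\at_{i_{j+1}}$ is not. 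I would first prove this characterization by a short induction on the run of $\trl$.

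Next, to establish that $\atb_{j+1}$ does not absorb $\atb_j$ for every $j \in [1, m-1]$, I would case-split on the types of $\atb_j$ and $\atb_{j+1}$. If $\atb_{j+1}$ is a single atom, it absorbs nothing and the conclusion is immediate. If $\atb_j$ is a single atom, then $i_{j+1} = i_j + 1$, so right-reducedness of the input $\at_1 \cdots \at_n$ directly gives that $\at_{i_{j+1}}$ does not absorb $\at_{i_j}$.

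The only substantive case is $\atb_j = \Atomt{\Delta}$ and $\atb_{j+1} = \Atomt{\Delta'}$, where I need to exhibit a letter in $\Delta \setminus \Delta'$. Here I would use $\at_{i_{j+1}-1}$ as a witness: by the structural description above, this atom either equals $\at_{i_j} = \Atomt{\Delta}$ (when $i_{j+1} = i_j + 1$) or is an atom absorbed by $\Atomt{\Delta}$. Combining this with the hypothesis that $\at_1 \cdots \at_n$ is right-reduced, so $\Atomt{\Delta'}$ does not absorb $\at_{i_{j+1}-1}$, yields the desired letter. Concretely, if $\at_{i_{j+1}-1} = \Atomi{b}$ then $b \in \Delta$ and $b \notin \Delta'$; if $\at_{i_{j+1}-1} = \Atomt{\Delta''}$ then $\Delta'' \subseteq \Delta$ while $\Delta'' \not\subseteq \Delta'$, so any letter in $\Delta'' \setminus \Delta'$ lies in $\Delta \setminus \Delta'$.

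The main obstacle, and really the only non-mechanical step, is the bookkeeping needed to justify the explicit description of the indices $i_1, \ldots, i_m$ and the claim that $\at_{i_{j+1}-1}$ is either $\at_{i_j}$ or one of the absorbed intermediate atoms. Once that characterization is pinned down, the case analysis collapses into three lines of set-containment reasoning.
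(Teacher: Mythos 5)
Your proof is correct and takes essentially the same route as the paper: both describe the output as a subword $\at_{i_1}\cdots\at_{i_m}$ of the input whose gaps consist of atoms absorbed by the preceding output atom, and both use $\at_{i_{j+1}-1}$ as the bridge --- it is absorbed by (or equals) $\at_{i_j}$, and right-reducedness of the input prevents $\at_{i_{j+1}}$ from absorbing it, hence from absorbing $\at_{i_j}$. Your case analysis merely unpacks the paper's abstract absorption argument into explicit letter-level witnesses.
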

\begin{proof}
    Since $\at_1 \cdots \at_n$ is right-reduced, for $i \in [1,n]$, $\at_{i+1}$ does not absorb $\at_i$.
    By construction, $\ltr(\at_1 \cdots \at_n)$ is a subword of $\at_1 \cdots \at_n$, say $\at_{i_1} \cdots \at_{i_k}$. %with $i_1 < i_2 < \ldots <i_k \in [1,n]$.
    We show that for all $j \in [1, k]$, $\at_{i_{j+1}}$ does not absorb $\at_{i_j}$. 
    Let $i_j = k$ and $i_{j+1} = k'$. If $k' = k+1$, then the claim follows from the right-reducedness of $\at_1 \ldots \at_n$.
    Otherwise, $\at_{k}$ absorbs all atoms between itself and $\at_{k'}$. In particular it absorbs $\at_{k'-1}$. Since $\at_{k'}$ does not absorb $\at_{k'-1}$, 
     it cannot absorb $\at_{k}$. %That is, $\ltr(\at_1 \cdots \at_n)$ is right-reduced.
\end{proof}

Thus, by applying $\trl\circ\trr$ to $L(\NFAidl)$, we obtain an NFA that reads ideal representations of the same set of ideals (i.e.~$\dcl{L(\NFA)}$) and every ideal representation is reduced. The resulting NFA $\NFAred$ can be computed in $\NL$. For details of the construction, see~\cref{corollary:complexitycompositiontransducer}.

\subsection{Deciding directedness}
We now present our algorithms to decide directedness for a given NFA. We first
complete the proof of \cref{nfa-ideal-candidate}.  For this, in light of
\cref{lemma:reducedidealNFA} and \cref{prop:weightfunc}, it remains to compute a
path of $\NFAred$ of maximal weight.

\vspace{-2mm}
\subparagraph{Computing the maximal weight.} It is well known that the maximum weight path problem
can be reduced to matrix multiplication over the max-plus semiring~\cite[Lemma~5.11]{bookAHU74}.
This yields $\AC^1$ (resp.\ $\NL$) algorithms for binary (unary) encoded weights~\cite{DBLP:journals/iandc/Cook85}.
For completeness sake we provide a proof of this fact.
Let $\NFAred = (\Qred, \Gamma, \deltared, \qinitred, \Fred)$ be the NFA from
\cref{lemma:reducedidealNFA}.  We may assume $\NFAred$ has a unique final state
$\qfinalred$, is acyclic except for an $\emp$ self-loop in $\qfinalred$, and between each pair of states, there is at most one edge. 
The goal is to compute, for each state $q$, the maximal weight of any path starting in $q$.

Let $\maxlength = |\Qred|$. Since $\NFAred$ is acyclic apart from the self-loop on $\qfinalred$, any ideal representation accepted by $\NFAred$ has length $\leq \maxlength$. 
%Then it follows from~\cref{prop:weightfunc} that if the ideal represented by a reduced representation $A_1 \ldots A_n \in L(\NFAred)$ contains $\dcl{L(\NFA)}$, it has the maximum $\maxlength$-weight amongst the ideals represented in $L(\NFAred)$.
Observe that due to the $\emp$-loop on $\qfinalred$, every ideal $\at_1 \cdots \at_n \in L(\NFAred)$ is read on some path of length \emph{exactly} $\maxlength$. %since $\NFAred$ is acyclic apart from the self loop on $\qfinalred$, 
We want to find an accepted path with the maximum summation of $\maxlength$-weights of each transition (recall that $\mu_\maxlength(\emp) = 0$).
To do so, we fix an order $\{q_1, \ldots, q_\maxlength\}$ on the states of $\Qred$ such that $q_1 := \qinitred$ and $q_\maxlength:= \qfinalred$ and construct a $\maxlength\times \maxlength$-matrix $\mymatrix$, the elements of which takes values from the \textit{max-plus semiring} $(\mathbb{N}\cup \{-\infty\},+, \max, 0, -\infty)$.
For each $i,j\in[1,\maxlength]$, we set $\mymatrix(i,j)$ to
 (i) $\mu_\maxlength(x)$, if there is an edge $(q_i, x, q_j) \in \deltared$ with $x\in\atoms{\Sigma} \cup\{\emp\}$,
  (ii) $-\infty$, otherwise.
%\begin{align*}
%\mymatrix(i,j) = \begin{cases}
% \mu_\maxlength(x) &\text{if there is an edge $(q_i, x, q_j) \in \deltared$ with $x\in\atoms{\Sigma} \cup\{\emp\}$}\\
% -\infty  &\text{otherwise}
%\end{cases}
%\end{align*}
We can now apply the standard fact from weighted automata that for every $n\ge
0$, in the matrix power $\mymatrix^n$, the entry $(i,j)$ is the maximum weight
of all paths of length exactly $n$ from $q_i$ to $q_j$~\cite{bookAHU74}.
Therefore, the largest weight among all paths from $q_s$ to $q_\maxlength$ is
the entry $(s,\maxlength)$ in the matrix power $\mymatrix^\maxlength$.  A
single matrix product can be computed in $\AC^0$ since binary addition and the
maximum of multiple numbers can be computed in $\AC^0$.  Moreover, for $n$
given in unary, a matrix power $\mymatrix^n$ can be computed in $\AC^1$ by
repeated squaring: One writes $n=\sum_{i=0}^\ell b_i2^i$ with
$b_0,\ldots,b_\ell\in\{0,1\}$ and computes $\mymatrix'_{0}=\mymatrix^{b_\ell}$,
$\mymatrix'_{i}=(\mymatrix'_{i-1})^2\cdot\mymatrix^{b_{\ell-i}}$, yielding
$\mymatrix^n=\mymatrix'_\ell$.  Thus, by applying the (constant depth) $\AC^0$
circuit for matrix multiplication $\ell\in O(\log n)$ times, we obtain a
circuit of logarithmic depth.  In particular, we can compute
$\mymatrix^\maxlength$, and hence the maximal path weights
$\mymatrix^\maxlength(s,\maxlength)$, in $\AC^1$.

In case the alphabet is fixed, we compute the maximal weights in $\NL$. Observe
that in this case, all weights $\mu_\maxlength(\at)$ for atoms
$\at\in\atoms{\Sigma}$  have $\mu_\maxlength(\at)\le (\maxlength+1)^{|\Sigma|}$,
which is polynomial as $|\Sigma|$ is constant.  In particular, all the maximal
path weights from $q_s$ to $q_\maxlength$, are
bounded by $\maxlength\cdot(\maxlength+1)^{|\Sigma|}$ and can be stored in
logarithmic space.  Thus we can proceed as follows.  Given $s\in[1,m]$,
for every $\ell=\maxlength\cdot(\maxlength+1)^{|\Sigma|},\ldots,0$, we
decide in $\NL$ whether there exists a path of weight $\ell$ from $q_s$ to
$q_\maxlength$. If so, then $\ell$ is the maximal weight. Since $\NL=\coNL$, we
can also determine the non-existence of such a path and continue with $\ell-1$.

\vspace{-2mm}

\begin{figure}
    \centering
    \begin{tikzpicture}
        \node (image) at (0,0) {\includegraphics[width=\linewidth]{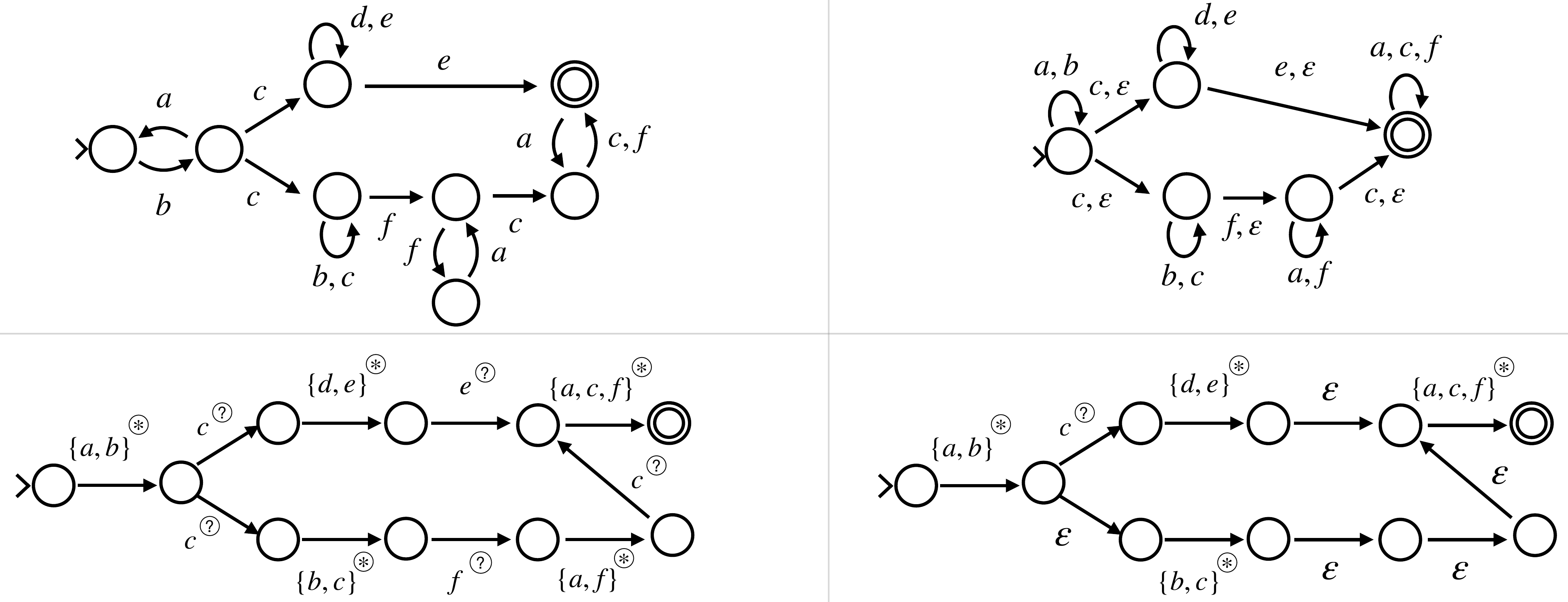}};
        \node at (-6.8,2.2) {\Large{$\NFA$}};
        \node at (1,2.2) {\Large{$\NFAscc$}}; % Adjust the coordinates (1,1.5) to position the character
        \node at (-6.75,-0.73) {\Large{$\NFAidl$}}; % Adjust the coordinates (1,1.5) to position the character
        \node at (1.1,-0.73) {\Large{$\NFAred$}}; % Adjust the coordinates (1,1.5) to position the character
    \end{tikzpicture}
    \caption{Initial automaton $\NFA$ on alphabet $\{a,b,c,d,e,f\}$ and the corresponding $\NFAscc$, $\NFAidl$ and $\NFAred$ are depicted. The initial states of the automata are marked with a half arrow sign and the final states are encircled.
    Since $\NFAred$ has $10$ states all ideal representations in $L(\NFAred)$ contain $\leq 10$ atoms. Therefore, $m$ is set to $10$ and we calculate the maximal $10$-weight ideal, which is 
    $I = \Idl(\{a, b\}^{\circstr} \cdot c^{\circqm} \cdot\{d, e\}^{\circstr} \cdot\{a,c,f\}^{\circstr})$ with the $10$-weight ${11}^3 + 2 \cdot {11}^2 +1$. $L(\NFA) \downarrow \not \subseteq I$ witnessed by $cb \in L(\NFA) \downarrow \setminus I$; proving that $L(\NFA)$ is not directed.}
    \label{fig:running-ex}
    \vspace{-4mm}
\end{figure}

\vspace{-1mm}
\subparagraph{Computing a maximal-weight ideal representation.} 
We have now computed, for each $s$, the maximal weight $M_s$ of any path from $q_s$ to the final state $q_\maxlength$.
For \cref{nfa-ideal-candidate}, we now need to compute in $\NL$ a path from $q_1$ to $q_\maxlength$ of maximal weight.
%We show that given $M_s$ for $s\in[1,\maxlength]$, such a path can be computed in $\NL$.
Here, it is important that this computation only depends on the input (even though our $\NL$ computation is non-deterministic).
Starting from $q_1$, we successively compute the next transition in our path. If $q_i$ is the current state,
then we compute the next state $q_j$ as follows.
We compute $M$ as the maximal $M_\ell$, where $q_\ell$ ranges over all states reachable in one step from $q_i$. Then, we pick the smallest $j$ with $M_j=M$. This way, we successively output a path of maximal weight, such that the path only depends on the input.
This completes \cref{nfa-ideal-candidate}.
%then we pick the next state as $j$ as the minimum index for which $M_j$ =\min\{\ell\in[1,\maxlength] \mid \text{$\ell$ maximizes $M_\ell$ among all states rea
%
%Let the maximum weight be $\maxweight$. Then we can guess an accepted path of $\NFAred$ with weight $\maxweight$ and verify it in $\logspace$. 
%
%However, in case more than one path achieves weight $\maxweight$, the described $\NL$ algorithm is not guaranteed the provide a consistent 'maximal path'. That's why we need to apply a small trick that allows the path finding algorithm to favor some states over others, and thus produce a consistent maximum weight path.
%For this we fix an order on the state set $\Qred$. Whenever the algorithm guesses the successor state $q_{i+1}$ from a state $q_i$ such that the weight of a path $q_1 \ldots q_i q_{i+1}\ldots q_t $ is equal to $\maxweight$, it also guesses that $q_{i+1}$ is the state with the highest precedence amongst 
%those in $\{ q_j\in \Qred \mid (q_i, q_j)\in \deltared \text{ and } q_1\ldots q_{i} q_j \text{ is the prefix of an accepted path with weight } \maxweight\}$.
%Now, the algorithm verifies an accepted path $q_1 \ldots q_m $ if and only if it the path both achieves $\maxweight$, and each state $q_{i+1}$ is the one with the maximum precedence amongst those that can lead $q_1 \ldots q_i$ to an accepted path of weight $\maxweight$.
%With this trick we calculate a \emph{unique} maximum weight ideal representation that achieves the maximum weight in $\NL$.

\vspace{-3mm}
\subparagraph{Deciding directedness}
The upper bounds in \cref{thm:main-regularAC1,thm:main-regularNL-complete}
follow by applying \cref{nfa-ideal-candidate} to obtain an ideal
$\at_1\cdots \at_n$, either in $\AC^1$ if $\Sigma$ is part of the
input, or in $\NL$ for fixed $\Sigma$.  Finally, checking whether 
$\dcl{L(\NFA)}\subseteq \Idl(\at_1\cdots \at_n)$ can be done in
$\NL$~\cite{icalpZetzsche16}.

%\subparagraph{Complexity of checking directedness.} For all cases, $\NFAred$ can be calculated in $\NL$ (\cref{lemma:reducedidealNFA}). If $\Sigma$ is a part of the input, the maximum weight and an ideal that achieves it can be calculated in $\NL$ and the language inclusion is checked in $\NL$ as well. Therefore, the directedness of $L$ is decided in $\NL$.
%If $\Sigma$ is not a part of the input, the maximum weight is calculated in $\AC^1$, the ideal that achieves it is calculated and the language inclusion is checked in $\NL$. Therefore, the directedness of $L$ is decided in $\AC^1$.
%\IS{Add NL-completeness}

% Given an NFA representing the downward closed language $L$, we can decide whether $L$ is upward directed in $\AC^1$; whereas on a fixed alphabet we can decide the same in $\NL$.

%\newpage 
%\input{sections/old-solution-regular}

\vspace{-2mm}

\section{Solution on Context-free Languages}\label{sec:solutionCFL}
We now prove \cref{thm:main-contextfree}. As in \cref{sec:solutionregular}, the upper bound uses the weighting
function to compute a candidate ideal in $\dcl{L(G)}$. However, the ideal
representation may be exponentially long and will thus be compressed by a
straight-line program.  For the lower bound, the key idea is to
employ a construction from~\cite{DBLP:conf/stacs/BaumannGTZ22} to compute a
compressed ideal that contains all words of some length $N$ (given in binary),
except a particular word specified as an SLP.

\vspace{-2mm}

\subsection{Deciding directedness of context-free languages}

We begin with the $\PSPACE$ upper bound, which requires some terminology. A \emph{context-free grammar (CFG)} is a tuple $G = \langle N, \Sigma, \pro, S \rangle$ where $N$ is the finite set of \emph{nonterminals}, $\Sigma$ is the finite set of \emph{terminals}, or the finite \emph{alphabet}, 
$S \in N$ is the \emph{start nonterminal} and $\pro \subseteq N \times (N \cup \Sigma)^*$ is the finite set of \emph{productions}.
We use the arrow notation to denote productions. $A \to w$ denotes $(A, w) \in \pro$.
We write $w \to^* w'$ for some $w, w' \in (N \cup \Sigma)^*$ to express that $w'$ can be produced by $w$ through a finite sequence of productions.

For $w \in (N \cup \Sigma)^*$, we denote by $L(w) = \{ w' \in \Sigma^* \mid w \to^* w' \}$ all sequences of terminals $w$ can produce and call it the \emph{language of $w$}.
We define the language of a grammar to be the language of its start nonterminal. That is, $L(G) := L(S)$. 
WLOG we assume that all nonterminals are reachable from $S$. \ISinside{I am not sure if this is needed.}
A CFG is said to be in \emph{Chomsky Normal Form (CNF)}, if all its productions are of the form $A \to BC$, $A\to a$ or $S \to \emp$, where $A,B,C,S \in N$, $a \in \Sigma$ and $B,C \neq S$. It is well known that one can bring a given grammar into CNF in polynomial time.
A CFG $G$ is called \emph{acyclic}, if non of its nonterminals produce itself.

A \emph{straight line program (SLP)} is a CFG that produces a single word.
Formally, an SLP is a CFG $G = \langle N, \Sigma, \pro, S \rangle $ where 
(i) for each $A\in N$, there is exactly one production $A \to w$ in $\pro$,
(ii) $G$ is acyclic. 
We denote the unique word $a_1 \cdots a_n$ an SLP $\mathbb{A}$ produces by $\val(\mathbb{A})$.
If the letters of $\mathbb{A}$ belong to $\atoms{\Sigma}$, $\val(\mathbb{A})$ is an ideal representation over $\Sigma$. 
Thus $\mathbb{A}$ is a 
\emph{compressed ideal representation for} $I = \Idl(\val(\mathbb{A}))$, or shortly \emph{compressed ideal $I$}. 

Our algorithm is analogous to the one for NFAs. First, an analogue of \cref{nfa-ideal-candidate}:
\begin{lemma}\label{cfl-ideal-candidate}
    There is a polynomial time algorithm that given a non-empty CFG $G$, computes a compressed ideal $I \subseteq \dcl{L(G)}$ such that $\dcl{L(G)}$ is directed if and only if $\dcl{L(G)} \subseteq I$.
\end{lemma}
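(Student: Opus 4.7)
My plan follows the blueprint of the regular case with three substitutions: grammars replace NFAs, dynamic programming replaces matrix powering, and the output is an SLP rather than an explicit ideal. First I would convert $G$ into Chomsky Normal Form in polynomial time. Then I would construct in polynomial time an acyclic CFG $G^{red}$ over a polynomial-sized atom alphabet $\Gamma \subseteq \atoms{\Sigma}$ with $\Idl(L(G^{red})) = \dcl{L(G)}$ and such that every word of $L(G^{red})$ is a reduced ideal representation.

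The construction of $G^{red}$ proceeds in two stages, analogous to \cref{lemma:idealNFA} and \cref{lemma:reducedidealNFA}. In the first stage, I build an intermediate acyclic CFG $G^{idl}$ whose words are (not necessarily reduced) ideal representations covering $\dcl{L(G)}$. This is done by identifying, for each nonterminal $A$, a pumpable letter set $\Delta(A) \subseteq \Sigma$ (the letters appearing in some derivation $A \to^* uAv$, extended via a fixed point to handle nested recursion), and then rewriting productions of $A$ to either output $\Atomt{\Delta(A)}$ at recursive positions or recurse into non-iterable derivations, with terminal productions $A \to a$ collapsing to single-atom productions $A^{idl} \to \Atomi{a}$. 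The fixed point is computable in polynomial time. In the second stage, I compose $G^{idl}$ with the rational transducers $\trl$ and $\trr$ from the regular case; since CFG--transducer composition is polynomial, and by (the analogue of) \cref{lem:right-normality} the composition produces only reduced representations, the result is $G^{red}$.

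Now I pick $k$ to be an exponential upper bound on the length of any word in $L(G^{red})$. Since $G^{red}$ is acyclic with polynomially many nonterminals, derivation trees have bounded depth and $k \le 2^{\mathrm{poly}(|G|)}$ suffices; thus $\log k$ is polynomial and each weight $\mu_k(\alpha)$ fits in polynomially many bits. I then run a bottom-up dynamic programming pass on $G^{red}$ computing, for each nonterminal $A$, the value $W(A) = \max\{\,\mu_k(w) \colon A \to^* w,\ w \in \Gamma^*\,\}$ together with a canonical production achieving the maximum. Because $\mu_k$ is additive, for a production $A \to X_1 \cdots X_\ell$ the weight contribution is $\sum_i W(X_i)$ with $W(\alpha) = \mu_k(\alpha)$ for terminals $\alpha \in \Gamma$, so the DP is polynomial time. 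Treating the witnessing productions as SLP rules yields a polynomial-sized SLP whose value is the desired max-weight ideal representation, which I output as the compressed ideal $I$. Correctness follows as in the regular case: $I \subseteq \dcl{L(G)}$ holds by construction, and if $\dcl{L(G)}$ is directed then by primeness of ideals (\cref{ideals-basic-facts}) it coincides with one of the ideals of words in $L(G^{red})$; by strict monotonicity of $\mu_k$ on reduced representations (\cref{prop:weightfunc}) this ideal cannot strictly contain $I$, so it equals $I$ and hence $\dcl{L(G)} \subseteq I$.

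The main obstacle I expect is the construction of $G^{idl}$: determining pumpable letter sets at every potential derivation position while keeping the grammar acyclic and polynomially sized is more delicate than the SCC-contraction trick used for NFAs. I expect to need a careful two-pass construction that first computes pumpable sets via a reachability-style fixed point (tracking which nonterminals can pump which letters through nested recursion) and then rewrites productions to intersperse the corresponding alphabet atoms at exactly the right positions, without losing any ideals from the decomposition of $\dcl{L(G)}$. A secondary concern is verifying that the transducer composition preserves reducedness globally, i.e., for every word of $L(G^{red})$ rather than only for a single representation, but this should follow from the deterministic reducing behavior of $\trl$ and $\trr$ exactly as in the regular case.
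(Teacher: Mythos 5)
Your high-level plan matches the paper's: convert to CNF, build an acyclic ideal grammar $G^{\idl}$, compose with $\trl\circ\trr$ to get $G^{\red}$, run an additive dynamic program over $\mu_k$ (with $k$ exponential but polynomially many bits), extract an SLP from the witnessing productions, and argue correctness via \cref{ideals-basic-facts} and \cref{prop:weightfunc}. All of these stages, and the correctness argument, agree with the paper.

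However, your sketch of the $G^{\idl}$ construction has a genuine gap. You propose computing a single pumpable set $\Delta(A)\subseteq\Sigma$ per nonterminal (the letters occurring in any derivation $A\to^* uAv$) and emitting a single alphabet atom $\Atomt{\Delta(A)}$ at recursive positions. This over-approximates the downward closure whenever $A$ produces itself only once (rather than twice in one derivation): for $S\to aSb\mid c$ you get $\Delta(S)=\{a,b\}$ and hence the ideal $\{a,b\}^*\{c,\varepsilon\}\{a,b\}^*$, which strictly contains $\dcl{L(S)}=a^*\{c,\varepsilon\}b^*$ (it contains $ba$, for instance). That would break the lemma: the produced candidate ideal $I$ would no longer satisfy $I\subseteq\dcl{L(G)}$. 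The paper avoids this by using Courcelle's distinction between nonterminals $A$ with $A\Rightarrow^2 A$ (where $\dcl{L(A)}=\alp(A)^*$ is a single alphabet atom) and those with only $A\Rightarrow^1 A$: the latter require \emph{separate} left and right pumpable alphabets $\lA$ and $\rA$, computed over the equivalence class $A_\equiv=\{B : A\Rightarrow^1 B\Rightarrow^1 A\}$, producing the decomposition $(\lA)^*\,\dcl{\dot A}\,(\rA)^*$. Your fixed point also has to propagate across these equivalence classes, not just along individual recursive productions. You correctly flag this step as the main obstacle, but the construction you sketch does not resolve it; replacing the single $\Delta(A)$ with the $\lA$/$\rA$ split over equivalence classes is precisely what is needed.

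Your secondary concern about the transducer composition preserving reducedness on all words of $L(G^{\red})$ is unfounded: \cref{lem:right-normality} applies word by word, and transducer application to a CFL is just closure under rational transductions, so nothing language-global is required there.
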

And given \cref{cfl-ideal-candidate}, it remains to decide whether $L(G)\subseteq I$:
\begin{lemma} \label{inclusion-cfl-ideal}
    Given any CFG $G$ and a compressed ideal $I$, one can decide in $\PSPACE$ whether $L(G) \subseteq I$.
\end{lemma}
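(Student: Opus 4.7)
The plan is to reduce $L(G) \subseteq I$ to emptiness of a succinct product construction and decide the latter via an alternating polynomial-time algorithm, which by Chandra--Kozen--Stockmeyer coincides with $\PSPACE$.

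First, I would construct an implicit DFA $D_I$ for $I$. If the SLP $\mathbb{A}$ produces the ideal representation $\alpha_1 \cdots \alpha_n$ (with $n$ possibly exponential in $|\mathbb{A}|$), let $D_I$ have states $\{0, 1, \ldots, n, \bot\}$ where $j$ records the leftmost greedy match position (the smallest $j$ for which the consumed prefix lies in $\Idl(\alpha_1 \cdots \alpha_j)$) and $\bot$ is absorbing. Each state is encoded in polynomial bits, and $\delta(j, a)$ is computable in polynomial time by SLP navigation: stay at $j$ if $\alpha_j$ is an alphabet atom containing $a$, else locate the smallest $j' > j$ with $a \in \Idl(\alpha_{j'})$, or move to $\bot$ if none exists. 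The only non-accepting state is $\bot$.

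After converting $G$ to Chomsky normal form, I form the product grammar $G'$ with nonterminals $(A, p, q)$ and productions $(A, p, q) \to (B, p, r)(C, r, q)$ for $A \to BC$ in $G$ and any state $r$, together with $(A, p, q) \to a$ for $A \to a$ in $G$ with $\delta(p, a) = q$. Then $L(G) \not\subseteq I$ iff $(S, 0, \bot)$ is non-empty in $G'$. Although $G'$ has exponentially many nonterminals, each fits in polynomial bits and productions are verifiable in polynomial time.

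To decide non-emptiness of $(S, 0, \bot)$ in alternating polynomial time, I exploit the monotone structure of $D_I$ (transitions only increase the position or move to $\bot$) together with standard CFG shortest-word bounds to show that the shortest witness word has length $L \leq 2^{\mathrm{poly}(|G|, |\mathbb{A}|)}$. The balanced-cut lemma on binary derivation trees (every tree on $L$ leaves has a subtree yielding between $L/3$ and $2L/3$ leaves) then supports a divide-and-conquer recursion of depth $O(\log L) = \mathrm{poly}$: at each call, existentially guess a pivot nonterminal $B$, the DFA states flanking $B$'s yield, and a balanced length split; universally recurse on $B$'s subtree and on the surrounding ``context'' derivation $A \to^* w_1 B w_3$ (handled by a dual recursion with its own balanced cut). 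Each step runs in polynomial time, so the overall procedure lies in $\PSPACE$.

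The main obstacle is twofold: establishing the single-exponential shortest-witness bound by combining the monotonicity of $D_I$ with CFG pumping, and formalizing the auxiliary ``context'' derivation primitive so that the balanced cuts truly yield logarithmic recursion depth. These are careful but essentially standard extensions of balanced-parsing machinery in the spirit of Lohrey's $\PSPACE$ algorithm for the compressed subword problem.
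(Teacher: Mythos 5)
Your approach is sound but genuinely different from the paper's, and the contrast is instructive. The paper first passes to the \emph{acyclic} grammar $\Gred$ from \cref{lemma:reducedidealG}, whose words are (possibly exponentially long but) reduced ideal representations of $\dcl{L(G)}$. Because $\Gred$ is acyclic, a leftmost derivation can be stored as a single polynomial-length root-to-frontier path; the algorithm nondeterministically emits the ideal representation atom by atom while advancing a binary-encoded greedy pointer into $\val(\SSI)$, accepting if the pointer overruns. This is a direct $\NPSPACE = \PSPACE$ guess-and-check that never has to reason about witness lengths at all, because the acyclicity of $\Gred$ already gives a polynomial derivation depth. You instead work with $G$ itself (not its ideal decomposition), forming a product with a succinct monotone DFA $D_I$ for $I$ and deciding non-emptiness of the product via an alternating polynomial-time balanced-cut recursion. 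That buys you conceptual directness — no Courcelle-style ideal grammar, no transducers — at the cost of having to prove the witness-length bound you flag as an ``obstacle,'' and I would not call that bound essentially standard: a naive pump-out argument over the product grammar (which has $|N|\cdot(n+2)^2$ nonterminals with $n \le 2^{|\mathbb{A}|}$) only gives a \emph{doubly} exponential shortest witness. To get the singly exponential bound you need a genuine structural argument: partition each shortest-witness derivation tree into the ``progress'' skeleton (nodes whose DFA interval $[p,q]$ is non-degenerate, of which monotonicity and label-distinctness give $O(|N|\cdot n)$, since there are at most $n+1$ progress positions and at most $|N|$ consecutive nodes can share the same interval) and the ``stationary'' subtrees hanging off it (nodes with degenerate interval $[p,p]$, where monotonicity forces the sub-derivation to stay inside the $|N|$ nonterminals $(\cdot,p,p)$, so CFG shortest-word bounds give yield $\le 2^{|N|}$). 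This yields $2^{O(|N| + |\mathbb{A}|)}$, hence polylogarithmic — i.e.\ polynomial — recursion depth. So the claim is correct, but the argument it rests on is appreciably more delicate than the routine balanced-parsing machinery you invoke, which is precisely the technical content the paper's reduction to $\Gred$ is designed to avoid.
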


\vspace{-0.5cm}
\subparagraph{A grammar of ideals} 
The remainder of this subsection is devoted to \cref{cfl-ideal-candidate,inclusion-cfl-ideal}.
Analogously to \cref{lemma:idealNFA}, we
first transform $G$ into an acyclic grammar $\Gidl$ that produces ideal
representations of an ideal decomposition of $\dcl{L(G)}$.
\begin{restatable}{lemma}{restatableidealG}\label{lem:idealG}
Given any CFG $G$ in CNF over $\Sigma$, one can compute in polynomial time an acyclic CFG $\Gidl$ over a polynomial-sized alphabet $\Gamma\subseteq\atoms{\Sigma}$ with
$\Idl(L(\Gidl))=\dcl{L(G)}$.
\end{restatable}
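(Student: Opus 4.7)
The plan is to mimic the construction from \cref{lemma:idealNFA} for NFAs, adapted to the tree structure of CFG derivations. Just as the NFA construction first collapsed SCCs of states to self-loops before replacing them with alphabet atoms, I would analyze the SCC decomposition of the nonterminal dependency graph of $G$ and replace each non-trivial SCC by a wrapping of alphabet atoms around a terminating \emph{core} subgrammar.

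First I compute the SCC decomposition of the graph whose nodes are the nonterminals of $G$, with an edge $A \to B$ whenever $B$ appears in the right-hand side of some production of $A$. Call an SCC \emph{non-trivial} if it contains a directed cycle. For each non-trivial SCC $C$, I compute two alphabets $\Delta_L(C), \Delta_R(C) \subseteq \Sigma$ of terminals producible to the left (respectively right) of recursive $C$-calls. Concretely, for every production $A^* \to BD$ with $A^* \in C$: if $D \in C$ then $\alp(\dcl{L(B)}) \subseteq \Delta_L(C)$; if $B \in C$ then $\alp(\dcl{L(D)}) \subseteq \Delta_R(C)$; if $B, D \in C$, both contributions apply. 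The values $\alp(\dcl{L(X)})$ are precomputable in polynomial time by standard reachability.

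The grammar $\Gidl$ introduces two kinds of new nonterminals: $\tilde{A}$ for each $A \in N$, intended to generate an ideal decomposition of $\dcl{L(A)}$, and $W_C$ for each non-trivial SCC $C$, capturing the terminating core. For $A$ in a trivial SCC, $\tilde{A}$ simply mirrors $A$'s productions in $G$, replacing each terminal $a$ by $\Atomi{a}$ and each nonterminal $B$ by $\tilde{B}$. For $A$ in a non-trivial SCC $C$, I add the single production $\tilde{A} \to \Atomt{\Delta_L(C)}\, W_C\, \Atomt{\Delta_R(C)}$, omitting each atom whose alphabet is empty. The productions of $W_C$ enumerate precisely the terminating options within $C$: $W_C \to \Atomi{a}$ for each $A^* \in C$ with $A^* \to a$, and $W_C \to \tilde{B} \tilde{D}$ for each $A^* \in C$ with $A^* \to BD$ and $B, D \notin C$. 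Processing SCCs in reverse topological order, $\Gidl$ is acyclic because $W_C$ only refers to $\tilde{B}, \tilde{D}$ for nonterminals in strictly lower SCCs, and $\tilde{A}$ for $A \in C$ only refers to $W_C$ and alphabet atoms.

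The correctness claim $\Idl(L(\Gidl)) = \dcl{L(G)}$ I would prove by induction on SCC depth. A preliminary observation is that $\dcl{L(A)} = \dcl{L(A')}$ for all $A, A'$ in the same non-trivial SCC $C$, since $A \to^* \alpha A' \beta$ and vice versa, and the contexts can be dropped under the downward closure. The key identity is then $\dcl{L(A)} = \Delta_L(C)^* \cdot D_C \cdot \Delta_R(C)^*$ for $A \in C$, where $D_C = \bigcup_{r \in L(W_C)} \Idl(r)$. The main obstacle will be the forward inclusion: any $w \in \dcl{L(A)}$ comes from a derivation whose $C$-subtree can be decomposed into a single root-to-leaf spine generating an element of $D_C$, with all off-spine branches absorbed into $\Delta_L(C)$ or $\Delta_R(C)$ by pumping. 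Productions $A^* \to BD$ with both $B, D \in C$ require particular care, since they cause the $C$-subtree to branch; one must argue that the non-spine branch, being fully within $C$, can always be flattened into the left or right alphabet atom, which justifies the symmetric contribution of $B$ and $D$ to both $\Delta_L$ and $\Delta_R$. The reverse inclusion is then a more routine pumping argument showing that every ideal representation generated by $\Gidl$ is realizable by actual derivations in $G$.
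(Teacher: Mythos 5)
Your proposal is correct and follows essentially the same approach as the paper's, which likewise constructs $\Gidl$ from the SCC decomposition of the nonterminal dependency graph following Courcelle's analysis of downward closures of context-free languages. The only cosmetic difference is that the paper treats the case of a nonterminal $A$ producing two copies of itself as a separate case with a dedicated production $\bar{A} \to (\alp(A))^*$, whereas your unified rule $\tilde{A} \to \Atomt{\Delta_L(C)}\, W_C\, \Atomt{\Delta_R(C)}$ correctly collapses to the same ideal in that case, since $\Delta_L(C)$ and $\Delta_R(C)$ then both equal $\alp(A)$.
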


%Given $G= \langle N, \Sigma, \pro, S \rangle$, \cref{lem:idealG} computes $\Gidl= \langle N^\idl, \Gamma, \pro^\idl, S^\idl \rangle$
%for which the size of $\Gamma$ is linear in $\max\{|\Sigma|, |P|\}$ and the size of $N^\idl$ is at most $2 |N|$.\ISinside{some of these feel unnecessary}
%Since $\Gidl$ is acyclic, $L(\Gidl)$ is finite. Furthermore, the length of each ideal representation is at most $3\cdot2^{2 |N|}$.
The procedure is similar to Courcelle's construction~\cite{Courcelle91} (see~\cref{app:sec:proofsofsecsolutionCFL} for details).

\vspace{-2mm}
\subparagraph{Reducing ideals} The next step is analogous to \cref{lemma:reducedidealNFA}: We want to transform $\Gidl$ so as to only produce reduced ideal representations. Luckily, we can directly apply the transducers $\trl$ and $\trr$ constructed for \cref{lemma:reducedidealNFA}: Since for a given CFL $K$ and a transducer $\tr$, one can compute in polynomial time a grammar for $\tr(K)$~\cite{Shallitbook2008}, we obtain the following:
\begin{lemma}\label{lemma:reducedidealG}
	Given any CFG $G$ in CNF over alphabet $\Sigma$, one can compute in polynomial
time an acyclic CFG $\Gred$ in CNF over some polynomial-sized alphabet
 $\Gamma\subseteq\atoms{\Sigma}$ such that (i)~$\Idl(L(\Gred))=\dcl{L(G)}$ and
(ii)~all ideal representations in $L(\Gred)$ are reduced.
\end{lemma}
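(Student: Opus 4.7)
The plan is to mirror the construction from \cref{lemma:reducedidealNFA}, replacing the NFA $\NFAidl$ by the CFG $\Gidl$ from \cref{lem:idealG} and invoking the polynomial-time closure of context-free languages under rational transductions. First, I would apply \cref{lem:idealG} to obtain in polynomial time an acyclic CFG $\Gidl$ over a polynomial-sized alphabet $\Gamma \subseteq \atoms{\Sigma}$ with $\Idl(L(\Gidl)) = \dcl{L(G)}$. Then I would feed $L(\Gidl)$ through the transducer $\trr$ defined in \cref{subsec:reducingideals} via the standard triple construction, obtaining in polynomial time a CFG $G_1$ with $L(G_1)=\trr(L(\Gidl))$. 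Since $\trr$ maps each ideal representation to an equivalent right-reduced one, we have $\Idl(L(G_1))=\dcl{L(G)}$ and every word in $L(G_1)$ is right-reduced. Applying $\trl$ to $L(G_1)$ likewise yields a CFG $\Gred'$ with $L(\Gred')=\trl(L(G_1))$; by \cref{lem:right-normality} the outputs remain right-reduced, and by construction $\trl$ makes them left-reduced, hence every word of $L(\Gred')$ is reduced, while $\Idl(L(\Gred'))=\dcl{L(G)}$ still holds.

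The next step is to verify that $\Gred'$ is acyclic and of polynomial size. The triple construction for a CFG $H$ and a transducer $\tr$ produces nonterminals of the form $(p, A, q)$, where $A$ ranges over nonterminals of $H$ and $p, q$ over states of $\tr$; a derivation $(p, A, q) \to^+ (p, A, q)$ immediately induces $A \to^+ A$ in $H$, which is impossible once $\Gidl$ is acyclic. Thus acyclicity is preserved through both transductions. Because $\trl$ and $\trr$ have a polynomial number of states and a polynomial-size alphabet, each application inflates the grammar by only a polynomial factor, so $\Gred'$ has polynomial size.

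Finally, I would convert $\Gred'$ to CNF using the standard polynomial-time procedure. This procedure introduces fresh auxiliary nonterminals that derive only strict sub-right-hand-sides of existing rules, so it does not create new derivation cycles: if the resulting grammar had $B \to^+ B$ for some nonterminal $B$, unfolding the fresh auxiliaries would yield a corresponding cycle in $\Gred'$. The output is the required $\Gred$ in CNF over $\Gamma$ satisfying (i) and (ii).

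The only non-routine point is checking that the \emph{composition order} of the two transducers is correct: naively applying $\trl$ first and then $\trr$ could destroy left-reducedness. \cref{lem:right-normality} was proved exactly to ensure that applying $\trl$ after $\trr$ preserves the right-reducedness established in the first step, so the order $\trl \circ \trr$ is what makes the construction work. Once this observation is in place, the rest of the proof is a direct combination of \cref{lem:idealG}, the transducers from \cref{subsec:reducingideals}, polynomial-time closure of CFLs under rational transductions, and the standard CNF conversion.
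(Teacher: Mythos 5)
Your proposal follows the paper's own route exactly: apply \cref{lem:idealG} to get $\Gidl$, pass $L(\Gidl)$ through $\trr$ and then $\trl$ via \cref{lem:shallitCFL}, invoke \cref{lem:right-normality} for reducedness, and finally convert to CNF. The paper dispatches this lemma in a single sentence, so the main added value of your write-up is the explicit check that the triple construction preserves acyclicity; this argument is sound here precisely because every transition of $\trl$ and $\trr$ reads an input letter (so no nonterminal $(p,A,q)$ can loop without forcing $A \to^+ A$ in $\Gidl$), a fact worth stating since a transducer with $\varepsilon$-input cycles could in principle spoil acyclicity.
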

Similar to \cref{lemma:reducedidealNFA}, we apply $\trl \circ \trr$ to
$L(G^\idl)$ (see~\cref{lem:shallitCFL}) and convert to CNF.

\vspace{-2mm}
\subparagraph{Calculating the maximum weight ideal}\label{subsec:CFLcalculatingmaxweightideal}
Similar to \cref{sec:solutionregular}, the next step is to compute for each nonterminal $A$ of $\Gred$ the maximal weight of any ideal representation produced by $A$. 
Let $\Gred = \langle \Nred, \Gamma, \prored, \Sred \rangle$ denote the grammar from \cref{lemma:reducedidealG}. 
%$\Gred$ is acyclic since its language is finite. Furthermore, since neither $\rtr$ nor $\ltr$ increase the length of the words they take as input, 
%the $3\cdot 2^{2\cdot|N|}$ upper bound on the length of the ideal representations generated by $G^\idl$, applies to the ideal representations generated by $\Gred$.
With the same argument as in \cref{sec:solutionregular}, an ideal of maximal weight will be as desired in \cref{cfl-ideal-candidate}. We use the weighting function $\mu_m$, where $m=3\cdot 2^{2|\Nred|}$ is an upper bound on the length of words in $\Gred$. A notable difference to \cref{sec:solutionregular} is that here $m$ is exponential. 

For each nonterminal $A$ of $\Gred$, we denote by $\mu_m(A)$ the maximal possible weight of any ideal representation generated by $A$. To calculate $\mu_m(A)$ for each $A$, we employ a simple dynamic programming approach. 
We maintain a table $T$ that contains for each nonterminal $A$ a number $T(A)\in\N$, which is the maximal weight of a derivable ideal representation observed so far. We initialize $T(A)=-\infty$ for every $A$.
Then, we set $T(A)$ to the maximal value of $\mu_m(a)$, where $a$ ranges over all $a\in\atoms{\Sigma}$ for which $A\to a$ is a production. 
Finally, we perform the following update step. For each nonterminal $A$, if there is a production $A\to BC$ such that currently $T(A)$ is smaller than $T(B)+T(C)$, then we update $T(A):=T(B)+T(C)$. It can be shown by induction that after $i$ update steps, $T(A)$ contains the correct value $\mu_m(A)$ for each nonterminal $A$ that has a depth $\le i$ derivation tree that attains $\mu_m(A)$. When we apply the update step $|\Nred|$ times, we arrive at $T(A)=\mu_m(A)$ for every nonterminal $A$.  

\vspace{-2mm}
\subparagraph{Computing the candidate ideal.} Given the numbers $\mu_m(A)$, it
is easy to prove \cref{cfl-ideal-candidate}. For each nonterminal $A$,
there must exist a ``max-weight'' production $A\to BC$, resp. $A\to a$, such
that $\mu_m(A)=\mu_m(B)+\mu_m(C)$, resp.\ $\mu_m(A)=\mu_m(a)$. We build a new
grammar $\SSI$ by selecting for each nonterminal $A$ of $\Gred$ this
max-weight production. Then $\SSI$ contains at most one production for each
nonterminal and is thus an SLP. Moreover, it clearly generates an ideal
representation $\val(\SSI)$ of maximal weight.  %Since $\Idl(\val(\SSI))$ holds
%by construction\ISinside{what does it mean that this holds by construction?}, 
We only need to argue that $L(G)$ is directed iff
$L(G)\subseteq\Idl(\val(\SSI))$. As before, the ``if'' direction is obvious,
because $L(G)\subseteq\Idl(\val(\SSI))$ implies that $L(G)$ is an ideal.
Conversely, suppose $L(G)$ is directed and let $\dcl{L(G)}=I_1\cup\cdots\cup
I_n$ be the ideal decomposition given by $L(\Gred)$ with
$\Idl(\val(\SSI))=I_i$. Since $L(G)$ is directed, $\dcl{L(G)}$ is an ideal and
thus $I_1\cup\cdots\cup I_n=I_j$ for some $j$ by \cref{ideals-basic-facts}. In
particular, we have $I_i\subseteq I_j$. Moreover, if the inclusion were strict,
$I_i$ would not have maximal weight. Hence, $I_i=I_j$ and thus $L(G)\subseteq
I_i=\Idl(\val(\SSI))$ as required.

\vspace{-2mm}
\subparagraph{Deciding directedness}\label{subsec:PSPACEub}
With \cref{cfl-ideal-candidate} in hand, it remains to prove
\cref{inclusion-cfl-ideal}.  Suppose we are given a grammar $G$ and an SLP
$\SSI$ for $I=\Idl(\val(\SSI))$, and we want to check $L(G)\subseteq\val(\SSI)$.  Since this is
equivalent to $\dcl{L(G)}\subseteq\val(\SSI)$, we first construct $\Gred$ given
in~\cref{lemma:reducedidealG}.  Recall that $L(\Gred)$ generates
representations of ideals of $\dcl{L(G)}$. The algorithm guesses an
ideal representation in $L(\Gred)$ whose ideal \emph{does not embed in $I$}.

We guess an ideal representation generated by $\Gred$, atom by atom, via its leftmost derivation. 
This word can be exponentially long, but we only store one (polynomial-length) path in the derivation tree, leading to the terminal atom that we are currently guessing (see~\cref{app:sec:PSPACEub} for an example).
While guessing the representation, we simultaneously maintain a (binary encoded) pointer into $\val(\mathbb{S})$. %that points at the leftmost position in $L(\SSI)$ the guessed ideal so far embeds in.
Suppose $\alpha_1 \cdots \alpha_{j-1}$ is guessed so far.
While $\alpha_j$ is being guessed, the pointer holds the length of the shortest prefix of $\val(\mathbb{S})$, $\alpha_1 \cdots \alpha_{j-1}$ embeds in. 
Let $\val(\mathbb{S})[i]$ denote the $i^{th}$ index of $\val(\mathbb{S})$.
If there is an atom $\val(\mathbb{S})[i']$ with $i'\geq i$ (if $\val(\mathbb{S})[i]$ is an alphabet atom) or $i' > i$ (if $\val(\mathbb{S})[i]$ is a single atom) that $\alpha_j$ embeds in, we update the pointer to the smallest such $i'$. If there is no such atom, the guessed ideal does not embed in $I$. On the other hand, if $j-1$ is the last atom guessed, then the guessed ideal embeds in $I$.
%Each step takes polynomial time, since for an SLP $\mathbb{A}$, $\val(\mathbb{A})[i]$ can be calculated in polynomial time~\cite{Lohrey12}. 
Details are in~\cref{app:sec:PSPACEub}.
This establishes \cref{inclusion-cfl-ideal} and thus \cref{thm:main-contextfree}.

\subsection{$\PSPACE$ Lower Bound}\label{subsec:PSPACElb}
Let us now come to the lower bound in \cref{thm:main-contextfree}. It remains to show:
\begin{restatable}{lemma}{restatableCFLthm}\label{thm:cfl-directed-hardness}
	Given a CFG $G$ over $\{0,1\}$, directedness of $L(G)$ is $\PSPACE$-hard.
\end{restatable}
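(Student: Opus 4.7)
The plan is to reduce from the $\PSPACE$-complete problem of deciding, given SLPs $\mathbb{V}, \mathbb{W}$ for equal-length binary words $v, w \in \{a,b\}^n$ and a fixed regular $R \subseteq (\{a,b\} \times \{a,b\})^*$, whether $v \otimes w \in R$~\cite{Lohrey12}. The target CFG $G$ over $\{a,b\}$ will satisfy $L(G) = L \cup I_w$, where $L \subseteq \{a,b\}^n$ is a CFL and $I_w$ is the compressed complement ideal supplied by~\cite[Lemma~3.1]{DBLP:conf/stacs/BaumannGTZ22}. The correctness claim is that $L(G)$ is directed iff $w \notin L$, equivalently iff $v \otimes w \notin R$, which is still $\PSPACE$-hard as $\PSPACE$ is closed under complement.

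First, I would construct a polynomial-sized CFG $G_L$ for $L := \{u \in \{a,b\}^n : v \otimes u \in R\}$ by state-decorating $\mathbb{V}$. Fixing a DFA $(Q, q_0, F, \delta)$ for $R$, each SLP nonterminal $A$ is replaced by a family $\{A_{p,q}\}_{p,q \in Q}$, an SLP rule $A \to BC$ induces the rules $A_{p,q} \to B_{p,r} C_{r,q}$ for each $r \in Q$, and a terminal rule $A \to c$ induces $A_{p,q} \to d$ whenever $\delta(p,(c,d)) = q$. A fresh start symbol branches into $S_{q_0, q_f}$ for each $q_f \in F$, where $S$ is the start of $\mathbb{V}$. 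An easy induction on the SLP derivation tree shows that $A_{p,q}$ generates exactly those words $u$ with $|u|=|\val(A)|$ such that reading $\val(A) \otimes u$ drives the DFA from $p$ to $q$, so $L(G_L) = L \subseteq \{a,b\}^n$. Since $\mathbb{V}$ is polynomial-sized and the DFA is fixed, $|G_L|$ is polynomial.

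Second, I would apply~\cite[Lemma~3.1]{DBLP:conf/stacs/BaumannGTZ22} to $\mathbb{W}$ to obtain in polynomial time an SLP $\mathbb{S}$ over $\atoms{\{a,b\}}$ whose value $\val(\mathbb{S})$ is an ideal representation of an infinite ideal $I_w \subseteq \{a,b\}^*$ satisfying $I_w \cap \{a,b\}^n = \{a,b\}^n \setminus \{w\}$. This compressed ideal representation yields a polynomial CFG $G_I$ for $I_w$ directly: each nonterminal of $\mathbb{S}$ becomes a CFG nonterminal whose productions mirror $\mathbb{S}$, with a terminal atom $\Atomi{a}$ replaced by productions $A \to a \mid \varepsilon$ and a terminal atom $\Atomt{\Delta}$ replaced by a small gadget generating $\Delta^*$. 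Combining $G_L$ and $G_I$ under a fresh start symbol gives a polynomial CFG $G$ over $\{a,b\}$ with $L(G) = L \cup I_w$.

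The correctness argument is then elementary. If $w \notin L$, then $L \subseteq \{a,b\}^n \setminus \{w\} \subseteq I_w$, so $L(G) = I_w$ is a single ideal and hence directed. Conversely, assume $w \in L$, and pick $y \in I_w$ with $|y| > n$, which exists because $I_w$ is infinite. Any common superword $x \in L(G)$ of $w$ and $y$ must satisfy $|x| \geq |y| > n$, so $x \notin L$ (all elements of $L$ have length exactly $n$), forcing $x \in I_w$; but then $w \sw x \in I_w$ together with downward-closedness of $I_w$ give $w \in I_w$, contradicting $w \notin I_w$. Hence $L(G)$ is not directed. The main technical obstacle is the construction of $\mathbb{S}$: arranging a single ideal to contain all of $\{a,b\}^n \setminus \{w\}$ while omitting $w$, and admit a polynomial-sized SLP representation, is the nontrivial content of~\cite[Lemma~3.1]{DBLP:conf/stacs/BaumannGTZ22}, which we invoke as a black box.
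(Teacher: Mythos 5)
Your proposal follows essentially the same route as the paper: reduce from the $\PSPACE$-hardness of compressed automatic membership, build a polynomial-size CFG $G_L$ for $\{u : v \otimes u \in R\}$ by decorating the SLP $\mathbb{V}$ with DFA states, construct a compressed infinite complement ideal $I_w$, and observe that $L \cup I_w$ is directed iff $w \notin L$. One detail is worth correcting, though: you cite~\cite[Lemma~3.1]{DBLP:conf/stacs/BaumannGTZ22} as directly yielding the infinite compressed ideal, but that lemma produces a \emph{finite word} $\bar{w}$ with $\dcl{\{\bar{w}\}} \cap \Sigma^{|w|} = \Sigma^{|w|} \setminus \{w\}$, which is not an infinite ideal. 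Since your own non-directedness argument requires picking a word of $I_w$ strictly longer than $n$, you genuinely need $I_w$ infinite, so you must adapt the construction by replacing the auxiliary words $u_i$ (each containing every letter except $b_i$) with alphabet atoms $\Atomt{(\Sigma\setminus\{b_i\})}$, and then produce an SLP over $\atoms{\Sigma}$ from $\mathbb{W}$ by the per-production substitution $A \to b \mapsto A \to \Atomt{(\Sigma\setminus\{b\})}\Atomi{b}$ with the last letter removed. This is precisely the content of the paper's complement-ideal lemma (\cref{complement-ideal}); it is a small but genuine extra step, not a black-box citation. Apart from this, your direct argument for non-directedness — any common superword of $w$ and a long $y \in I_w$ has length $>n$, hence lies in $I_w$, hence $w \in I_w$ by downward closure, contradiction — is a pleasant elementary alternative to the paper's appeal to uniqueness of ideal decompositions in \cref{lemma:equivalentproblemdirectedness}, and your state-decoration of the SLP is the same construction the paper phrases as running a transducer over $\mathbb{V}$.
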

To this end, we reduce from compressed membership in automatic relations.
Given two words $u = a_1 \cdots a_n, v = b_1 \cdots b_n \in \{0,1\}^*$,
their \emph{convolution} is defined as $u \otimes v = (a_1,b_1) \cdots (a_n,b_n) \in (\{0,1\} \times \{0,1\})^*$. The following was shown in \cite[Corollary~8]{DBLP:journals/iandc/Lohrey11}:
\begin{lemma}\label{lemma:LohreyPSPACEhardness}
    There exists a regular language $R \subseteq (\{0,1\} \times \{0,1\})^*$ such that for given two SLPs $\mathbb{A}$ and $\mathbb{B}$
    with $|\val(\mathbb{A})| = |\val(\mathbb{B})|$, deciding 
   $\val(\mathbb{A}) \otimes \val(\mathbb{B}) \in R$  is $\PSPACE$-hard.
\end{lemma}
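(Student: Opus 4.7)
The plan is to reduce from the $\PSPACE$-hard problem of \cref{lemma:LohreyPSPACEhardness}: given SLPs $\mathbb{A}, \mathbb{B}$ over $\{0,1\}$ with $|\val(\mathbb{A})| = |\val(\mathbb{B})| = n$, decide whether $v \otimes w \in R$, where $v := \val(\mathbb{A})$ and $w := \val(\mathbb{B})$. I will build in polynomial time a CFG $G$ over $\{0,1\}$ such that $L(G)$ is directed iff $v \otimes w \notin R$, which suffices since $\PSPACE$ is closed under complement.

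I take $L(G) = L(G_L) \cup I_w$, where $G_L$ and $I_w$ are defined as follows. For $G_L$, the intended language is $\{u \in \{0,1\}^n : v \otimes u \in R\}$. Fixing a DFA $\mathcal{A} = (Q, \delta, q_0, F)$ for $R$, I introduce a nonterminal $X_{p,q}$ for each nonterminal $X$ of $\mathbb{A}$ and each $(p,q) \in Q \times Q$, intended to derive exactly those $u \in \{0,1\}^{|\val(X)|}$ on which $\mathcal{A}$ reaches $q$ from $p$ while reading $\val(X) \otimes u$. Productions of $\mathbb{A}$ lift naturally: $X \to YZ$ yields $X_{p,q} \to Y_{p,r} Z_{r,q}$ for every $r \in Q$, and $X \to a$ yields $X_{p,q} \to b$ whenever $\delta(p,(a,b)) = q$; the start productions take a union over $F$. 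The resulting $G_L$ has polynomial size. For $I_w$, I invoke~\cite[Lemma~3.1]{DBLP:conf/stacs/BaumannGTZ22} to obtain in polynomial time an SLP whose value is an ideal expression $\rho$ over $\{0,1\}$ satisfying $\Idl(\rho) \cap \{0,1\}^n = \{0,1\}^n \setminus \{w\}$; I then unfold this SLP into a polynomial-size CFG generating $I_w := \Idl(\rho)$ by replacing each atom symbol with productions for its atom language ($\{a,\varepsilon\}$ or $\Delta^*$). A standard union construction yields $G$.

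For correctness: if $w \notin L(G_L)$, then $L(G_L) \subseteq \{0,1\}^n \setminus \{w\} \subseteq I_w$, so $L(G) = I_w$ is an ideal and thus directed. Conversely, suppose $w \in L(G_L) \subseteq L(G)$. Since $I_w$ is downward closed and $w \notin I_w$, no element of $I_w$ has $w$ as a subword; as $\dcl{w}$ is finite while $I_w$ is infinite, I choose $u \in I_w$ with $u \not\sw w$. Any common superword $y \in L(G)$ of $w$ and $u$ lies either in $I_w$---impossible, since $w \sw y$ and downward closure of $I_w$ would force $w \in I_w$---or in $L(G_L) \subseteq \{0,1\}^n$, which forces $y = w$ by length, hence $u \sw w$, a contradiction. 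Thus $L(G)$ is not directed, so $L(G)$ is directed iff $v \otimes w \notin R$.

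The principal obstacle is the faithful invocation of~\cite[Lemma~3.1]{DBLP:conf/stacs/BaumannGTZ22} to produce the SLP-compressed complement ideal $I_w$: one must verify that the output is indeed an ideal expression whose length-$n$ slice omits exactly $\{w\}$ and whose SLP size is polynomial in the size of $\mathbb{B}$. The construction of $G_L$ is a routine product construction adapted to an SLP-compressed first coordinate, and is straightforward once the complement ideal is in hand.
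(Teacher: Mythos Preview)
Your proposal does not prove the stated lemma. The lemma asserts the existence of a fixed regular language $R$ for which the compressed convolution-membership problem is $\PSPACE$-hard; this is a result about SLPs and regular languages, and the paper does not prove it at all but simply cites it from \cite[Corollary~8]{DBLP:journals/iandc/Lohrey11}. Your write-up, by contrast, \emph{assumes} this lemma (``reduce from the $\PSPACE$-hard problem of \cref{lemma:LohreyPSPACEhardness}'') and proceeds to show $\PSPACE$-hardness of directedness for context-free grammars. That is a proof of \cref{thm:cfl-directed-hardness}, not of \cref{lemma:LohreyPSPACEhardness}.

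Incidentally, as a proof of \cref{thm:cfl-directed-hardness} your argument is essentially the paper's: the construction of $G_L$ is exactly the transducer application in the proof of \cref{lemma:membershipPSPACEhardness}, and your complement-ideal step is the content of \cref{complement-ideal} (the paper gives the explicit ideal representation rather than invoking \cite{DBLP:conf/stacs/BaumannGTZ22} as a black box, precisely because one needs an \emph{infinite} ideal, which the word $\bar w$ of that reference does not directly provide). Your correctness argument for the union is a direct unfolding of \cref{lemma:equivalentproblemdirectedness}. But none of this addresses the statement you were asked to prove.
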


From \cref{lemma:LohreyPSPACEhardness}, we deduce the following:
\begin{lemma}\label{lemma:membershipPSPACEhardness}
 Given an SLP $\mathbb{B}$ and a CFG $G$ such that all words in $L(G)$ have length exactly $|\val(\mathbb{B})|$, both over the alphabet $\Sigma = \{0,1\}$, deciding $\val(\mathbb{B}) \in L(G)$ is $\PSPACE$-hard.
\end{lemma}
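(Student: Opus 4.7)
The plan is to reduce from the $\PSPACE$-hard problem of \cref{lemma:LohreyPSPACEhardness}: given two SLPs $\mathbb{A}$ and $\mathbb{B}$ over $\{0,1\}$ with $|\val(\mathbb{A})| = |\val(\mathbb{B})| = n$ and the fixed regular language $R$, decide whether $\val(\mathbb{A}) \otimes \val(\mathbb{B}) \in R$. Given such an instance, I would leave $\mathbb{B}$ unchanged and construct, in polynomial time, a CFG $G$ over $\{0,1\}$ whose language is exactly
\[
L(G) \;=\; \{\, v' \in \{0,1\}^n \mid \val(\mathbb{A}) \otimes v' \in R \,\}.
\]
Then $\val(\mathbb{B}) \in L(G)$ iff $\val(\mathbb{A}) \otimes \val(\mathbb{B}) \in R$, and by construction every word of $L(G)$ has length exactly $n = |\val(\mathbb{B})|$, meeting the length restriction in the lemma's hypothesis.

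To build $G$, I would first put $\mathbb{A}$ into the standard normal form where productions have the shape $X\to YZ$ or $X\to c$ with $c\in\{0,1\}$ (a polynomial-time preprocessing), and fix an NFA $\mathcal{A}=(Q,\{0,1\}^2,\delta,q_0,F)$ for the constant-size language $R$. Then I would perform a product-style construction, introducing for each SLP nonterminal $X$ and each pair $(p,q)\in Q\times Q$ a CFG nonterminal $[X,p,q]$ whose intended language is
\[
L([X,p,q]) \;=\; \{\, v' \in \{0,1\}^{|\val(X)|} \mid p \xrightarrow{\val(X)\otimes v'} q \text{ in } \mathcal{A} \,\}.
\]
For each SLP production $X\to YZ$, add the CFG productions $[X,p,q]\to [Y,p,r]\,[Z,r,q]$ for every $r\in Q$; for each terminal SLP production $X\to c$, add $[X,p,q]\to c'$ whenever $(p,(c,c'),q)\in\delta$ with $c'\in\{0,1\}$. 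The start symbol is a fresh $S$ with productions $S\to [S_{\mathbb{A}},q_0,q_f]$ for every $q_f\in F$. Correctness of the intended language then reduces to a routine induction on the SLP derivation tree of $X$, which I would verify as the only bookkeeping step.

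Since $R$ is fixed, $|Q|$ is a constant, so $G$ has size $O(|\mathbb{A}|)$ and is built in polynomial time. Moreover, every parse tree rooted at $[X,p,q]$ produces exactly $|\val(X)|$ terminal letters, so every word in $L(G)$ has length exactly $n$, as required. The main conceptual point of the construction is that the exponentially long word $\val(\mathbb{A})$ is never materialized: occurrences of the same SLP nonterminal share the triples $[X,p,q]$, which is precisely what keeps the resulting CFG polynomial despite $n$ being potentially exponential in $|\mathbb{A}|$. Combining this reduction with \cref{lemma:LohreyPSPACEhardness} yields $\PSPACE$-hardness of compressed membership, establishing \cref{lemma:membershipPSPACEhardness}.
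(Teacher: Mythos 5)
Your proposal is correct and takes essentially the same route as the paper: both reduce from \cref{lemma:LohreyPSPACEhardness} and construct a CFG for $K=\{w\in\{0,1\}^n\mid \val(\mathbb{A})\otimes w\in R\}$ by applying the automaton for $R$, viewed as a transducer, to the SLP $\mathbb{A}$. Your triple construction $[X,p,q]$ is exactly the standard product of an SLP with a transducer that the paper invokes implicitly, just spelled out in full.
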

\begin{proof}
We reduce from the $\PSPACE$-complete problem in~\cref{lemma:LohreyPSPACEhardness}.
    Let $R$ be the regular language from~\cref{lemma:LohreyPSPACEhardness}, and let $\mathbb{A}$ and $\mathbb{B}$ be SLPs
    with $n = |\val(\mathbb{A})| = |\val(\mathbb{B})|$.
    Observe that $\val(\mathbb{A}) \otimes \val(\mathbb{B}) \in R$ if and only if $\val(\mathbb{B})$ belongs to the language
    $K=\{ w \in \{0,1\}^n \mid \val(\mathbb{A}) \otimes w \in R \}$.
    We can construct a context-free grammar $G$ for $K$ in polynomial-time
    by viewing an automaton for $R$ as a transducer and applying it to the SLP $\mathbb{A}$.
    \end{proof}

Now in order to reduce the compressed membership problem $\val(\mathbb{B})\in
L(G)$ in \cref{lemma:membershipPSPACEhardness} to an inclusion $L(G)\subseteq
I$, the key trick is to construct an ideal $I$ that acts like a complement of
$\{\val(\mathbb{B})\}$. We expect that this will be of independent interest. 
\begin{lemma}\label{complement-ideal}
	Given an SLP $\mathbb{B}$ over $\Sigma$, one can construct in polynomial time an SLP 
	$\mathbb{I}$ over $\atoms{\Sigma}$ so that $\Idl(\val(\mathbb{B}))$ is infinite and $\Idl(\val(\mathbb{I}))\cap \Sigma^{|\val(\mathbb{B})|}=\Sigma^{|\val(\mathbb{B})|}\setminus\{\val(\mathbb{B})\}$.
\end{lemma}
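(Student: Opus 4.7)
The plan is to construct $\mathbb{I}$ so that $\val(\mathbb{I})$ is the ideal representation
\[
    I_w \;:=\; (A_1)^{\circstr}\, b_1^{\circqm}\, (A_2)^{\circstr}\, b_2^{\circqm}\, \cdots\, (A_{n-1})^{\circstr}\, b_{n-1}^{\circqm}\, (A_n)^{\circstr},
\]
where $w = b_1 \cdots b_n = \val(\mathbb{B})$ and $A_i = \Sigma \setminus \{b_i\}$ (this is essentially the construction of \cite[Lemma~3.1]{DBLP:conf/stacs/BaumannGTZ22}). The key identity to establish is
\[
    \Idl(I_w) \;=\; \Sigma^* \setminus \{u \in \Sigma^* \mid w \sw u\},
\]
i.e.\ $\Idl(I_w)$ consists exactly of the words in which $w$ does \emph{not} embed as a subword. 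Granting this, for $u \in \Sigma^{|w|}$ one has $w \sw u$ iff $u = w$, so $\Idl(\val(\mathbb{I})) \cap \Sigma^{|w|} = \Sigma^{|w|} \setminus \{w\}$ as required; moreover, whenever $|\Sigma| \geq 2$ (the relevant case: the subsequent hardness reduction uses the alphabet $\{0,1\}$), the first atom $(A_1)^{\circstr}$ alone contributes the infinite set $A_1^*$ to $\Idl(I_w)$.

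I would prove the identity by induction on $n$. The base $n=1$ is immediate since $\Idl((A_1)^{\circstr}) = (\Sigma \setminus \{b_1\})^*$ are precisely the words that avoid $b_1$. For the inductive step, write $w = b_1 w'$ with $w' = b_2 \cdots b_n$; then $I_w = (A_1)^{\circstr}\, b_1^{\circqm}\, I_{w'}$ and hence $\Idl(I_w) = A_1^* \cdot \{b_1, \varepsilon\} \cdot \Idl(I_{w'})$. For the forward inclusion, decompose any member as $u = xyz$ with $x \in A_1^*$, $y \in \{b_1, \varepsilon\}$, $z \in \Idl(I_{w'})$, and suppose toward a contradiction that $w \sw u$; the occurrence of $b_1$ realising the first letter of $w$ cannot lie in $x$ (which avoids $b_1$), hence it lies in $y$ or in $z$, and in either case one is forced to embed $w'$ into $z$, contradicting the inductive hypothesis. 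Conversely, given any $u$ that avoids $w$, splitting $u$ at its first occurrence of $b_1$ (or taking $u = u \cdot \varepsilon \cdot \varepsilon$ if $b_1 \notin u$) yields a decomposition of the required form: the suffix necessarily misses $w'$, since otherwise prepending the $b_1$ would embed $w$ in $u$.

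To produce $\mathbb{I}$ in polynomial time from $\mathbb{B}$, I would apply a derivation-tree transformation. Let $h \colon \Sigma^* \to \atoms{\Sigma}^*$ be the homomorphism $b \mapsto (\Sigma \setminus \{b\})^{\circstr}\, b^{\circqm}$, and define $\mathrm{end}(b_1 \cdots b_n) = h(b_1 \cdots b_{n-1}) \cdot (\Sigma \setminus \{b_n\})^{\circstr}$, so that $I_w = \mathrm{end}(w)$. These maps satisfy $h(uv) = h(u) h(v)$ and $\mathrm{end}(uv) = h(u)\, \mathrm{end}(v)$ for nonempty $v$, which allows a direct compositional lift to SLPs: for each nonterminal $A$ of $\mathbb{B}$ introduce two nonterminals $M_A$ and $E_A$, intended to derive $h(\val(A))$ and $\mathrm{end}(\val(A))$ respectively. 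A terminal production $A \to b$ is translated to $M_A \to (\Sigma \setminus \{b\})^{\circstr}\, b^{\circqm}$ and $E_A \to (\Sigma \setminus \{b\})^{\circstr}$, while a binary production $A \to BC$ is translated to $M_A \to M_B M_C$ and $E_A \to M_B E_C$. Taking $E_S$ (where $S$ is the start symbol of $\mathbb{B}$) as the start symbol of $\mathbb{I}$ yields an acyclic SLP of size $O(|\mathbb{B}|)$ with $\val(\mathbb{I}) = \mathrm{end}(w) = I_w$. The main conceptual hurdle is the identity characterising $\Idl(I_w)$ as the complement of the set of superwords of $w$; once that is in place, the SLP construction is a routine compositional translation and the infiniteness and length-$|w|$ properties follow immediately.
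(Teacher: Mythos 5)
Your proof is correct and uses the same core ideal representation as the paper---the alternating product of $\Atomt{(\Sigma\setminus\{b_i\})}$ and $\Atomi{b_i}$ with the final single atom dropped---but the two verifications differ in how they argue and how they compress. For correctness, you prove the stronger identity $\Idl(I_w)=\Sigma^*\setminus\{u\mid \val(\mathbb{B})\sw u\}$ by induction on $n=|\val(\mathbb{B})|$ and deduce the slice claim about $\Sigma^{n}$ as a one-line corollary; the paper instead argues directly about the slice, tracking for each $i$ the length of the shortest prefix of the representation whose ideal contains $b_1\cdots b_i$ (resp.\ $c_1\cdots c_i$). Your route re-derives the characterization from~\cite[Lemma~3.1]{DBLP:conf/stacs/BaumannGTZ22} from scratch, which is conceptually cleaner and self-contained (the paper merely invokes the idea), at the cost of an induction the paper does not need. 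For the SLP construction, you give a compositional translation with two nonterminal families, $M_A$ deriving $h(\val(A))$ and $E_A$ deriving the same string with the trailing $\Atomi{\cdot}$ suppressed, while the paper first builds the full homomorphic image by replacing $A\to b$ with $A\to \Atomt{(\Sigma\setminus\{b\})}\Atomi{b}$ and then truncates the last letter by invoking a general polynomial-time SLP suffix-removal operation~\cite[Theorem~7.1]{schleimer08}. Both routes are valid; yours avoids the external truncation lemma at the modest cost of doubling the nonterminal set and implicitly assumes $\mathbb{B}$ is in a binary normal form (productions $A\to BC$ or $A\to b$), which is harmless since SLPs can be so normalized in polynomial time. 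Like the paper, you implicitly assume $|\Sigma|\geq 2$---you flag this yourself, and it is indeed necessary both for $\Atomt{(\Sigma\setminus\{b_i\})}$ to be a legal alphabet atom and for infiniteness---which is fine for the intended application to $\Sigma=\{0,1\}$.
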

The proof uses a construction from \cite{DBLP:conf/stacs/BaumannGTZ22}.  The
authors of the latter were interested in defining languages in the existential
fragment of first-order logic over the structure the set $\Sigma^*$, ordered by
$\sw$. In one step \cite[Lemma~3.1]{DBLP:conf/stacs/BaumannGTZ22}, given a word
$u\in\Sigma^*$, they construct a word $\bar{w}\in\Sigma^*$ such that
$\dcl{\{\bar{w}\}}\cap\Sigma^{|w|}=\Sigma^{|w|}\setminus\{w\}$. To this end,
they write $w=a_1\cdots a_n$ and define $u_i$ to be a word that contains every
letter from $\Sigma$, except for $a_i$. Then, they argue that
$\bar{w}=u_1a_1\cdots u_{n-1}a_{n-1}u_n$ is as desired. Here, we cannot use
$\bar{w}$ directly, because we want $I$ to be infinite. However, we can use a
similar construction. 
\begin{proof}[Proof of \cref{complement-ideal}]
    Suppose $\val(\mathbb{B}) = b_1 \cdots b_n$ and define $I=\Idl(w)$ with
    \[ w=\Atomt{(\Sigma \setminus \{b_1\})}\cdot \Atomi{b_1}\cdot \Atomt{(\Sigma \setminus \{b_2\})} \cdot \Atomi{b_2} \cdots \Atomt{(\Sigma \setminus \{b_{n-1}\})}\cdot \Atomi{b_{n-1}}\cdot\Atomt{(\Sigma \setminus \{b_n\})}. \]
    We first show $I\cap\Sigma^{n}=\Sigma^n\setminus\{\val(\mathbb{B})\}$. Clearly, $b_1 \cdots b_n \not \in I$: Let $j_i$ be length of the shortest prefix of $w$ whose ideal contains $b_1 \cdots b_i$. 
    Clearly $j_1=2$, since the first atom $\Atomt{(\Sigma \setminus \{b_1\})}$ does not embed $b_1$. Inductively we get $j_2 = 4, \ldots, j_{n-1} = 2n-2$, which leaves only the last atom $\Atomt{(\Sigma \setminus \{b_n\})}$ to embed $b_n$, but this is not possible. 
    We now prove $\Sigma^n\setminus\{b_1\cdots b_n\}\subseteq I$. Let $c_1 \cdots c_n \neq b_1 \cdots b_n$ and choose $d$ minimally with $c_d\ne b_d$. 
    Let $h_i$ to be the length of the of shortest prefix of $w$ whose ideal contains $c_1 \cdots c_i$. 
    Then $h_{d-1} = j_{d-1} = 2d-2$. Since $c_d$ differs from $b_d$, it embeds in the $(2d-1)$-th atom $\Atomt{(\Sigma \setminus \{b_d\})}$, i.e. $h_d = 2d-1$.
    The remaining $(n-d)$-length suffix $c_{d+1} \cdots c_n$ embeds in the $2(n-i)$ length suffix $\Atomt{(\Sigma \setminus \{b_d\})} \cdot \Atomi{b_d} \cdots \Atomt{(\Sigma \setminus \{b_{n-1}\})}\cdot \Atomi{b_{n-1}}$:
    Indeed, since $\Atomt{(\Sigma \setminus \{b_k\})}\cdot\Atomi{b_k}$ embeds every letter, the aforementioned suffix even embeds every word from $\Sigma^{n-d}$.
%    Indeed, each letter embeds in one of the two consecutive atoms  and there are enough atoms to embed each letter of the suffix. % That is, $I_{\val(\mathbb{B})}$ is a complement ideal for $\val(\mathbb{B})$.

   It remains to be shown that we can compute an SLP for $w$. Note that $w$ is \emph{almost} a homomorphic image of $\val(\mathbb{B})$.
    Given SLP $\mathbb{B}$, we obtain an SLP $\mathbb{I}'$ for $w$ by 
    replacing each production $A\to b$ with $A\to \Atomt{(\Sigma\setminus\{b\})}\Atomi{b}$.
%    (a)~adding nonterminals $N_0, N_1, \bar{N_0}, \bar{N_1}$ to the nonterminal alphabet
%         (b)~replacing the terminal alphabet with $\{ \Atomi{0}, \Atomi{1}, \Atomt{\{0\}}, \Atomt{\{1\}} \}$, (c)~replacing the terminals $0$ and $1$ in all productions with the nonterminals $N_0$ and $N_1$, respectively , and
%         (d)~adding the new productions $N_i \to \bar{N_i} \cdot \Atomi{i\,}$ and $\bar{N_i} \to \Atomt{(\Sigma \setminus\{i\})}$ for $i \in \{0,1\}$. 
    Then, $\val(\mathbb{I}')=\Atomt{(\Sigma \setminus \{b_1\})}\cdot \Atomi{b_1} \cdots \Atomt{(\Sigma \setminus \{b_n\})} \cdot \Atomi{b_{n}}$.
    To get $w$ exactly, we construct $\mathbb{I}$ so that $\val(\mathbb{I})$ is $\val(\mathbb{I}')$ without its last letter. It is easy to see that this can be done in polynomial time (it follows, e.g.\ from~\cite[Theorem 7.1]{schleimer08}).
       \end{proof}

       We are now ready to prove \cref{thm:cfl-directed-hardness}. Given a
       grammar $G$ and an SLP $\mathbb{B}$ as in
       \cref{lemma:membershipPSPACEhardness}, we use \cref{complement-ideal} to
       construct an SLP $\mathbb{I}$ with
       $\Idl(\mathbb{I})\cap\Sigma^{|\val(\mathbb{I})|}=\Sigma^{|\val(\mathbb{I})|}\setminus\{\val(\mathbb{B})\}$.
       Observe that now $\val(\mathbb{B})\notin L(G)$ if and only if
       $L(G)\subseteq\Idl(\mathbb{I})$. Moreover, observe that $L(G)$ is
       finite and $\Idl(\mathbb{I})$ is infinite. Therefore, the following
       lemma implies that $\val(\mathbb{B})\notin L(G)$ if and only if
       the context-free language $L(G)\cup\Idl(\val(\mathbb{I}))$ is directed, yielding
       $\PSPACE$-hardness.

\begin{lemma}\label{lemma:equivalentproblemdirectedness}
    For finite $L\subseteq\Sigma^*$ and an infinite ideal $I$, we have $L \subseteq I $ iff $L \cup I $ is directed.
\end{lemma}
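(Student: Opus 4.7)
The forward direction is immediate: if $L\subseteq I$, then $L\cup I = I$, which is an ideal and hence directed by definition.

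For the converse, my plan is to exploit the ``prime'' property of ideals from \cref{ideals-basic-facts}: if an ideal is contained in a union of two downward closed sets, it is contained in one of them. Assume $L\cup I$ is directed; then $J := \dcl{L\cup I} = \dcl{L}\cup I$ is downward closed and directed, hence itself an ideal (note $I$ is already downward closed, which is why the downward closure distributes this way). Now observe $J\subseteq \dcl{L}\cup I$ trivially, with both $\dcl{L}$ and $I$ downward closed. Applying \cref{ideals-basic-facts} to the ideal $J$ yields either $J\subseteq \dcl{L}$ or $J\subseteq I$.

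The first case is ruled out by a cardinality argument, which is the only nontrivial step: since $L$ is finite, $\dcl{L}$ is finite as well, while $I\subseteq J$ is infinite, so $J\not\subseteq \dcl{L}$. Hence $J\subseteq I$, and in particular $L\subseteq \dcl{L}\subseteq J\subseteq I$, as required.

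The only potential obstacle is verifying that $\dcl{L\cup I}=\dcl{L}\cup I$ equals $L\cup I$ at the level of directedness, i.e.\ that directedness of $L\cup I$ implies directedness of its downward closure. This is standard and was already noted in the preliminaries (taking the downward closure does not affect directedness), so no extra work is needed.
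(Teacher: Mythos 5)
Your proof is correct and takes essentially the same approach as the paper's, relying on the prime property of ideals (\cref{ideals-basic-facts}) together with the finite/infinite cardinality distinction to rule out the wrong case. Your variant applies the prime lemma directly with $D_1=\dcl{L}$ and $D_2=I$, whereas the paper first decomposes $\dcl{L}$ into finite ideals $I_1\cup\cdots\cup I_n$ and argues the ideal $\dcl{(L\cup I)}$ must coincide with one of $I_1,\ldots,I_n,I$; this is a minor streamlining, but the core argument is identical.
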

\begin{proof}
	Clearly, if $L\subseteq I$, then $L\cup I=I$ is an ideal and thus directed. Conversely, suppose $L\cup I$ is directed. Consider an ideal decomposition $\dcl{L} = I_1 \cup \ldots \cup I_n$. Here, all $I_i$ are finite since $L$ is finite. Since $L \cup I$ is directed, the downward closure $\dcl{(L \cup I)}$ must coincide with one of the ideals in the ideal decomposition $\dcl{(L\cup I)}=I_1 \cup \ldots \cup I_n\cup I$.
    Since $I$ is the only infinite ideal, this is only possible with $\dcl{(L \cup I)} = I$. In particular, $L \subseteq I$.
\end{proof}

\section{Downward closure comparison}
\subparagraph{Regular languages} We now show how to obtain
\cref{directed-nfa-dce} as a byproduct of our results.  For the upper bounds,
we use \cref{nfa-ideal-candidate} to compute, in $\AC^1$ resp.~$\NL$, a
candidate ideal $I_i$ for each input language $L_i$. Since $L_1$ and $L_2$ are
directed, we must have $\dcl{L_i}=I_i$ and we can decide in deterministic
logspace whether $I_1=I_2$~\cite{icalpZetzsche16}.  This yields an $\NL$ upper
bound for fixed alphabets and an $\AC^1$ upper bound for arbitrary alphabets.

For the $\NL$ lower bound, we reduce from emptiness of NFAs: Given an NFA
$\NFA$, we may assume that all transitions are labeled with the empty word
$\varepsilon$.  We take an NFA $\NFA'$ that just accepts $\{\varepsilon\}$. Then
$L(\NFA)\ne\emptyset$ if and only if $\dcl{L(\NFA')}=\dcl{L(\NFA)}$, proving
$\NL$-hardness.

\vspace{-2mm}
\subparagraph{Context-free languages} We now show~\cref{directed-cfl-dce}. We first use~\cref{cfl-ideal-candidate} to compute an SLP $\mathbb{A}_i$ for
a candidate ideal for each $L_i$. By directedness, $\dcl{L_i}=\Idl(\val(\mathbb{A}_i))$. 
%.
%, which
%can be done in $\PSPACE$ according to \cref{compressed-inclusion-pspace}. 
To decide $\Idl(\val(\mathbb{A}_1))=\Idl(\val(\mathbb{A}_2))$, we use the fact
that two reduced ideal representations yield the same ideal if and only if they
are syntactically identical~\cite[Theorem~6.1.12]{thesisHalfon}. To check
$\val(\mathbb{A}_1)=\val(\mathbb{A}_2)$ we may apply the well-known result of
Plandowski~\cite{Plandowski94} that equality of SLPs can be decided in
polynomial time (see also~\cite{Lohrey12}).

For the $\compP$ lower bound, we reduce from emptiness of CFL: Given a
CFG $G$, we may assume that the only word $G$ can produce
is the empty word (otherwise, just replace all occurrences of terminal letters
with the empty word). We also take a grammar $G'$ with
$L(G')=\{\varepsilon\}$. Then clearly, $L(G)\ne\emptyset$ if and only if
$\dcl{L(G)}=\dcl{L(G')}$, yielding $\compP$-hardness.

\section{Conclusion}
We have initiated the investigation of the directedness problem and
determined the exact complexity for context-free languages and for NFAs over fixed alphabets. Over variable alphabets, we show an $\AC^1$
upper bound for NFAs.  Despite serious efforts, we leave the
exact complexity open.  Note that the complexity of directedness
is the same for DFAs and NFAs~\cref{sec:appendix-dfas}. Also, the complexity of the maximum weight path
problem %with binary encoded weights 
is not known~\cite{DBLP:journals/iandc/Cook85}.

The developed techniques could be of independent interest.  
The idea to analyze ideals by their weights might apply to other procedures for
reachability involving
ideals~\cite{DBLP:conf/stacs/FinkelG09,DBLP:conf/icalp/FinkelG09,DBLP:conf/fsttcs/BlondinFG17,DBLP:journals/iandc/BlondinFM18,DBLP:journals/lmcs/BlondinFM17,DBLP:journals/iandc/LazicS21,DBLP:conf/lics/LerouxS15,DBLP:conf/stacs/LerouxS16}.
Furthermore, our $\PSPACE$ lower bound can be viewed as progress towards
resolving the complexity of the \emph{compressed subword problem}: Our lower bound
applies in particular to deciding $L\subseteq I$ for context-free $L$ and a
compressed ideal $I$. Compressed subword, on the other hand, is equivalent to
deciding $I\subseteq J$ for compressed ideals $I,J$. As mentioned before,
it is a long-standing open problem to close the gap between the $\PP$ lower
bound and the $\PSPACE$ upper bound~\cite{DBLP:journals/iandc/Lohrey11}
(see~\cite{Lohrey12} for a survey) for compressed subword.

The surprisingly low complexity of downward closure equivalence (DCE) for
directed CFL calls for an investigation of further applications of directed
CFL. As previously stated, safety properties of concurrent programs only depend on
the downward closure of the participating
threads~\cite{DBLP:journals/corr/abs-1111-1011,DBLP:conf/icalp/0001GMTZ23,DBLP:journals/lmcs/MajumdarTZ22}.
It is conceivable that deciding
safety~\cite{DBLP:journals/corr/abs-1111-1011,DBLP:conf/icalp/BaumannMTZ20} or
other notoriously difficult problems such as
refinement~\cite{DBLP:conf/icalp/0001GMTZ23a} are more tractable for directed
threads as well.

\label{beforebibliography}
\newoutputstream{pages}
\openoutputfile{main.pages.ctr}{pages}
\addtostream{pages}{\getpagerefnumber{beforebibliography}}
\closeoutputstream{pages}
\bibliography{references}

\label{afterbibliography}
\newoutputstream{pagestotal}
\openoutputfile{main.pagestotal.ctr}{pagestotal}
\addtostream{pagestotal}{\getpagerefnumber{afterbibliography}}
\closeoutputstream{pagestotal}

%
%\begin{thebibliography}{8}
%\bibitem{ref_article1}
%Author, F.: Article title. Journal \textbf{2}(5), 99--110 (2016)

%\bibitem{ref_lncs1}
%Author, F., Author, S.: Title of a proceedings paper. In: Editor,
%F., Editor, S. (eds.) CONFERENCE 2016, LNCS, vol. 9999, pp. 1--13.
%Springer, Heidelberg (2016). \doi{10.10007/1234567890}

%\end{thebibliography}
\newpage 

\appendix
\section{Proofs of Section~\ref{sec:preliminaries}}
\restatableexistsnormalform*
\begin{proof}
    If the representation $\at_1 \cdots \at_n$ is not in reduced form, there exists two consecutive atoms $\at_i$ and $\at_{i+1}$ that are absorptive. This implies we can merge $\at_i$ and $\at_{i+1}$, by letting only the absorbing one to remain. It is easy to see that, this does not change the ideal that is represented while shortening the length of its representation. By induction on the length of the representation, it follows that every ideal has a reduced representation.
\end{proof}

\section{Proofs of Section~\ref{sec:solutionregular}}\label{app:sec:proofsofsecsolutionregular}
\restatabledownwardclosureNFA*
\begin{proof}
    Let $\NFA=(Q, \Sigma, \delta, q_0, F)$.
    By collapsing each strongly connected component of $\NFA$ into one state, we obtain a partially ordered NFA $\NFAscc$ that recognizes $\dcl{L(\NFA)}$.
    
    Formally, let $\{S_1, \ldots ,S_n\}$ be the set of SCCs of $\NFA$, which can be computed in $\NL$. Then 
    \[ \NFAscc = (\hat{Q} = \{s_1, \dots, s_n\},\Sigma, \hat{\delta}, \hat{q_0}, \hat{F} )\]
    where
    \begin{itemize}
        \item $\hat{\delta}(s_i, a) = s_j$ for all $q_i \in S_i$ and $q_j \in S_j$ with $\delta(q_i, a) = q_j$,
        \item $\hat{q_0} = s_i$ for the $i$ for which $q_0 \in S_i$,
        \item $s_i\in \hat{F}$ for all $i$ for which $F\cap S_i \neq \emptyset$.
    \end{itemize}
    
    Observe that $\NFAscc$ is partially ordered with respect to the reachability order on $Q$.
    
    \begin{claim}
        $L(\NFAscc) = \dcl{L(\NFA)}$.
    \end{claim}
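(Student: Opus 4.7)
The plan is to prove $L(\NFAscc) = \dcl{L(\NFA)}$ by establishing the two inclusions separately, exploiting the two key features of the construction: each transition of $\NFAscc$ is witnessed by an actual edge in $\delta$ between specific members of the collapsed SCCs, and each $S_i$ is strongly connected in $\NFA$.

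For the inclusion $L(\NFAscc) \subseteq \dcl{L(\NFA)}$, I will take any accepting run $s_{i_0} \xrightarrow{a_1} s_{i_1} \cdots \xrightarrow{a_n} s_{i_n}$ of $\NFAscc$ on a word $w = a_1 \cdots a_n$. By the definition of $\hat\delta$, each step $s_{i_{k-1}} \xrightarrow{a_k} s_{i_k}$ is witnessed by an edge $p_k \xrightarrow{a_k} p_k' \in \delta$ with $p_k \in S_{i_{k-1}}$ and $p_k' \in S_{i_k}$. I will stitch these witnessing edges into an accepting $\NFA$-run by inserting intra-SCC connector walks: from $q_0$ to $p_1$ inside $S_{i_0}$; from $p_k'$ to $p_{k+1}$ inside the common SCC $S_{i_k} = S_{i_{(k+1)-1}}$; and finally from $p_n'$ to some $q_f \in F \cap S_{i_n}$ (which exists by the definition of $\hat F$). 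All such walks exist by the strong connectivity of each SCC. The letters along the connectors merely augment $w$, yielding a word $v \in L(\NFA)$ with $w \sw v$, so $w \in \dcl{L(\NFA)}$.

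For the reverse inclusion $\dcl{L(\NFA)} \subseteq L(\NFAscc)$ I will start from the observation $L(\NFA) \subseteq L(\NFAscc)$, which is immediate: every accepting $\NFA$-run projects to an accepting $\NFAscc$-run under $q \mapsto s_i$ (for $q \in S_i$), and every traversed edge lies in $\hat\delta$ by construction. Taking downward closures reduces the task to showing that $L(\NFAscc)$ itself is downward closed. I will analyze an accepting $\NFAscc$-run block-by-block: a letter consumed on an intra-SCC self-loop of $\NFAscc$ can be dropped by simply omitting that step, preserving the rest of the run; a letter on a crossing transition $s_i \xrightarrow{a} s_j$ with $i \ne j$ must be bypassed by an $\varepsilon$-variant of that crossing, which is present in $\hat\delta$ whenever $\delta$ carries the corresponding $\varepsilon$-edge, and is otherwise supplied by the standard $\varepsilon$-augmentation of $\delta$ that parallels every transition with an $\varepsilon$-edge (such parallel $\varepsilon$-edges do not alter mutual reachability, hence do not alter the SCC decomposition used to build $\NFAscc$).

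The main obstacle is exactly this downward-closure step: intra-SCC deletions are trivial, but the cross-SCC letter deletions demand the $\varepsilon$-bypasses inside $\hat\delta$, which is the only delicate point of the argument. Once that mechanism is in place, the two inclusions combine into the desired equality $L(\NFAscc) = \dcl{L(\NFA)}$.
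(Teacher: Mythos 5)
Your plan follows the same skeleton as the paper's proof: for $L(\NFAscc)\subseteq \dcl{L(\NFA)}$ you stitch the $\delta$-witnesses of the $\hat\delta$-transitions together with intra-SCC connector walks, and for the converse you project an accepting $\NFA$-run onto the SCC-quotient. Both of those pieces are the same argument the paper gives.

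Where you diverge is that you make the downward-closure step explicit, and in doing so you actually patch a real gap in the paper's reasoning. The paper labels its second paragraph ``$\dcl{L(\NFA)}\subseteq L(\NFAscc)$'' but the argument it gives (project an accepting $\NFA$-run on a word $w\in L(\NFA)$) only establishes $L(\NFA)\subseteq L(\NFAscc)$; to upgrade this to the downward closure one must know that $L(\NFAscc)$ is itself downward closed, and the formal construction in the appendix (which merely quotients the transition relation $\delta$ by the SCC partition, preserving labels) does not guarantee this. Your two-letter example class makes the failure concrete: with $\NFA$ consisting of $q_0\xrightarrow{a}q_1$, $q_1$ final, one gets $L(\NFAscc)=\{a\}$, missing $\varepsilon\in\dcl{L(\NFA)}$. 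You correctly diagnose that what is needed is an $\varepsilon$-edge paralleling each cross-SCC transition (self-loops pose no problem, since one simply omits that step), and you correctly observe that adding such $\varepsilon$-parallels does not change mutual reachability and therefore leaves the SCC decomposition intact. With that augmentation in place your downward-closure argument goes through cleanly and the two inclusions combine to give the claimed equality. In short: same approach, but your version is the more careful one, and the $\varepsilon$-augmentation you flag as ``the only delicate point'' is indeed an ingredient the paper implicitly relies on but never states.
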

    
    \begin{claimproof}
        $\mathbf{(\dcl{L(\NFA)}\subseteq L(\NFAscc))}$ Let $q_0 \xrightarrow{a_1} q_1 \xrightarrow{a_2} \dots q_{n-1} \xrightarrow{a_{n}} q_n$ be a path of $\NFA$ with $q_n \in F$ and $w = a_1 \ldots a_n$. All transitions remain in $\NFAscc$ for some $s_0' \xrightarrow{a_1} s_1' \xrightarrow{a_2} \ldots s_{n-1}' \xrightarrow{a_n} s_n'$ where $q_i \in S_i'$ and $s_n' \in \hat{F}$. Thus, $w \in L(\NFAscc)$.
    
        $\mathbf{(L(\NFAscc)\subseteq \dcl{L(\NFA)})}$. Let $s_0 \xrightarrow{a_1} s_1 \xrightarrow{a_2} \dots s_{n-1} \xrightarrow{a_{n}} s_n$ be a path in $\NFAscc$ for some $s_n \in \hat{F}$ and let $w = a_1\dots a_n$. The existence of the transition $s_i \xrightarrow{a_{i+1}} s_{i+1}$ in $\hat{\delta}$ implies the existence of a transition $q'_i \xrightarrow{a_{i+1}} q'_{i+1}$ for some $q'_i\in S_i$ and $q'_{i+1} \in S_{i+1}$ in $\delta$. Since $S_i$ and $S_{i+1}$ are SCCs, there exists a path $q'_i \xrightarrow{v a_{i+1}v'} q'_{i+1}$ for some $v, v' \in \Sigma^*$. It follows that, there exists a path $q_0 \xrightarrow{v_1 a_1 v_2 \dots a_n v_{n+1}} q$ for some $q\in F$.  This gives $w \in \dcl{L(\NFA)}$.
    \end{claimproof}
\end{proof}

\restatableidealNFA*
\begin{proof}
    Let $\NFAscc = (\hat{Q},\Sigma, \hat{\delta}, \hat{q_0}, \hat{F} )$. To construct $\NFAidl = (\Qidl, \Gamma, \deltaidl, \qinitidl, \Fidl )$, we double the state set to $\Qidl = \hat{Q} \cup \hat{Q}^\cpy$, let the initial state $\qinitidl = \hat{q_0}$ and final states $\Fidl = \hat{F}^\cpy$. Let the incoming transitions of each node $q \in \hat{Q}$ be received by $q \in \Qidl$ and direct outgoing transitions of it to $q^\cpy \in \Qidl$. Turn all non self-loop transitions in $\NFAscc$ to single atom transitions in $\NFAidl$.
    If $q$ has a self-loop, then let $q$ go to $q^\cpy$ in $\NFAidl$ with the alphabet atom that contains all letters on the self loop. If $q$ contains no self loops, let $q \xrightarrow{\emp} q^\cpy$. 
    Formally, the transition function $\deltaidl$ and the finite alphabet $\Gamma \subseteq \atoms{\Sigma}$ are set as follows
    \begin{enumerate}
        \item preserve the $\emp$-transitions i.e. for $(q_1, \emp, q_2) \in \hat{\delta}$, add $({q_1}^\cpy, \emp, q_2) \in \deltaidl$,
        \item for each $q$ in $\hat{Q}$ that has a non-$\emp$ self loop in $\NFAscc$, add $\atomt_q= \{a \in \Sigma \mid (q,a,q) \in \hat{\delta}\}$ to $\Gamma$ and add $(q, \atomt_q, q^\cpy) \in \deltaidl$,
        \item for each $q$ in $\hat{Q}$ that has no non-$\emp$ self loop in $\NFAscc$, add $(q, \emp, q^\cpy) \in \deltaidl$,
        \item for each $(q_1, a, q_2) \in \delta$ where $q_1 \neq q_2$ and $a \in \Sigma$, add $\atomi$ to $\Gamma$ and $({q_1}^\cpy, \atomi, q_2)$ to $\deltaidl$. 
 \end{enumerate}
    
    It is easy to see that the language $\NFAidl$ accepts is the ideal decomposition of $\dcl{L(\NFA)}$ given in equation~\eqref{eq:acc-paths-ideals}. %Note that, the language of $\NFAidl$ also accepts some subwords of the ideals listed in equation~\ref{eq:acc-paths-ideals}, but this is not a problem since the subwords of ideals are also ideals of $L$ themselves.
\end{proof}

\begin{remark}[Complexity of the construction of $\trl$ --~\cref{subsec:reducingideals}]\label{remark:complexityoftrl}
Recall that the size of $\Gamma$ is polynomial due to~\cref{lemma:idealNFA}. Then $\ltr$ can be constructed in $\logspace$, since to add the outgoing edges for each $t$, we only need to go over each $\atom \in \Gamma$ and check if $t$ contains $\alpha$, which takes at most $\log(|\Sigma|)$ space. 
\end{remark}

\begin{lemma}\label{lem:reversedtransducer}
    For a transducer $\tr$, let $\tr^{rev}$ be the transducer obtained by reversing the edges and flipping the initial and final states of $\tr$. For a word $w = a_1 \ldots a_m$ over the input alphabet of $\tr$, let $w^{rev}$ be the reversed word, i.e. $w^{rev} = a_m\ldots a_1$. 
    Then, $\{ v \mid v^{rev} \in \tr^{rev}(w^{rev}) \} = \tr(w)$.
\end{lemma}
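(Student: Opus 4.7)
The plan is to establish a bijection between accepting runs of $\tr$ on input $w$ producing output $v$ and accepting runs of $\tr^{rev}$ on input $w^{rev}$ producing output $v^{rev}$, by simply reversing the sequence of transitions. Given such a bijection, the set equality follows immediately by unwrapping the definition of $\tr(\cdot)$ and $\tr^{rev}(\cdot)$.

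First I would recall the relevant definitions. A run of $\tr$ has the shape
\[ r = (q_1, a_1', b_1', q_2)(q_2, a_2', b_2', q_3) \cdots (q_m, a_m', b_m', q_{m+1}), \]
with $q_1 = \trinit$, $q_{m+1} \in \trF$, each $a_i' \in \trAlph^i \cup \{\emp\}$, each $b_i' \in \trAlph^o \cup \{\emp\}$, and $\inp(r) = a_1'\cdots a_m'$, $\out(r) = b_1'\cdots b_m'$ (treating $\emp$ as the identity in the free monoid). By construction, $(p, a, b, p') \in \trE^{rev}$ if and only if $(p', a, b, p) \in \trE$, and $\tr^{rev}$ has initial state some element of $\trF$ and final state $\trinit$.

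Next I would define, for every run $r$ of $\tr$ as above, the reversed sequence
\[ r^{rev} = (q_{m+1}, a_m', b_m', q_m)(q_m, a_{m-1}', b_{m-1}', q_{m-1}) \cdots (q_2, a_1', b_1', q_1). \]
Each transition of $r^{rev}$ belongs to $\trE^{rev}$ by the edge-reversal rule; the first state $q_{m+1}$ is initial in $\tr^{rev}$ and the last state $q_1 = \trinit$ is final in $\tr^{rev}$. Hence $r^{rev}$ is an accepting run of $\tr^{rev}$, and clearly $\inp(r^{rev}) = a_m'\cdots a_1' = (\inp(r))^{rev}$ and $\out(r^{rev}) = (\out(r))^{rev}$. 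The assignment $r \mapsto r^{rev}$ is an involution on the set of all finite sequences of transitions, and it restricts to a bijection between accepting runs of $\tr$ and accepting runs of $\tr^{rev}$.

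From this bijection the lemma is immediate: $v \in \tr(w)$ iff there is an accepting run $r$ of $\tr$ with $\inp(r)=w$ and $\out(r)=v$, iff $r^{rev}$ is an accepting run of $\tr^{rev}$ with $\inp(r^{rev})=w^{rev}$ and $\out(r^{rev})=v^{rev}$, iff $v^{rev} \in \tr^{rev}(w^{rev})$. There is no real obstacle here; the statement is essentially bookkeeping that confirms the definition of $\tr^{rev}$ is the right one, and the only care needed is to verify that initial/final roles are swapped correctly and that $\inp$/$\out$ are preserved under sequence reversal.
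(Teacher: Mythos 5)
Your proof is correct and takes essentially the same approach as the paper's: both reverse the transition sequence of an accepting run to obtain an accepting run of $\tr^{rev}$, with your observation that $r \mapsto r^{rev}$ is an involution neatly packaging the paper's two containment directions (the paper proves $\supseteq$ explicitly and obtains $\subseteq$ by noting $\tr = (\tr^{rev})^{rev}$). One small detail the paper handles explicitly that you gloss over with ``initial state some element of $\trF$'': since $\tr$ may have several final states, $\tr^{rev}$ would have several initial states, violating the paper's definition of transducer, so they add a fresh initial state with $\varepsilon$-transitions to each former final state.
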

\begin{proof} 
    We will continue with the proof as if $\tr^{rev}$ has multiple initial states, which are the final states of $\tr$. In fact we can easily set a unique initial state by adding a new state $q_{init}$ to $\tr^{rev}$ and adding $\varepsilon-$transitions, $(q_{init}, \varepsilon, \varepsilon, q')$ for each final state $q'$ of $\tr$, to comply with our definition of a transducer.
    
    \noindent \textbf{($\supseteq$).} For any word $v \in \tr(w)$, where $ v = \out(r)$ and $w = \inp(r)$ for a run \\ $r = (q_1, a_1, b_1, q_2)(q_2, a_2, b_2, q_3)\cdots(q_m, a_m, b_m, q_{m+1})$ in $\tr$, where $q_1$ is the initial state. 
    Then $r^{rev} = (q_{m+1}, a_m, b_m, q_m)\cdots (q_3, a_2, b_2, q_2) (q_2, a_1, b_1, q_1)$ is a run of $\tr^{rev}$ where $\inp(r^{rev}) = w^{rev}$ and $\out(r^{rev}) = v^{rev}$. That is, $v^{rev} \in \tr^{rev}(w^{rev})$, which shows $\{v \mid v^{rev} \in \tr^{rev}(w^{rev}) \} \supseteq \tr(w)$.
    \newline \noindent \textbf{($\subseteq$).} This side follows from the observation that $\tr = (\tr^{rev})^{rev}$.
\end{proof}
It follows from~\cref{lem:reversedtransducer} that, for a given ideal representation $\at_1\cdots \at_n$, $\trr(\at_1 \cdots \at_n)$ is right-reduced and from~\cref{lem:right-normality} that $\trr$ preserves left-reducedness.
It immediately follows that the composition transducer $\trl \circ \trr$, takes an ideal representation and prints out an equivalent reduced representation. 
\begin{lemma}[\cite{Shallitbook2008} ]\label{lem:shallitregular}
    Given a transducer $\tr$ and an NFA $\NFA$, $\tr(L(\NFA))$ is a regular language. Furthermore, we can calculate an NFA that accepts the language in $\logspace$.
\end{lemma}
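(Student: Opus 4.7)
The approach is the classical product construction between the NFA and the transducer, synchronizing on input symbols and outputting the transducer's output symbols. Concretely, given $\NFA = (Q, \Sigma, \delta, q_0, F)$ and $\tr = (T, \Sigma, \Sigma_o, t_0, F_T, E)$ (we may assume the input alphabet of $\tr$ equals the alphabet of $\NFA$), I would define $\NFA' = (Q \times T, \Sigma_o, \delta', (q_0, t_0), F \times F_T)$. The transition relation $\delta'$ is populated by three rules: (i) for each pair $(q,a,q') \in \delta$ and $(t,a,b,t') \in E$ with $a \in \Sigma$ and $b \in \Sigma_o \cup \{\varepsilon\}$, add $((q,t), b, (q',t'))$; (ii) for each $(q, \varepsilon, q') \in \delta$ and each $t \in T$, add $((q,t), \varepsilon, (q',t))$, letting $\NFA$ advance while $\tr$ idles; (iii) for each $(t, \varepsilon, b, t') \in E$ and each $q \in Q$, add $((q,t), b, (q,t'))$, letting $\tr$ emit without consuming input.

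Correctness is routine: a run of $\NFA'$ from $(q_0,t_0)$ to some $(q_f, t_f) \in F \times F_T$ outputting $v \in \Sigma_o^*$ corresponds bijectively (up to interleaving of idle steps) to an accepting run of $\NFA$ on some $w$ together with a run of $\tr$ with $\inp(\cdot)=w$ and $\out(\cdot)=v$. Hence $L(\NFA') = \{\, v \mid \exists w \in L(\NFA):\ (w,v) \in L(\tr)\,\} = \tr(L(\NFA))$.

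For the logspace bound, observe that the state set $Q \times T$, the alphabet $\Sigma_o$, the initial state, and the final-state set can each be enumerated by iterating over the components of the two given machines with two logspace counters. Each potential transition $((q,t), b, (q',t'))$ is included iff a constant-sized conjunction of membership queries in $\delta$ and $E$ holds, which is a $\logspace$-uniform local check. Thus the entire description of $\NFA'$ is produced on the output tape by a deterministic logspace transducer.

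The only subtlety, and hence the main thing to be careful about, is the handling of $\varepsilon$-transitions on the transducer's input and output sides: the three rules above must cover exactly the cases where $\NFA$ advances, $\tr$ advances on the input, or $\tr$ advances on the output, without double-counting. Once this bookkeeping is done, both correctness and the $\logspace$ resource bound follow immediately. (This is the construction standardly attributed to the folklore product of an automaton with a rational transducer; see~\cite{Shallitbook2008,Berstel1979}.)
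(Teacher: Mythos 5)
Your proof is correct and uses the standard product construction, which is precisely the argument behind the cited result. Note that the paper itself does not supply a proof of this lemma—it is stated as a known fact with a citation to Shallit's book—so there is no alternative argument in the paper to compare against. Your three transition rules correctly partition the possible moves (synchronized non-$\varepsilon$ step, NFA-only $\varepsilon$-step, transducer-only $\varepsilon$-input step), and the interleaving argument for correctness is right: any pair of runs on the same input word $w$ can be merged into a run of $\NFA'$, and any run of $\NFA'$ projects back to such a pair. The logspace claim is also justified correctly: the product automaton has size $O(|Q|\cdot|T|)$ and each transition of $\NFA'$ is present iff a constant number of membership tests against $\delta$ and $E$ succeed, all of which a deterministic logspace transducer can perform while enumerating candidate tuples with a fixed number of counters. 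One small point worth making explicit (you implicitly handle it by assuming the alphabets coincide): one should assume the NFA's alphabet is contained in the transducer's input alphabet, as otherwise $\tr(L(\NFA))$ is only about those words the transducer can read; this is a harmless preprocessing step.
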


\begin{corollary}\label{corollary:complexitycompositiontransducer}
    Due to~\cref{remark:complexityoftrl},~\cref{lem:reversedtransducer} and~\cref{lem:shallitregular}, an NFA that accepts $\trl\circ \trr (\NFAidl)$ can be constructed in $\logspace$.
\end{corollary}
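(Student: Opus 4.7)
The plan is to chain together the three cited constructions, verifying that each step stays within $\logspace$ and that the composition does too. First, I would invoke \cref{lemma:idealNFA} (already proved) to obtain $\NFAidl$ in $\logspace$, and then construct $\trl$ in $\logspace$ as spelled out in \cref{remark:complexityoftrl}. Next, I would produce $\trr$ by mechanically reversing every transition of $\trl$ and swapping its initial and final states: given the transition table of $\trl$ as part of the input, this is a trivial $\logspace$ transformation, and by \cref{lem:reversedtransducer} the resulting transducer implements exactly the ``reverse'' semantics that was used earlier to argue that $\trr$ produces a right-reduced ideal representation from any ideal representation.

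Now I would apply \cref{lem:shallitregular} twice. First, with input transducer $\trr$ and input NFA $\NFAidl$, I obtain in $\logspace$ an NFA $\NFA_1$ with $L(\NFA_1) = \trr(L(\NFAidl))$. Then, with input transducer $\trl$ and input NFA $\NFA_1$, I obtain in $\logspace$ an NFA $\NFA_2$ with $L(\NFA_2) = \trl(L(\NFA_1)) = \trl \circ \trr(L(\NFAidl))$, which is the desired language.

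The only real thing to justify is that this chain of $\logspace$ constructions is itself computable in $\logspace$. This is the standard composition-of-logspace-transducers argument: each individual construction uses only $O(\log)$ work space and writes its output on a write-only tape whose contents are polynomially bounded; whenever the outer procedure needs to read the $i$-th symbol of an intermediate output, we rerun the earlier procedure from scratch and count symbols until we reach position $i$, rather than storing the intermediate result. Since we are composing a constant number of such procedures and each intermediate object remains polynomially sized, the overall space usage stays logarithmic. The main obstacle — to the extent that there is one — is simply to check that \cref{lem:shallitregular} indeed produces an NFA of size polynomial in its inputs (so that the chained construction does not blow up), which is standard for the product construction underlying the image of a regular language under a rational transduction.
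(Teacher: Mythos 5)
Your proof is correct and follows essentially the same route the paper intends: build $\trl$ via the cited remark, obtain $\trr$ by a trivial reversal, apply \cref{lem:shallitregular} twice, and observe that the constant-length chain of $\logspace$-computable transformations remains $\logspace$-computable by the standard composition argument. One small inaccuracy: you write that \cref{lemma:idealNFA} produces $\NFAidl$ "in $\logspace$", but that lemma only claims an $\NL$ computation; this does not affect the corollary, which takes $\NFAidl$ as given and only asserts that the \emph{subsequent} step to an NFA for $\trl\circ\trr(L(\NFAidl))$ is $\logspace$-computable.
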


\restatableembedding*
\begin{proof}
    Fix an order on $\Sigma$ and for each $\Delta \subseteq \Sigma$, let $\worddelta{\Delta}$ be some word that contains each letter in $\Delta$ once, in the increasing order.
    For an atom $\at_i$ define $\wordatom{\at_i}$ as follows,
    \begin{align}
        \wordatom{\at_i} = \begin{cases}
            a, &\text{ if } \, \at_i = \atomi,\\
            {\worddelta{\Delta}}^{m+1}, &\text{ if }\, \at_i = \atomt
        \end{cases}
    \end{align}
    where ${\worddelta{\Delta}}^{m+1} $ is the concatenation of $(m+1)$-many $\worddelta{\Delta}$s.
    \newline
    
We now define a function $f$ that satisfies premises of~\cref{line:one}-\ref{line:three}:

Let $f$ send each $i$ to the $j$ for which $\atb_1\cdots \atb_j$ is the shortest prefix of $\atb_1\cdots \atb_m$ for which $w_{\at_1}\ldots w_{\at_i} \in \Idl(\atb_1 \cdots \atb_j)$.
It is easy to see that for this $f$, the premise of~\cref{line:one} trivially holds. 
Now we show that the premise of~\cref{line:two} holds. First assume $\at_i$ is the single atom $\atomi$ and therefore $w_{\at_i}$ = $a$. Assume $\atb_j$ does not contain $\at_i$, i.e. $a \not \in \Idl(\atb_j)$. Then it is easy to see that there exists a prefix $\atb_1 \cdots \atb_{j'}$ with $j' < j$ the language of which contains $w_1 \ldots w_i$.
Now assume $\at_i$ is an alphabet atom $\atomt$. Then $w_{\at_i} = {\worddelta{\Delta}}^{m+1}$. Assume $f(i-1) = j'$, that is $\atb_1 \cdots \atb_{j'}$ is the shortest prefix $w_{\at_1} \ldots w_{\at_{i-1}}$ is in the language of. Then, $w_{\at_i} \in \Idl(\atb_{j'} \cdots \atb_j)$.
Due to the length of ${\worddelta{\Delta}}^{m+1} \geq |\Delta|\cdot(m+1)$, there exists some (smallest) $k \in [j',j]$ and an infix $w'$ of ${\worddelta{\Delta}}^{m+1}$ of length $|\Delta|+1$ such that $w' \in L(\atb_k)$. The fact that $|w'|\geq 2$ implies that $\atb_k$ is an alphabet atom and the fact that each infix of ${\worddelta{\Delta}}^t$ of length $|\Delta|$ contains each letter in $\Delta$ further implies that the alphabet of $\atb_k$ contains $\Delta$, and therefore that $\atb_k$ contains $\at_i$.
This implies that ${\worddelta{\Delta}}^{m+1}  \in \Idl(\atb_k)$ and thus we have $w_{\at_1}\ldots w_{\at_i} \in \Idl(\atb_1 \cdots \atb_k)$, and since $k$ is the smallest number that satisfies this condition, $j=k$. Therefore in both cases, the premise of~\cref{line:two} holds. 
Finally we show that the premise of~\cref{line:three} holds. Due to items~\ref{line:one} and~\ref{line:two}, ~\cref{line:three} can be violated only if there exists $\at_i = \at_{i+1} = \atb_j = \atomi$ with $f(i) = f(i+1) = j$ for some $a \in \Sigma$.
But if $\atb_1\cdots \atb_j$ is the shortest prefix for which $w_{\at_1} \ldots w_{\at_{i-1}} a \in \Idl(\atb_1 \cdots \atb_j)$, then clearly $w_{\at_1} \ldots w_{\at_{i-1}} a a \not \in \Idl(\atb_1 \cdots \atb_j)$ since $\Idl(\atb_j) = \langatomi$, which shows that~\cref{line:three} holds.
\end{proof}

\section{Proofs of Section~\ref{sec:solutionCFL}}\label{app:sec:proofsofsecsolutionCFL}
\restatableidealG*
\begin{proof}
For $L \subseteq \L(G)$, we denote by $\alp(L)$ the set of terminals occuring in $L$.
Since we are interested in calculating a representation of $\dcl{L(G)}$ at the end, we can assume WLOG that the language generated by each nonterminal $A \in N$ is downward closed.
Upon this assumption, we recite some observations by Courcelle.

We observe that any nonterminal $A$ that produces itself twice, i.e. $A \to^* w A w' A w''$ for some $w,w',w'' \in (N \cup \Sigma)^*$, exactly produces $\alp(A)$. 
It is clear that $L(A) \subseteq (\alp(A))^*$. To observe the other side it is sufficient to observe that for any $x, y \in (\alp(A))^*$, $xy$ and $yx$ are both in $L(A)$.
If a nonterminal $A$ produces itself twice, we write $A \Rightarrow^2 A$.

If a nonterminal $A$ produces a nonterminal $B$, i.e. $A \to^* w B w'$ for some $w, w' \in (N \cup \Sigma)^*$, we write $A \Rightarrow^1 B$. Clearly, $A \Rightarrow^2 A$ implies $A \Rightarrow^1 A$.
In the following, assume no nonterminal produces itself twice. Take $A \Rightarrow^1 B \Rightarrow^1 A$. Clearly $B \Rightarrow^1 B$. For all such $A, B$ we write $A \equiv B $ and create equivalence classes
$A_{\equiv} = \{ B \mid B \equiv A\}$. 

Then we know from~\cite{Courcelle91} that each equivalence class generates the same language and this language is defined as follows. 
$$ L(A_{\equiv}) = (\lA)^* ~\dcl{\dot{A}} ~ (\rA)^* \quad \text{where}$$
$$\lA := \bigcup_{X \in A_{\equiv}} \alp(\leftset(X)), \quad \quad \rA := \bigcup_{X \in A_{\equiv}}\alp(\rightset(X))$$ 
We set $\leftset(X), \rightset(X)$ and $\dot{A}$ as follows,
\begin{itemize}
    \item Let $X \in A_\equiv$ and $X \to w$ be a production where $w \in (N \cup \Sigma)^*$ contains no nonterminals that produce $X$. Then $L(w) \in \dot{A}$.
    \item Let $X, Y \in  A_\equiv$ and $X \to w Y w'$ be a production with $w, w' \in  (N \cup \Sigma)^*$. Clearly, $w, w'$ do not contain any nonterminals that produce $X$, since we assumed $X$ does not produce itself twice.
    Then $w \in \leftset(X)$ and $w' \in \rightset(X)$.
\end{itemize}
Lastly, for nonterminals with $A \not \Rightarrow^1 A$, we have the trivial rule $L(A) = \dcl{\dot{A}}$.

%It is well-known due to Haines~\cite{HAINES196994} that the downward closure of every language is regular. 
 
Remember that the ideals of $\dcl{L(G)}$ are defined as a finite sequence of alphabet and single atoms. 
 Next we construct a CFG $\Gidl$ that produces the language of the ideal representations of $\dcl{L(G)}$.
 
Let $N' \subseteq N$ denote the set of nonterminals in $G$ that produces themselves once, and not twice. We fix a representative nonterminal from each equivalence class $A_{\equiv}$ in $N'$.
For each $A \in N$, if $A\in N'$, we set $\bar{A}$ to the representative of its equivalence class. Otherwise, we set $\bar{A}$ to $A$ itself. 

\hspace{0.3cm}

Let $G= \langle N, \Sigma, \pro, S \rangle$. We follow the below steps to compute $G^\idl= \langle N^\idl, \Gamma, \pro^\idl, S^\idl \rangle$:
\begin{enumerate}
    \item For each $A \in N$, add $\bar{A}$, and for each $A \in N'$ additionally add a new nonterminal $\twobar{A}$ in $N^\idl$.
    \item For each $A \in N$ with $A \Rightarrow^2 A$, add $\bar{A} \to (\alp(A))^*$ in $\pro^\idl$, and $(\alp(A))^*$ in $\Gamma$.
    \item For each $A \in N'$, add $\bar{A} \to (\lA)^* \twobar{A} (\rA)^*$ in $\pro^\idl$ and add $(\lA)^*$ and $(\rA)^*$ in $\Gamma$.
    \item For each $A \to a$ in $\pro$ for some $a \in \Sigma$, if $A \in N'$ add $\twobar{A} \to \{a, \emp\}$, otherwise add $\bar{A} \to \{a, \emp\}$ in $\pro^\idl$ and add $ \{a, \emp\}$ in $\Gamma$.
    \item For each $A \to BC$ in $\pro$ where neither $B$ nor $C$ produces $A$, 
    if $A \in N'$, add  $\twobar{A} \to \bar{B}\bar{C}$, otherwise add $\bar{A} \to \bar{B}\bar{C}$ in $\pro^\idl$.
    \item Add $\{\emp\}$ in $\Gamma$ and for each $Y \in N^\idl$, add $Y \to \{\emp\}$ in $\pro^\idl$.
    \item Set $S^\idl$ to $\bar{S}$.
\end{enumerate}

\textbf{Complexity of the Construction.}
We split $N$  into subsets of nonterminals that 
(i) produce themselves twice, 
 (ii) produce themselves and each other once, and 
 (iii) don't produce themselves.
 This computation takes only linear time, e.g. using depth-first search. 
Computing $\alp(w)$ for a $w \in (N \cup \Sigma)^*$ similarly takes only linear time. 
In order to compute $\lA$ and $\rA$ for each equivalence class $A_{\equiv}$, we need to iterate through the productions and compute $\alp(w)$ and $\alp(w')$ wherever applicable, which can again be accomplished in linear time.
Clearly, the remaining steps do not take more time either, since we only go through each production once and do a linear work. 

In addition, note that the size of $\Gamma$ is linear in $\max\{|\Sigma|,|\pro|\}$ and the size of $N^\idl$ is at most $2 \cdot |N|$.

\textbf{Correctness of the Construction.} We claim that $L(G^\idl)$ is an ideal decomposition of $\dcl{L(G)}$, i.e. the words accepted by the ideals in $\bigcup L(G^\idl)$ are equal to $\dcl{L(G)}$. 
For this, we only need to show that our contruction mimics Courcelle's. For nonterminals that produces themselves twice, the language they generate is $(\alp(A))^*$ by Courcelle's construction. 
Clearly, those words are exactly the ones contained in the ideal $(\alp(A))^*$, which is the language the nonterminal generates, due to Step 2.
For nonterminals that do not produce themselves, Step 4 and 5 make sure that the production with $\bar{A}$ in $\pro^\idl$ generates ideals that capture exactly $L(A)$.
For terminals that produce themselves only once, we need to show that the production in Step 3 gives $(\lA)^* \dcl{\dot{A}} (\rA)^*$. Since the atoms $(\lA)^*$ and $(\rA)^*$ clearly mimic the rule, we only need to show that $L(\twobar{A}) = \dcl{\dot{A}}$. But this is clear since $\twobar{A}$ has the productions that exactly capture the productions of $A \to w$ in $G$ that do not produce $A$. Thus, $L(w)$ is in $L(\twobar{A})$ for all such $w$. 
Since these are the only productions $\twobar{A}$ mimics, and $L(\twobar{A})$ is downward closed, $L(\twobar{A}) = \dcl{L(\dot{A})}$.
%\textbf{1.} If $w \in L(G)$, then $w \in I$ for some $I \in L(G^\idl)$.
%The correctness of Step 2 follows from Courcelle's construction. The correctness of Step 4 and 5 is clear since each subword of the production needs to be accepted by the language, and those are the only words that are accepted by the generated ideal. In Step 3, the only place in the product that is different from Courcelle's contruction is the middle term $\bar{A}$. Thus, we need to argue that $\twobar{A}$ produces exactly $\dot{A}\downarrow$. To see this, let $w' \in \dot{A}$. Then there exists a $B \in \bar{A}_{\equiv}$ with $B \to w$ where $w$ do not contain any nonterminals from $\bar{A}_{\equiv}$ and $w \to^* w'$. 
%Since $\twobar{A} \to w$ is in $\pro^\idl$ (by Step 4 or 5), it follows from the correctness of these steps that $w$ and all subwords of it are produced by $\bar{A}$. Clearly, these are the only words that are generated. Step 6 and 7 are self-explanatory.

\textbf{Ideal representations of $L(G^\idl)$.} 
Observe that $G^\idl$ is \emph{acyclic}. Therefore, the language it generates is finite and the length of the computation tree that yields each ideal representation is at most $|N^\idl|$. Consequently, the length of each ideal representation is at most $3\cdot2^{2 \cdot |N|}$ (due to the production in Step 3 branching 3 times, and in the other branching at most twice).
%Let $I$ be an ideal in $L(G^\idl)$. Then there exists an ideal $J$ equivalent to $I$ in $L(G^\idl)$ such that in the computation tree that yields $J$, each nonterminal is taken at most twice. 
%This is because in Step 2, if $\bar{A}$ is taken more than twice, the production is equivalent to $\bar{A} \to (\lA)^* (\lA)^* \bar{A} (\rA)^* (\rA)^*$, which is equivalent to the original production. 
%We can see this by making $G^\idl$ acyclic by defining two different nonterminals, say $\bar{A}$ and $\bar{A}^0$ instead of each $\bar{A}$ defined in Step 2. The production $\bar{A} \to (\lA)^* \bar{A} (\rA)^*$ is added to $N^\idl$ as before, but this time 
%whenever $\bar{A}$ is taken in Step 3, we add $\bar{A}^0 \to \{w_1^j, \emp\}_{j \in [1, n_1]} \overline{W_1}\ldots$ to $\pro^\idl$, instead of the version with $\bar{A}$.
\end{proof}

\begin{lemma}[\cite{Shallitbook2008} Theorem 4.1.5]\label{lem:shallitCFL}
    Given a transducer $\tr$ and a context free grammar $G$, $\tr(L(G))$ is a context free language. Furthermore, we can calculate a CFG that accepts the language in polynomial time.
\end{lemma}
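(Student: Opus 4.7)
The plan is to apply the standard triple construction familiar from the textbook proof that context-free languages are closed under rational transductions. First, I would bring $G$ into Chomsky Normal Form in polynomial time (preserving the language), and let $\tr = \langle \trQ, \Sigma, \Gamma, \trinit, \trF, \trE \rangle$ be the given transducer. The new grammar $G'$ will have nonterminals of the form $(p, A, q) \in \trQ \times N \times \trQ$, with the intended semantics that $(p, A, q)$ derives exactly those $v \in \Gamma^*$ such that there exist $u \in L(A)$ and a run of $\tr$ from $p$ to $q$ on input $u$ whose output is $v$.

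For each CNF production $A \to BC$ of $G$ and each triple of transducer states $p, q, r$, I would add the production $(p, A, q) \to (p, B, r)(r, C, q)$. For each terminal production $A \to a$, I need to plug in the set of outputs the transducer can produce while consuming input $a$ from $p$ to $q$. This set $R_{p,q}^a \subseteq \Gamma^*$ is a regular language: it is captured by runs consisting of $\emp$-input transitions (which may still emit output letters), followed by one transition reading $a$, followed by more $\emp$-input transitions. An NFA $N_{p,q,a}$ over $\Gamma$ for $R_{p,q}^a$ can be built from $\tr$ in polynomial time; I then convert $N_{p,q,a}$ into an equivalent right-linear CFG and identify its start symbol with $(p, A, q)$. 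The rule $S \to \emp$ (if present) is handled analogously using $R_{p,q}^{\emp}$. Finally, a fresh start symbol $S'$ gets productions $S' \to (\trinit, S, q_f)$ for each $q_f \in \trF$.

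Correctness is a routine induction on derivation trees in both directions, using the fact that transducer runs decompose along concatenation. For the size bound, the number of triple nonterminals is $|\trQ|^2\cdot |N|$, each binary rule yields at most $|\trQ|^3$ copies, and each regular sub-grammar for $R_{p,q}^a$ is of polynomial size, so the whole construction and the resulting CFG are polynomial in $|G|+|\tr|$.

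The main obstacle I expect is that for each terminal production $A\to a$, the set of possible transducer outputs from $p$ to $q$ on input $a$ can in principle be infinite (because of $\emp$-input transitions that still emit output, possibly looping). This is resolved cleanly by not enumerating these outputs at all, but instead representing each $R_{p,q}^a$ as a small NFA and inlining its right-linear grammar — a step that keeps everything polynomial and avoids any blow-up from the transducer's $\emp$-behavior.
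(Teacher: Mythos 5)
Your proof is correct, though it takes a different route from the one the paper points to. The paper simply cites Shallit's Theorem~4.1.5 and observes that the classical Nivat-style factorization of a rational transduction as $h(g^{-1}(\cdot)\cap R)$ --- a homomorphism, an inverse homomorphism, and intersection with a regular language --- is polynomial-time constructive for CFGs, so no new construction is given. You instead carry out the direct triple construction with nonterminals $(p,A,q)$, which is equally standard and somewhat more self-contained. The genuine pitfall of this route, namely that the set $R^a_{p,q}$ of outputs on runs from $p$ to $q$ that read a single input letter $a$ can be infinite because of output-emitting $\varepsilon$-input transitions, you handle correctly: you represent $R^a_{p,q}$ by a small NFA and inline its right-linear sub-grammar instead of enumerating outputs, and you treat $R^{\varepsilon}_{p,q}$ the same way for the $S\to\varepsilon$ rule. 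Your size analysis ($|\trQ|^2|N|$ triple nonterminals, at most $|\trQ|^3$ copies per binary rule, polynomially many right-linear productions per terminal rule) is correct. The one step worth spelling out in the ``routine induction'' is that any run on $u_Bu_C$ from $p$ to $q$ can be cut at an intermediate state $r$ exactly after the letters of $u_B$ have been consumed, by assigning boundary $\varepsilon$-input transitions to the $C$-side; once that is observed, the decomposition matches the production $(p,A,q)\to(p,B,r)(r,C,q)$ precisely. In short, your construction is a valid and arguably more elementary replacement for the citation the paper relies on.
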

The time complexity of~\cref{lem:shallitCFL} follows from the proof of the theorem in the citation, due to morphism, inverse morphism and intersection with a regular language all being polynomial time decidable for context-free grammars.
    
\section{Example for~\cref{subsec:PSPACEub}}\label{app:sec:PSPACEub}
\begin{example}\label{ex:ideal-derivation}
    Consider the grammar that produces ideal representations over the nonterminal alphabet $\{S, A, B, C\} $, terminal alphabet $\{\{c, \emp\}, \{d, \emp\}, \{a,b\}^*, \{a,c\}^*\}$ and the productions
    \begin{align*} & S \to AA \mid BB\\
                   & A \to BB \mid CB \mid BC\\
                   & B \to \{a, b\}^* \mid \{d, \emp\}\\
                   & C \to \{a, c\}^* \mid \{c, \emp\}
    \end{align*}
    Assume the ideal representation being guessed is $\{c, \emp\}\{d, \emp\} \{a,b\}^* \{a,c\}^*$.
    The leftmost derivation of this ideal representation and the memory content at each step of the computation are given below side-by-side. The lefthand side depicts the rules that are taken, and the righthand side depicts the memory content during that step.
    \begin{alignat*}{2}
        & S \to \mathbf{A}A \quad \quad && S \to \mathbf{A}A \\
        & S \to \mathbf{C}BA \quad \quad && S \to \mathbf{C}BA \\
        & S \to \{c, \emp\}\mathbf{B}A \quad \quad && S \to \{c, \emp\}\mathbf{B}A \\
        & S \to \{c, \emp\}\{d, \emp\}\mathbf{A} \quad \quad && S \to \{d, \emp\}\mathbf{A} \\
        & S \to \{c, \emp\}\{d, \emp\}\{a, b\}^*\mathbf{C} \quad \quad && S \to \{a,b\}^*\mathbf{C} \\
        & S \to \{c, \emp\}\{d, \emp\}\{a, b\}^*\{a,c\}^* \quad \quad && S \to \{a,c\}^* \\
    \end{alignat*}
    As illustrated, once we generate an atom, we store it only for one step and then drop it. This allows us to generate an exponentially sized ideal representation using polynomial space. Another way to view the derivation tree and the memory content is given in~\cref{fig:ideal-derivation-tree}. 
    Here, the path that is kept in memory while guessing the atom $\{d, \emp\}$ is shown in red color.
    That is, while guessing each atom, we keep the path from the root that leads to the atom, as well as the unexplored righthand side nonterminals of each rule that was taken on the path. 
\end{example}
Since $\Gred$ is acyclic and thus the length of the derivation tree is polynomial, the derivation procedure explained in~\cref{ex:ideal-derivation} can be completed in polynomial space. 
Thus, we can guess any ideal representation generated by $\Gred$, atom by atom, in polynomial space.

\begin{figure} \centering
\begin{tikzpicture}[thick]
    \node[red] {\textbf{S}}
      child[red] {node[xshift=-6mm] {\textbf{A}}
        child[black] {node {C}
          child {node[yshift=3.5mm] {$\{c, \emp\}$}}}
        child {node {\textbf{B}}
           child {node[yshift=3.5mm] {$\mathbf{\{d, \emp\}}$}}}}
      child[red] {node[xshift=6mm] {\textbf{A}}
        child[black] {node {B}
            child {node[yshift=3.5mm] {$\{a, b\}^*$}}}
        child[black] {node {C}
            child {node[yshift=3.5mm] {$\{a, c\}^*$}}}};
  \end{tikzpicture}\caption{Derivation tree of the ideal in~\cref{ex:ideal-derivation}. The path that is kept in the memory while guessing the atom $\{d, \emp\}$ is given in red.}\label{fig:ideal-derivation-tree}
\end{figure}

While we are guessing an ideal representation generated by $\Gred$ as explained, we simultaneously keep and update a (binary encoded) pointer inside $\val(\mathbb{I})$. %that points at the leftmost position in $L(\SSI)$ the guessed ideal so far embeds in.
Assume the guessed ideal representation is $\alpha_1 \alpha_2 \cdots \alpha_m$ where each $\alpha_j$ is an atom.
While $\alpha_j$ is guessed, the pointer keeps the length of the shortest prefix of $\val(\mathbb{I})$, $\alpha_1 \cdots \alpha_{j-1}$ embeds in. 
According to $\alpha_j$, we update the pointer value.

The explicit computation is as follows: let $i$ be the length of the shortest prefix of $\val(\mathbb{I})$ that embeds $\alpha_1\cdots \alpha_{j-1}$. If $j=1$, $i$ is taken to be $0$. Then we set 
$$i'=\begin{cases} i+1 &\text{if $i=0$ or $\val(\mathbb{I})[i]$ is a single atom, }\\
i &\text{ otherwise.}\end{cases}$$ 
Then we check whether $\val(\mathbb{I})[i']$ contains $\alpha_j$. If it does, we keep the pointer value at $i'$ and move on to guessing the next atom. If it does not, we increase $i'$ by one and repeat the process. If $i' = |\val(\mathbb{I})|+1$, we declare the word rejected by $\val(\mathbb{I})$. If the production of the guessed ideal has ended, we declare the word accepted by $\val(\mathbb{I})$. 
~\cref{lemma:Lohrey} allows us to compute $\val(\mathbb{I})[i]$ in polynomial time at each call.

\begin{lemma}[~\cite{Lohrey12}]\label{lemma:Lohrey}
Given an SLP and an index $i$ of the unique word $w$ it generates, one can calculate the $i^{th}$ letter of $w$ in polynomial time. 
\end{lemma}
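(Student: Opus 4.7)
The plan is to prove this by a standard navigation of the derivation DAG of the SLP, combined with a preprocessing phase that tabulates the length of the word derived from each nonterminal. Although these lengths may be exponentially large in $|w|$, their binary encodings are of polynomial size, so all arithmetic below stays in polynomial time.

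First I would bring the SLP into a normal form where every production is either of the form $A \to a$ with $a \in \Sigma$ or $A \to BC$ with $B, C \in N$; this can be done in polynomial time and preserves $\val(\cdot)$ for the start nonterminal. Next, I would compute a table $\ell : N \to \mathbb{N}$ (storing numbers in binary) recording $\ell(A) = |\val(A)|$. By acyclicity of the SLP, the nonterminals admit a topological order $A_1, \dots, A_k$ with $A_k = S$, and along this order I set $\ell(A) = 1$ for productions $A \to a$ and $\ell(A) = \ell(B) + \ell(C)$ for productions $A \to BC$. Each update is a single binary addition of polynomially many bits, so the whole table is built in polynomial time.

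Given the index $i \in [1, \ell(S)]$ (itself of polynomial bit-length), I would then descend through the derivation DAG to locate the $i$-th letter. Maintain a current nonterminal $A$ (initially $S$) and a current local index $j$ (initially $i$), invariantly $1 \le j \le \ell(A)$ and $\val(S)[i] = \val(A)[j]$. At each step, if the production is $A \to a$, output $a$ and halt; if it is $A \to BC$, compare $j$ to $\ell(B)$: when $j \le \ell(B)$, set $A := B$; otherwise set $A := C$ and $j := j - \ell(B)$. Each step strictly decreases the position of the current nonterminal in the topological order, so the loop terminates after at most $|N|$ iterations, and each iteration performs one comparison and one subtraction on polynomially many bits. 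Total running time is therefore polynomial in $|N|$ and $\log \ell(S)$, and hence polynomial in the size of the SLP.

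The only real subtlety — and the point worth double-checking — is that the lengths $\ell(A)$ really need to be stored in binary rather than unary, since $\ell(S)$ can be as large as $2^{|N|}$; once one commits to binary representations everywhere, both the preprocessing and the traversal are straightforward. No intermediate quantity ever exceeds $\ell(S)$, so all arithmetic stays within polynomial bit-complexity, establishing the claim.
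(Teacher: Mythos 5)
Your proof is correct and is precisely the standard algorithm for random access into an SLP-compressed string: normalize to binary productions, tabulate lengths in binary in topological order, then descend the derivation DAG in at most $|N|$ steps using comparisons and subtractions on polynomially many bits. The paper does not supply its own proof — it cites this as a known fact from Lohrey's survey — and your argument matches the one given there.
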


Therefore, the inclusion $\dcl{L(G)} \subseteq I $ can be checked in $\PSPACE$.

\section{Counting ideals}\label{sec:appendix-counting-ideals}
\restatablecountingideals*
\begin{proof}
	By \#NFA, we denote the problem where we are given an NFA $\NFA$ and a
	number $n$ (in unary) and are asked how many words of length $n$ are
	contained in $L(\NFA)$. This problem is well-known to be
	$\sharpP$-complete (more precisely,
	$\spanL$-complete)~\cite{DBLP:journals/tcs/AlvarezJ93}. We show that
	counting ideals is $\sharpP$-hard by reducing \#NFA to it. Given an NFA
	$\NFA$ over an alphabet $\Sigma$, we construct in $\logspace$ an NFA
	$\NFA'$ such that $L(\NFA')=L(\NFA)\cap \Sigma^n$. Now clearly, the
	decomposition into maximal ideals of $\dcl{L(\NFA')}$ is
	$I_1\cup\cdots\cup I_m$, where each
	$I_i=\Idl(\Atomi{a_1}\cdots\Atomi{a_n})$ for some word $a_1\cdots a_n$
	of length $n$ in $L(\NFA)$. In particular, the number of words of
	length $n$ in $L(\NFA)$ is exactly the number of ideals in the ideal
	decomposition of $L(\NFA')$. This proves $\sharpP$-hardness.

	Let us now show that one can count the number of ideals in the ideal
	decomposition of $\dcl{L(\NFA)}$ in $\sharpP$. Suppose $\Sigma$ is the
	alphabet of $\NFA$. We first apply ideal \cref{lemma:reducedidealNFA}
	to compute an NFA $\NFAred$ over some $\Gamma\subseteq\atoms{\Sigma}$.
	Now our task is to compute the number of paths in $\NFAred$ that read
	\emph{maximal} ideals, i.e. ones that are not strictly included in any
	ideal. Our non-deterministic polynomial-time Turing machine
	guesses a word $w$ accepted by $\NFAred$, and thus an ideal
	$I=\Idl(w)\subseteq \dcl{L(\NFA)}$. We want to count this ideal (meaning:
	accept in this branch of the TM) if $I$ is maximal, i.e.~not strictly
	included in any other ideal of $\NFAred$. Thus, it remains to show that
	we can decide in polynomial time whether $I$ is maximal.
	
	We exploit the fact that all paths contain reduced ideals.  An
	inspection of \cref{lemma:embedding} and the proof of
	\cref{prop:weightfunc} shows that it is easy to construct, in
	polynomial time, a finite-state transducer $\tr_\Gamma$ with input
	alphabet $\Gamma$ such that the pair $(u,v)\in\Gamma^*\times\Gamma^*$
	is accepted by $\tr_\Gamma$ if and only if (i)~$u$ and $v$ are reduced
	ideal representations and (ii)~$\Idl(v)\subsetneq\Idl(u)$. Therefore,
	$I$ is maximal if and only if $w\notin \tr(L(\NFAred))$. Since one can
	construct (in $\logspace$) an NFA for $\tr(L(\NFA))$, this can be checked
	in polynomial time.
\end{proof}
\section{Deterministic finite automata}\label{sec:appendix-dfas}
For the directedness problem for NFAs, we have an $\AC^1$ upper bound and an
$\NL$ lower bound. This raises the question of whether perhaps the exact
complexity can be determined more easily for deterministic finite automata
(DFAs). We sketch briefly that this is not the case: There is a simple
deterministic logspace reduction that takes an NFA $\NFA$ and computes a DFA
$\NFA'$ such that $L(\NFA)$ is directed if and only if $L(\NFA')$ is directed.
Hence, the complexity does not depend on whether the input automaton is
deterministic.

\subparagraph{Eliminating $\varepsilon$-transitions} The reduction proceeds in
two steps. In the first step, we reduce to an NFA without
$\varepsilon$-transitions. This is easily done in an $\NL$ reduction (even
while preserving the exact language, not just directedness), but we are
interested in a deterministic logspace reduction. Given the NFA $\NFA$ over
$\Sigma$, we pick a fresh letter $\#\notin\Sigma$ and replace every
$\varepsilon$-transition $p\xrightarrow{\varepsilon} q$ with
$p\xrightarrow{\#}q$. Moreover, we add a self-loop $q\xrightarrow{\#}q$ to
every state. For resulting NFA $\NFA'$, observe that $\dcl{L(\NFA')}$ is the
set of all words in $(\Sigma\cup\{\#\})^*$ such that erasing all $\#$ yields a
word in $\dcl{L(\NFA)}$. It is easy to see that $\dcl{L(\NFA')}$ is directed if
and only if $\dcl{L(\NFA)}$ is directed. In particular, directedness of the
NFAs' languages itself is not affected.

\subparagraph{Achieving determinism} In our second step, we determinize. By the
first step, we assume that we are given an $\varepsilon$-free NFA $\NFA$.
Suppose $\NFA$ has transitions $t_1,\ldots,t_n$. Moreover, suppose transition
$t_i$ is labeled with $a_i\in\Sigma$. We obtain a new automaton $\NFA'$ from
$\NFA$ by labeling each transition $t_i$ with the regular language
$\{b_1,\ldots,b_n\}^*b_ia_i$, where $b_1,\ldots,b_n\notin\Sigma$ are $n$
pairwise distinct fresh letters. The language accepted by $\NFA'$ is defined in
the obvious way: For each transition $t$, it can read an arbitrary word from
the language that labels $t$. Let $K$ be the language of $\NFA'$ defined in this way.

First, note that for $K$, one can easily construct a DFA in $\logspace$.
Essentially, the last letter in each block from $\{b_1,\ldots,b_n\}^*$ tells it
which transition of $\NFA$ to simulate when reading a letter from $\Sigma$
afterwards.

Finally, it is easy to see that $K$ is directed if and only if $L(\NFA)$ is
directed.  We say that a word $w'\in(\Sigma\cup\{b_1,\ldots,b_n\})^*$ is a
\emph{padding} of $w\in\Sigma^*$ if $w$ is obtained from $w'$ by deleting all
occurrences of $b_1,\ldots,b_n$.  Now if $L(\NFA)$ is not directed, then there
are two words $u,v\in\Sigma^*$ such that $u,v\in L(\NFA)$ and there is no
common superword in $L(\NFA)$. But then $u,v$ must have paddings $u',v'$ in $K$
and clearly, $u'$ and $v'$ cannot have a common superword in $K$ (otherwise, we
could delete all $b_1,\ldots,b_n$ from it to obtain a common superword for
$u,v$). 

Conversely, if $L(\NFA)$ is directed, we consider two words $u',v'\in K$. They
are paddings of words $u,v\in L(\NFA)$ and so there is a common superword
$w\in\Sigma^*$, i.e.~$u\sw w$ and $v\sw w$. But $w$ also has a padding $w'$ in
$K$. Moreover, we can insert more occurrences of $b_1,\ldots,b_n$ into $w'$ to
obtain a common superword $w''$ for $u'$ and $v'$. Thus, $K$ is directed as
well.

\newoutputstream{todos}
\openoutputfile{main.todos.ctr}{todos}
\addtostream{todos}{\arabic{@todonotes@numberoftodonotes}}
\closeoutputstream{todos}
\end{document}